\DeclareSymbolFont{usualmathcal}{OMS}{cmsy}{m}{n}
\DeclareSymbolFontAlphabet{\mathcal}{usualmathcal}
\crefname{equation}{Eq.}{Eqs.}
\Crefname{equation}{Equation}{Equations}
\crefname{align}{Eq.}{Eqs.}
\Crefname{align}{Equation}{Equations}
\theoremstyle{plain}
\newtheorem{theorem}{Theorem}[subsection]
\newtheorem{lemma}[theorem]{Lemma}
\newtheorem*{remark}{Remark}
\newtheorem{proposition}[theorem]{Proposition}
\newtheorem{corollary}[theorem]{Corollary}
\newtheorem{conjecture}[theorem]{Conjecture}
\begin{document}
	
	\pagestyle{SPstyle}
	
	\begin{center}{\Large \textbf{\color{scipostdeepblue}{
					Translation Groups for arbitrary Gauge Fields in Synthetic Crystals with real hopping amplitudes\\
	}}}\end{center}
	\begin{center}\textbf{
			Marco Marciani
	}\end{center}
	
	\begin{center}
		Dipartimento di Fisica, University of Naples “Federico II”, I-80126 Naples, Italy
		\\[\baselineskip]
	\end{center}
	
	\section*{\color{scipostdeepblue}{Abstract}}
	\textbf{ \boldmath{
			The Cayley-crystals introduced in [F. R. Lux and E. Prodan, Annales Henri Poincaré 25(8), 3563 (2024)] are a class of lattices endowed with a Hamiltonian whose translation group $G$ is generic and possibly non-commutative. We show that these systems naturally realize the generalization of the so-called magnetic translation groups to arbitrary discrete gauge groups. A one-body dynamics emulates that of a particle carrying a superposition of charges, each coupled to distinct static gauge-field configuration. The possible types of gauge fields are determined by the irreducible representations of the commutator subgroup $C \subset G$, while the Wilson-loop configurations -- which need not be homogeneous -- are fixed by the embedding of $C$ in $G$. The role of other subgroups in shaping both the lattice geometry and the dynamics is analyze in depth assuming $C$ finite.
			We discuss a theorem of direct engineering relevance that, for any cyclic gauge group, yields all compatible translation groups. We then construct two-dimensional examples of Cayley-crystals equivalent to square lattices threaded by inhomogeneous magnetic fluxes.
			Importantly, Cayley-crystals can be realized with only real hopping amplitudes and in scalable geometries that can fit higher-than-3D dynamics, enabling experimental exploration and eventual exploitation in metamaterials, cQED, and other synthetic platforms.
	}}

	\vspace{\baselineskip}
	
	
	\vspace{10pt}
	\noindent\rule{\textwidth}{1pt}
	\tableofcontents
	\noindent\rule{\textwidth}{1pt}
	\vspace{10pt}
	
	\section{Introduction}
	
	A uniform magnetic field threading a crystal enlarges the effective electronic unit cell by a factor determined by the magnetic flux through the original cell in units of the flux quantum\cite{Hof76,Tho82}. 
	Such phenomenon is a consequence of a symmetry reduction. The Hamiltonian still commutes with the so-called magnetic translation operators -- which equal the translation operators up to a phase -- but they fail to form a group representation of $\mathbb{Z}^n$ (for some dimension $n$ of the crystal)\cite{Bel94}. Instead, they form a "weaker" projective representation. It turns out that these projective representations can be lifted to be a proper group representation if they can act on a larger group that covers $\mathbb{Z}^n$\cite{Ash94}. Such groups are the so-called magnetic translation groups (MTGs)\cite{Bro64,Zak64}.
	
	From a mathematical point of view, MTGs have been described as certain \emph{central extensions} of the translation group by the gauge group $U(1)$\cite{Tan02}, or $\mathbb{Z}_m$ as a discrete version.
	Such extension problems have been tackled in finite or infinite crystals by some authors\cite{Lul92,Lul92b,Flo94,Wal94,Lip94,Flo96b} resulting in an algorithmic recipe for working out all cohomologically inequivalent, but possibly isomorphic\cite{Flo96}, MTGs. Although generalizations of magnetic operators have been developed thoroughly in a number of settings, e.g. in aperiodic\cite{Kel20} or disordered\cite{Pro12} lattices, and with advanced algebraic tools\cite{Pro16}, generalizations of MTGs have been mostly focused on fundamental physics and continuous systems\cite{Jac85,Tan02,Mic21}.
	
	In this paper, we will discuss in which cases the generalization of MTGs to arbitrary discrete gauge groups and to inhomogeneous periodic field patterns is mathematically possible. Moreover, we identify a conceptually simple system called Cayley-crystals as a promising candidate for an experimental implementation of such groups, where their properties can be tested and exploited. 
	
	
	
	
	
	Cayley-crystals have been introduced in Ref.~\cite{Lux23a} and consist of one-body Hamiltonians whose translation group could be a representation of any abstract discrete group $G$. The primary reason for the introduction of these systems was to understand spectral properties of the so-called hyperbolic lattices\cite{Mac21,Mac22,Boe22,Sha24}, ruled by the non-amenable and infinite Fuchsian groups\cite{Lux23b}, through the asymptotics of their finite quotients.
	In their simplest realization (one degree of freedom per site), such Hamiltonians describe the dynamics of a quantum particle over the set of vertices of the Cayley-graph of $G$ generated by the set of hoppings. 
	
	Cayley-crystals seem to be the right object to achieve the generalization of the MTG to all other \emph{gauge translation groups} (GTGs). In fact, the group $G$ itself of the Cayley-crystal can be understood as such a GTG. In particular, the so-called \emph{commutator} subgroup $C$ of $G$ defines the discrete gauge group. Crucially, $G$ is an extension of $\mathbb{Z}^n$ but not necessarily a central one, enabling the generalization from homogeneous field patterns to inhomogeneous ones.
	The fact that a Cayley-crystal embodies a GTG has an impact on the dynamics. Any $N$-dimensional irreducible unitary representation (irrep) of $C$ identifies a charge sector of the quantum particle and each charge bears a gauge configuration (abelian and magnetic only if $N=1$) as in a gauge theory with frozen dynamics of the gauge fields\cite{Kog75}.

	We shall discuss some engineering aspects. 
	First of all, to allow easy computations and to relate to experimentally feasible systems, we shall confine ourselves to the study of finite gauge groups $C$. The associated Cayley-crystals can still be infinite but, as we shall show, have finite periodic unit cells and different types of periodic boundary conditions (PBCs) can be implemented. 
	Secondly, to make evident the analogy between Cayley-crystals and systems bearing gauge fields, we shall work with a specific arrangement of the Cayley-graph which we refer to as the Cayley-Schreier-lattice (CS-lattice). Vertices corresponding to the same commutator cosets are arranged linearly on top of a $n$-dimensional hypercubic lattice, thus forming a $(n+1)$-dimensional system where the gauge field acts locally. Moreover, by a simple (quasi-isometric) embedding, such systems can be made three-dimensional -- and thus experimentally realizable -- without giving up the gauge locality. 
	Due to their inherent scalability, CS-lattices can be engineered in synthetic systems or metamaterials where hoppings can be guided via some mesa as is done typically in cQED setups\cite{Hou12,An16,Gu17,Kol19,Len22,Zha22} and can be envisioned in waveguide arrays\cite{Chr03,Oza19} or mechanical devices\cite{Che21,Pat24}. Importantly, gauge fields can emerge even in CS-lattices with real hopping terms. This property is highly desirable, as complex phases typically add complexity in experimental setups\cite{Che23} and require fine-tuning across the lattice to impress specific gauge fluxes.
	Finally, Ref. \cite{Note1} presents a theorem producing a canonical way to construct all possible CS-lattices starting from $C$ as the desired gauge group. We shall discuss here a simpler version that assumes $C$ cyclic, which is enough to demonstrate many properties of the GTGs and the dynamics they can cover.

	The paper is organized as follows. In \cref{sec: intro CS} we briefly define Cayley-crystals associated with generic discrete groups. In \cref{sec: CS-lattice} we introduce the reader to CS-lattices. The mapping of the particle dynamics in a CS-crystal to that in a Euclidean lattice with gauge fields is done in \cref{sec: CS and gauge}. The problem of constructing GTGs from a given gauge group is tackled in \cref{sec: group int}. To deepen the analysis, we study the periodicity of the crystals and uncover the existence of a number of different unit cells, which trivialize in Euclidean crystals, in \cref{sec: unit cells}. Implementations of different types of PBCs are discussed in \cref{sec: valid and stand PBC}. By means of such unit cells we refine the spectral analysis in \cref{sec:restrH}.
	Combining all previous results, we present and discuss all CS-crystals with $C = \mathbb{Z}_2,\mathbb{Z}_3$ in \cref{sec: example}. General considerations and a comparison with the current literature can be found in \cref{sec: disc}. In \cref{sec: nbig embed} we show how to perform the 3D embedding of a CS-lattice while the other appendices concern less important details.

	\section{Cayley-crystals} \label{sec: intro CS}
	
	Cayley-crystals are lattices equipped with a Hamiltonian $\mathcal{H}$ that is invariant under a (possibly) non-abelian translation group $G$\cite{Lux23a}. They generalize standard Euclidean crystals, where $G = \mathbb{Z}^n$  (for some dimension $n$) and is abelian. 
	When \(G\) is non-commutative, its commutator subgroup\cite{Lyn01}
	\[
	C \;=\; \big\langle\,c_{g_1g_2} \;=\; \bar g_1\,\bar g_2\,g_1\,g_2
	\;\bigm|\; g_1, g_2 \in G\big\rangle
	\]
	is nontrivial. Here, each \(c_{g_1g_2}\) is called a \emph{commutator}, and we denote inversion by an overbar on group elements.
	We shall assume $C$ finite, since in experimental applications any lattice is finite, and possibly non-abelian.
	
	
	The Hamiltonian is a Hermitian operator acting on square-integrable wavefunctions of the form 
	\begin{equation} \label{psi_L2_notation}
		|\psi\rangle = \sum_{g\in G}  \psi(g) |g\rangle
	\end{equation}
	where the set $\{|g\rangle \}_{g \in G}$ forms the standard orthonormal basis of $L^2(G)$. Translations in $L^2(G)$ are described by two operators called right(left) regular representations $\mathcal{T}^{R(L)}$ where $\langle g|\mathcal{T}^R_{\lambda}|\psi\rangle \allowbreak = \psi(g \lambda)$ and $\langle g|\mathcal{T}^L_{\lambda}|\psi\rangle = \psi(\bar \lambda g)$ ($\lambda,g \in G$). They encode translations since their action is \emph{transitive}, that is, a wavefunction localized at a site $g$ can be entirely displaced at $g'$ by the action of one element of either representations, and \emph{free}, that is, there are no fixed points for non-trivial translations. In Euclidean crystals $\mathcal{T}^{R} = (\mathcal{T}^{L})^{-1}$; for a non-abelian $G$ the two representations have a slightly more complex relation, but they keep commuting. 
	
	Since $\mathcal{T}^{R(L)}_G$ is isomorphic to $G$, by definition $\mathcal{H}$ must commute with one of the two representations. Without loss of generality, we impose \[[\mathcal{H},\mathcal{T}^R]=0.\] 
	As pointed out in Ref. \cite{Lux23a} this commutation is naturally possible if the operators in $\mathcal{H}$ are left regular representation, that is, 
	\begin{equation} \label{Ham_action}
		\mathcal{H} = \sum_{\lambda \in \Lambda} \kappa_\lambda \,\mathcal{T}^L_{\lambda}	
	\end{equation}
	with $\Lambda \subseteq  G$ a subset of hoppings with associated coefficients satisfying $\kappa_\lambda = \kappa^*_{\bar \lambda}$. For physical reasons that will become clear later, we shall assume $\Lambda$ to be a generating set.
	
	\section{Cayley-Schreier-lattices} \label{sec: CS-lattice}
	
	The most natural setup that physically implements the Hamiltonian of \cref{Ham_action} has arguably the structure of a Cayley-graph. The vertices of this graph are the elements of $G$ while the edges are placed between the vertices connected by the group elements in $\Lambda$. We observe that the most natural implementation of $\mathcal{T}^{L}$ is through \emph{real} permutation matrices that describe the vertices connections. Then, if the $\kappa_\lambda$ are real, so is the Hamiltonian.
	
	Cayley-graphs can be drawn in many ways and we seek those that emphasize the structure of the commutator subgroup as an inherently local feature. To this end, we shall add a prescription about the placement of the vertices, which is the content of this section.

	\subsection{A convenient Cartesian coordinates system}
	
	\begin{figure*} [hbt!]
		\centering
		\includegraphics[width=.95\linewidth]{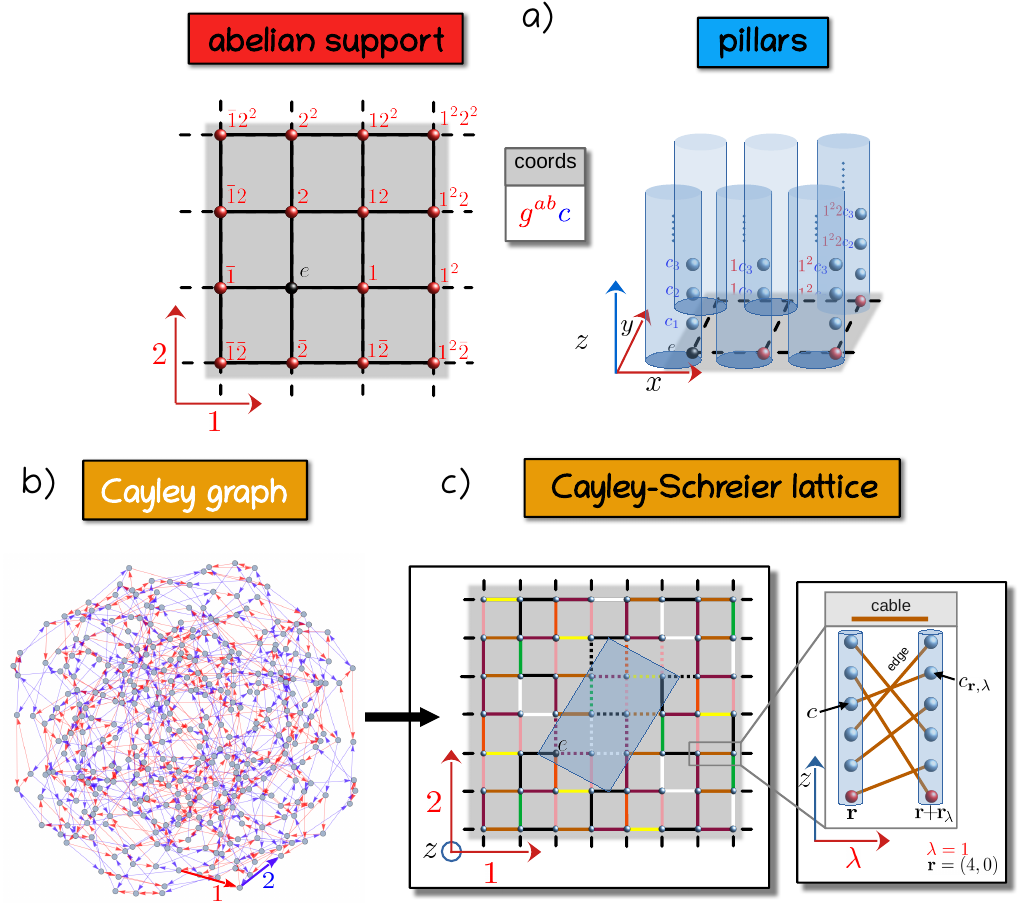} 
		\caption{(a) Cayley-Schreier-lattice structure. On the left, abelian support $T^{ab}_{n=2}$. On the right, side view of the pillars on the abelian support. We shall often short the generators notation with $x_i \rightarrow i$. (b) Example of Cayley-graph for the alternating group $A_6$ (of order $360$) to be compared with CS-lattice type of graph. (c) Sketch of a CS-lattice. The periodic standard unit CS-cell is shaded in blue, the cables pertaining to it have been depicted as dashed lines. On the side, structure of a cable. The sketch is made to highlights a general case of tilted cells that might appear with $n>2$ -- here $n=2$ and this specific pattern violates Corollary~\ref{straight_ucell}, realistic patterns are shown in \cref{sec: example}) -- and boundary cables are not shown. }\label{fig:F2}
	\end{figure*}

	All elements of $G$ can be (coset-)decomposed in a unique way as
	\begin{equation} \label{f_decomp}
		g = g^{ab}\,c\\
	\end{equation}
	where $c\in C$ and $g^{ab}$ is in a transversal set in $G$ in bijection with $G^{ab} = G/C$. $G^{ab}$ is called the \emph{abelianization} of $G$ and, by the fundamental theorem of abelian groups, is isomorphic to $\mathbb{Z}^{n} \times X_{\tilde n}$, with $X_{\tilde n}$ a product of $\tilde n$ finite cyclic groups\cite{Lyn01}. For simplicity we assume $\tilde n = 0$, the other cases are briefly commented in \cref{sec: group int} and do not bring conceptual novelty. 
	
	To consistently fix the decomposition \cref{f_decomp}, we choose the so-called Schreier transversal $T^{ab}_n $\cite{Ste10} so that $g^{ab}= g_{\mathbf r}= \prod_i^n x^{r_i}_i = x^{r_1}_1 x^{r_2}_2\dots x^{r_n}_n$ for some powers $r_i={\mathbf r}^{(i)}$ of the minimal generators $x_i$ of $G$ -- it is natural to choose $x_i \in \Lambda$ but not necessary. The transversal is not a group since a product $g_{\mathbf r_1} g_{\mathbf r_2} = g_{\mathbf r_1 +\mathbf r_2}\, h(\mathbf r_1,\mathbf r_2)$ might not be in $T^{ab}_n$. In particular, the so-called 2-cocycles $h(\mathbf r_1,\mathbf r_2)\in C$ can be quite an ugly product of commutators.
	
	Thus, from a set-theoretic point of view, $G$ is equivalent to the Cartesian product $C \times T^{ab}_n$ and \cref{f_decomp} produces a satisfactory coordinate system for the group elements. Moreover, it suggests a natural embedding of the Cayley-graph of $G$ in $\mathbb{R}^{n+1}$, as we describe in the following.
	
	First, we may arrange the elements of $T^{ab}_n\subset G$ according to their label $\mathbf{r}$ to form a translation-invariant hypercubic $n$-dimensional lattice, which we shall call \emph{abelian support}. Then, since $C$ is finite, its elements $c_i$ can be ordered and arranged on line above each element of $T^{ab}_n$ in the additional dimension, to form what we shall call \emph{pillars} (see an example with $n=2$ in \cref{fig:F2}a). In \cref{sec: nbig embed}, we discuss that it is always possible to embed such a support into a two-dimensional (2D) plane to have an overall 3D implementation. 	
	Finally, we equip the lattice with edges that connect elements that differ by elements in $\Lambda$. We assumed the latter to be a generating set of $G$ otherwise the lattice splits into disconnected components. Thus, elements pertaining to different pillars are connected. Connection within the pillars are possible if $\Lambda \cup C$ is nontrivial but, for graphical simplicity, we shall assume in the figures and examples that this is not the case. 
	We observe that connectivity depends on whether $s\in \Lambda$ is added to the right or left of the starting element. We name such graphs Cayley-Schreier (CS)-lattices (see \cref{fig:F2}b) but, in the following, we shall consider only lattices obtained from left multiplication since the other type describes isomorphic graphs.
	
	This grouping of the cosets of $C$ into pillars that are local on the abelian support bears a resemblance to Schreier coset-graphs\cite{Sch27}, where cosets of the Cayley-graph are collapsed to single points and the edges are defined by generators in $\Lambda$. Whence the chosen name for the lattice.
	To make this parallelism even stronger, we may bunch all edges flowing from one pillar to another into a single bundle of lines that we shall call \emph{cable} (see the zoom inset in \cref{fig:F2}c). Using cables instead of edges and imagining to collapse pillars into a single artificial atoms with as many levels as the pillar elements, we obtain graphically almost a Schreier coset-graphs. However, the atoms will be connected not simply via the generators but via elements of a subgroup of the symmetric group $S_n$, which are matrices that permutes the levels and represent the commutator group. The mathematical details are given in the next paragraphs.

	\subsection{Links and cables on the CS-lattices} \label{sec: links}

	We define the \emph{link} in the direction $\lambda \in \Lambda$ starting from a pillar at point $\mathbf{r}$ of the abelian support to be the quantity
	\begin{equation} \label{link_main}
		l_{\mathbf{r},\lambda} = \overline {g_{\mathbf r+\mathbf{r}_\lambda}} \, \lambda \,g_{\mathbf r} \in C
	\end{equation}
	
	In \cref{prop: links,coro: links} we give explicit expressions of links in terms of products of elementary elements of the commutator group. These formulas are handy since in \cref{sec: group int} we will show that presentations of infinite CS-lattices without boundary conditions have relators expressed only in terms such elementary constituents.
	
	We notice that $l_{\mathbf{r}_2,g_{\mathbf{r}_1}}$ coincides with the 2-cocycles $h(\mathbf{r}_1,\mathbf{r}_2)$; therefore, they are important geometric quantities. In particular they determine the \emph{connection field} of the lattice and describe in a discrete context how to "parallel" transport the elements of a pillar at a certain position under left multiplication by a lattice generator $\lambda$, via the relation 
	\begin{equation} \label{cables}
		c_{\mathbf{r},\lambda} (c) = 	l_{\mathbf{r},\lambda} \,c
	\end{equation} 
	where $c$ is a pillar element at position $\mathbf{r }$ and $c_{\mathbf{r},\lambda}(c)$ a pillar element at position $\mathbf{r + r}_\lambda$. 
	
	The function $c_{\mathbf{r},\lambda}$ describes the cables introduced in the above paragraph. Since links are $C$-elements, at most $|C|$ different types of cables appear in a CS-lattice.
	
	We shall see in \cref{sec: CS and gauge} that links and cables are also relevant analytical objects, as they appear in the representations of $G$ induced by $C$ when decomposing the Hamiltonian along the different gauge sectors.

	\subsection{Commutators and Wilson-loops}
	
	\begin{figure} [t!]
		\centering
		\includegraphics[width=1\linewidth]{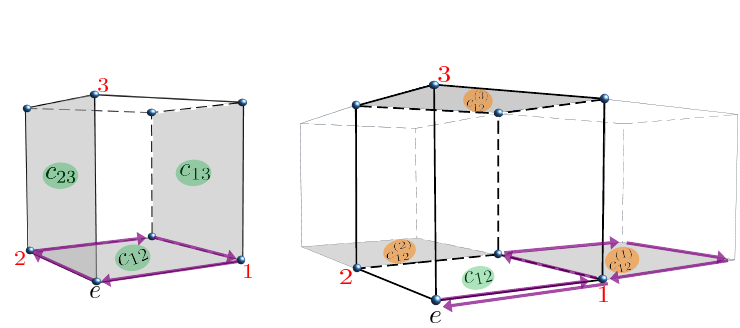} 
		\caption{ (left) Cubic cell in a 3D abelian support located at the origin. The three \emph{parent} commutators are marked on their corresponding faces. The loop associated to $c_{12}$ is shown with purple arrows that implement, from right to left, the sequence $\bar 1 \bar 212$. (right) Cubic cells adjacent to the origin. \emph{Adjacent} commutators are marked on their corresponding faces (only those relative to $c_{12}$ are shown). The loop sequence associated to $c_{12}^{(1)}$ is shown.}\label{fig:theo}
	\end{figure}

	Hereafter, we denote conjugated $C$-elements as $c^{(g)} = \bar g \,c \,g \;(g\in G)$ and simplify the notation for the generators with $x_i \rightarrow i$. 
	
	We shall refer to elements $c_{ij}$ as \emph{parent} commutators and to elements $c_{ij}^{(g_{\mathbf r})}$
	as \emph{adjacent} commutators. Graphically, they correspond to elements of the pillar above the identity element. Clearly, they can be accessed starting from the identity element and following one by one the links, respectively, of the loop at the origin around face `ij' and of the stringed loop displaced at ${\mathbf r}$ around the same face, as shown in \cref{fig:theo}.
	
	It is possible to see, using \cref{link_main} with $\lambda = c_{ij} \in C$, that $c_{ij}^{(g_{\mathbf r})} = l_{\mathbf{r},c_{ij}}$. By means of \cref{prop: links}, the r.h.s. of this equation can be decomposed as the discrete Wilson-loop around face `ij' at $\mathbf r$, with the strings connecting $e$ to $g_{\mathbf r}$ \emph{omitted} -- hereafter, we denote by Wilson-loop the untraced ordered product of link around a closed line. Indeed, from \cref{triv_links} one readily verifies that those links vanish.
	
	We conclude that the commutator $c_{ij}^{(g_{\mathbf r})}$ can be directly interpreted with the C-valued Wilson-loop around face (ij) at $\mathbf r$ without strings attached.
	
	\section{CS-crystals as Gauge Translation Groups} \label{sec: CS and gauge}
	
	In this section, we shall uncover how any Cayley-crystal can be seen to host the one-body dynamics associated to a GTG. A CS-lattice equipped with a translation-invariant Hamiltonian will be referred to as a CS-crystal.
	
	\subsection{Schr\"odinger equation on the $C$-invariant subspaces} \label{sec: gauge Ham}
	
	Recall the definition of induced representations (see \cref{app: ind_rep}). Any regular representation of a group can be induced by the regular representation of a subgroup. In particular, with $C$ such subgroup, it holds
	\begin{align} \label{eq: first hand2}
		\mathcal{T}^L = \mathrm{Ind}^G_C(\mathcal{T}^{L}_C) \overset{def}{=} \bigoplus_{\xi} \mathcal{T}_{\xi}^{\oplus d_\xi}
	\end{align}
	where $\mathcal{T}^{L}_C$ is the left regular representation of $C$, which we decomposed into a sum of representations $\mathcal{T}_{\xi} = \mathrm{Ind}^G_C(\sigma_\xi)$ of $G$; $\sigma_\xi$ is a $d_\xi$-dimensional irrep of $C$ labeled by $\xi$, which appears $d_\xi$ times, according to Peter-Weyl theorem, up to unitary equivalence.
	
	Given a basis $\{ e^{(\alpha)}_{\xi} \in V_\xi \,|\, \alpha = 1,\dots,d_\xi\}$ of the vector space $V_\xi$ on which $\sigma_\xi$ acts, the quantities
	\begin{equation} \label{matrix_coeff}
		\sigma^{(\mu\nu)}_{\xi,c} = \langle e^{(\mu)}_{\xi} \,,\, \sigma_{\xi,c}  (e^{(\nu)}_{\xi}) \rangle 
	\end{equation}
	are the matrix coefficients of the unitary representation $\sigma_\xi$. Then, by Peter-Weyl theorem, the $C$-valued functions $f_{\xi\mu\nu}(c) = \{ \sqrt{d_{\xi}}\, \sigma^{(\mu\nu)}_{\xi,\bar c}\}_{\mu\nu}$ form an orthonormal basis of the $\xi$-invariant space in $L^2(C)$. In particular, the spans of the subsets $B_{\xi\mu} = \{f_{\xi\mu\nu} \}_\nu$ equal $d_{\xi}$-dimensional invariant subspaces that are orthogonal to each other.\cite{Note4}. 
	
	Let us denote with $\vec {\mathcal{B}}_{\xi\mu}(\mathbf r)$ a generic function in the invariant subspace associated to the $\mu$-th copy of $\mathcal{T}_{\xi}$, here the vector elements are expressed in the basis $B_{\xi\mu}$. 
	Restricting to one constituent in the decomposition of \cref{eq: first hand2} and applying the definition of induced representations (given in \cref{app: ind_rep}) we get the following Schr\"odinger equation,
	\begin{flalign} \label{gauge_H}
		i\frac{\partial}{\partial t} \vec{\mathcal{B}}_{\xi\mu}(\mathbf{r})=& \sum_{\lambda}  \kappa_\lambda \;\sigma_{\xi,l_{\mathbf{r -r}_\lambda,\lambda}} \cdot \vec {\mathcal{B}}_{\xi\mu}(\mathbf{r-r}_\lambda) \nonumber\\
		\overset{def}{=}&\sum_{\lambda}  \kappa_\lambda \;\mathcal{T}^{\pi}_{\xi\mu,\mathbf{r}_\lambda} (\vec{\mathcal{B}}_{\xi\mu}) (\mathbf{r}),
	\end{flalign}
	where $\xi$ labels the $C$-irreps equivalence classes, $\mu = 1,\dots,d_\xi$, $\sigma_{\xi,\lambda}$ is the unitary matrix of \cref{matrix_coeff} and we defined the maps $\mathcal{T}^{\pi}_{\xi\mu}$, which will be discussed in the next paragraph.
	
	
	The above equation states that every hopping is accompanied by a unitary rotation and is completely analogous to that of a particle bearing a static gauge field. 
	Therefore, a generic wavefunction that has amplitude over multiple irreps subspaces will behave as a particle having multiple charges, each of which bears a Wilson-loop configuration $\{ \sigma_{\xi,c_{ij}^{g_\mathbf{r}}}\}_{i,j,\mathbf{r}}$.
	Notice that charges associated to copies of the same representation bear the same configuration.
	
	\subsection{Generalized projective representations}
	
	Each of the Hamiltonians in \cref{gauge_H} commutes with  \emph{gauge} translation operators $\mathcal{T}^{\pi}_{\xi\mu} : \mathbb{Z}^n \rightarrow \mathcal{U}_{\xi\mu}$, where $\mathcal{U}_{\xi\mu}$ denotes the unitaries on the functions $\vec {\mathcal{B}}_{\xi\mu}(\mathbf r)$. Such operators satisfy (we suppress the irreps labels)
	\begin{equation} \label{proj_T}
		\mathcal{T}^{\pi}_{\mathbf{r}_1} \mathcal{T}^{\pi}_{\mathbf{r}_2} =
		\Sigma(\mathbf{r}_1,\mathbf{r}_2) \,  \mathcal{T}^{\pi}_{\mathbf{r}_1 + \mathbf{r}_2},
	\end{equation}
	with 
	\[\Sigma(\mathbf{r}_1,\mathbf{r}_2) = \oplus_\mathbf{r} \; \sigma_{\xi,\;l_{-\mathbf{r}_1-\mathbf{r}_2,g_{\mathbf{r}_2}}^{(g_{\mathbf{r}})}},\]
	which is a generalized version of a \emph{projective} representation with $d_\xi \geq 1$.
	
	We stress that these representations admit a lift to the linear representation $\mathcal{T}^L$ isomorphic to $G$\cite{Note5}. Therefore, $G$ can be seen as a \emph{gauge} translation group related to each of the Hamiltonians in \cref{gauge_H}. We remark that, given a Wilson-loop configuration, the GTG might not be unique, the obvious example given by the trivial irrep of $C$ for which all integrals of $C$ are GTGs. 
	Importantly, the GTG is realized in a CS-crystal without any complex hopping term, that is, with real $\kappa_\lambda$s. The complexity of the hoppings emerges strictly in the dynamics subspaces. Moreover, complex $\kappa_\lambda$s would not modify the physics in any way. They imply non-vanishing abelian Peierls phases attached to the cables but, since all fluxes are vanishing, they can be gauged away.
	
	One-dimensional representations lead to discrete-flux magnetic Hamiltonian where hoppings are accompanied by phases. In particular, the trivial representation describes a trivially gauge-less $nD$ hypercubic crystal, and the corresponding invariant space is made up of wavefunctions that are constant within each pillar. Such states have been dubbed \emph{abelian} in the recent literature\cite{Mac21,Sha24}, since they are not affected by the non-commutativity of translations group and, as such, behave as if they were on a Euclidean crystal. By contrast, all other representations host \emph{nonabelian} states that, upon left translations, get mixed among each other by a unitary matrix within the irrep-invariant subspace (cf. more details are found in \cref{app: bloch}).

	\section{Construction of GTGs from a desired gauge group} \label{sec: group int}
	
	\subsection{Groups integrability} \label{sec: group int2}
	While any group has a commutator subgroup, the converse is not true. Since in the math literature\cite{Neu56} the commutator group is also named the \emph{derived} group, this issue concerns the existence of \emph{integrals} of a given group $C$, that is, the existence of a group whose derived group is $C$. Rephrased in the graphical context, the pillar of a nontrivial lattice (that is, comprising more than just one pillar) cannot be made to be any group $C$. Rephrased in the physical context, not all gauge groups admit a GTG.
	Implicit necessary and sufficient conditions for integrability have yet to be found, but the issue is considered hard to solve, involving non-abelian group cohomology theory\cite{Mac49}. Nevertheless, the integrability or non-integrability of many classes of groups have been assessed in recent years, and the reader is referred to these works\cite{Fil17,Ara19,Ara20,Bly23} for a much richer discussion on the topic. For instance, abelian and alternating groups are integrable while almost all symmetric groups are not. The smallest non-integrable group is thus $S_3$, the symmetric group of order $6$.
	Among the integrable ones, the obvious integrable groups are the perfect ones, defined to be their own derivative, analogs of the exponential functions. \emph{Explicit} necessary and sufficient conditions for integrability, requiring specific group presentations, have recently been found\cite{DAr19,Note1}. In particular, the author of Ref.~\cite{Note1} was able to work out explicit presentations of a \emph{universal} class of integrals that are defined as $n$-dimensional \emph{maximal integrals}. They have the property that their abelian support is $T^{ab}_n$ for some $n$. Importantly, \emph{all} other integrals of $C$, possibly with $\tilde n \neq 0$, can be obtained from them by a quotienting procedure (see \cref{prop: G_existence} adapted from \cite{Note1} and \cref{sec: valid and stand PBC} for the procedure).
	
	
	\subsection{Integrals of cyclic groups} \label{sec: latt_constr}

	In \cref{sec: example} we shall discuss the maximal integrals of the smallest cyclic groups. We will not treat non-abelian cases since they require the introduction many technical details. Some non-abelian cases are discussed, without physical implications, in Ref.~\cite{Note1}.
	
	A theorem from the same reference, restricted to the case of $C$ cyclic, claims that \emph{all} $n$-dimensional maximal integrals of $C$ can be presented as 
	\begin{flalign} \label{Hin the theo_mod}
		G&= \langle \{1,2,\dots,n\} \,\vert \,\{c_{12}^m\} \cup R_p  \cup R_a  \rangle \nonumber\\
		R_p &=  \{ \, \bar c_{ij} \;   c^{p_{ij}}_{12}  : 1\leq i<j \leq n\; \} \nonumber\\
		R_a &=  \{ \,  \bar c_{12}^{(k)} \; c_{12}^{\,a_k} : 1 \leq k \leq n\} \quad\quad (n\geq 2),
	\end{flalign}
	where $p_{ij} \in \mathbb{Z}_m$ and $a_k\in\mathbb{Z}_m^{\times}$ -- here $\mathbb{Z}_m^{\times}$ is the multiplicative group of integers modulo $m$, which is not isomorphic to $\mathbb{Z}_m$ unless $m$ is a prime number greater than $2$ -- satisfy 
	\begin{equation} \label{G_word_constr_mod}
		p_{ij}(a_k-1) =  p_{kj}(a_i-1) - p_{ki}(a_j-1) \quad (\!\!\!\!\!\mod m)
	\end{equation}
	for $1\leq k<i<j \leq n,\; p_{12}= 1$.
	
	
	In these presentations, the way $C$ embeds in $G$ can be understood as follows. The cyclic group is generated by a single element, embedded as $c_{12}$. The first relation implies the cyclic property. 
	$R_p$ implies that all other parent commutators are a specific multiple of that generator, see \cref{fig:theo}(left).
	$R_a$ imposes that the nearest-adjacent commutators $c_{12}^{(k)}$ are also some multiple of the generator, see \cref{fig:theo}(right). No further constraints are needed, e.g. on the second-adjacent commutators, since the required group structure of $G$ already fixes them.
	
	The geometric content of the presentation \cref{Hin the theo_mod} is quite clear. The first relator sets the size of the pillar or, seen as a GTG, it imposes that in each irrep sector the magnetic fluxes through the face `12' on the abelian support is a power of the flux $\Phi_0/m$ with $\Phi_0$ the flux quantum. The other sets $R_{p/a}$ act on the disposable degrees of freedom near the central hypercube by fixing their fluxes as a power of that of face `12' at the origin.
	
	We observe that if $n=2$ then $R_p$ is trivial, \cref{G_word_constr_mod} is always satisfied and many properties of the integrals can be easily characterized (see \cref{sec: 2integrals}).


	\section{Unit cells of CS-lattices} \label{sec: unit cells}
	
	The analysis of the subgroups of $G$ is crucial in order to implement PBCs in CS-lattices (see the following paragraphs) and to perform a more advanced analysis of the dynamics of \cref{gauge_H} (see \cref{sec: full and partial}).
	
	Any subgroup of $G$ defines a lattice tesselation via a coset-decomposition similar to \cref{f_decomp}. However, tiles defined by a generic subgroup lack any specific property. Here we show that there are always two subgroups, $P$ and ${\widetilde Z}$ defined below, which deserve special attention.  
	The definitions and properties of all cells introduced here are summarized in \cref{tab: cells}.
	
	\subsection{Standard periodicity} \label{stand perio}
	
	CS-lattices defined from a maximal integral of a finite group $C$ happen to be periodic. Thus, there are always $n$ linear independent vectors $\mathbf{m} \in \mathbb{Z}^n$, such that for all $\lambda \in \Lambda$ and $\mathbf{r} \in \mathbb{Z}^n$, $l_{\mathbf{r},\lambda} = l_{\mathbf{r}+\mathbf{m},\lambda}$, forming a supercell whose pattern of cables tessellates the lattice. We refer to these supercells as \emph{standard CS-cells} and to those that are minimal (not containing other ones) as \emph{unit cells}. A sketchy example of a standard unit CS-cell can be found in \cref{fig:F2}b, represented as the blue area inside the lattice. The cell is meant to comprise all the pillars and cables within the area and the \emph{outgoing} ones that, since the Cayley-graph is directed, can be fixed to be the ones starting inside and having a dangling end either to the right or to the top.
	We prove in \cref{theo: lr period} a necessary and sufficient condition for $\mathbf{m}$ to set the periodicity. Such vectors are found to be independent from the set $\Lambda$ and determine an abelian subgroup $P$ whose elements, acting by left multiplication on $G$, shift the lattice by a finite number of CS-cells. In the same theorem we show also that $P$ is nontrivial. This feature is not \emph{a priori} obvious since there could have been instances of aperiodic, e.g. quasicrystals-like, maximal integrals. If $n=2$ we show in Corollary~\ref{straight_ucell} that standard CS-cells are necessarily rectangular (without any tilt). This feature may not be true in general when $n>2$, see the remark after the corollary.
	
	We notice that, since $P$ has a finite index in $G$ (i.e., finite transversal), such maximal integrals and thus the GTGs we consider here are virtually abelian groups, that is, groups containing an abelian subgroup of finite index.

	\subsection{Hidden periodicity} \label{sec: vPBC_primit}

	In \cref{sec: shape} we show that the center group (made of elements that commute with all elements of $G$) can be factorized as ${\widetilde Z}Z_{CG}$ where $Z_{CG} = Z \cap C$ and ${\widetilde Z}\cap Z_{CG}$ is trivial. ${\widetilde Z}$ is abelian, isomorphic to $\mathbb{Z}^n$ but is not uniquely defined if $Z_{CG}$ is nontrivial  (cf.  \cref{sec: example} for instances of non-uniqueness), therefore we shall add a label to this group. Therefore, we can use $Z$ to tessellate the lattice and define what we refer to as the \emph{primitive} unit CS-cell. The groups ${\widetilde Z}_i$ shift the cell along the abelian support whereas $Z_{CG}$ (again abelian) shifts vertically along the pillars (we expand on this point in \cref{sec: tessel}). It is clear that primitive CS-cells do not comprise entire pillars but only a reduced part of them if $Z_{CG}$ is nontrivial. Moreover, primitive unit CS-cells do not repeat over the lattice forming a periodic pattern, that is, for a generic lattice $G$, it is possible to find a $z \in Z$, a position $\mathbf{r}$ and a hopping $\lambda$ such that $l_{\mathbf{r} + \mathbf{r}_z,\lambda} \neq l_{\mathbf{r},\lambda}$.
	
	Despite these unfortunate features, the primitive CS-cell constitutes the fundamental piece of the lattice because of two key properties: the groups $G/\widetilde{Z}_i$ (which is in correspondence to a \emph{primitive CS-supercell}) are the minimal integral of $C$, and as such it allows one to identify all possible valid periodic boundary conditions on $G$; when a Hamiltonian is given and the embedding of $Z$ in $G$ is known, it is possible to obtain a full eigenspace decomposition of the dynamics. We shall see these features, respectively, in the next paragraph and in \cref{sec: full and partial}.
	
	\section{Periodic boundary conditions} \label{sec: valid and stand PBC}
	Periodic boundary conditions (PBCs) are implemented in Cayley-crystals by means of normal subgroups $N\lhd G$. The elements of $N$ identify which sites of the lattice have to be collapsed to the same point and the resulting lattice is again a group, namely $G/N$. Importantly, since we want to keep the commutator group of $G$, the additional requirement is necessary that $N \cap C$ be trivial. We shall refer to the groups $N$ with these properties as \emph{valid PBCs}. 
	
	Physically, the implementation of PBCs on an infinite CS-lattice corresponding to a maximal integral proceeds in two steps:
	\begin{itemize}
		\item[i)] (\emph{cutting step}) A convenient representative of $G/N$ in $G$ is chosen to constitute the PBC-lattice vertices. We suppose such representative to be connected, convex and made of contiguous pillars (e.g. some multiple of the blue area in the sketchy case shown in \cref{fig:F2}b). Eventually, if $N^{ab} \cong \mathbb{Z}^n$ then the abelian support of the PBC-lattice, $(G/N)^{ab}$, is a finite subset of $G^{ab}$ and isomorphic to a product of $n$ cyclic groups.
		
		\item[ii)] (\emph{glueing step}) The cables that are in the interior of the chosen representative are left unchanged but those across the boundaries -- all pairs of ingoing and outgoing cables at opposite faces of the representative -- should be pairwise \emph{glued} to respect the new group structure. These connections may produce either cables that agree with the pattern of the pristine lattice cables or disagree, creating \emph{twisted} connections. We show in \cref{lem: cables_PBC} how the cable formula \cref{cables} may get modified by a twist factor at the boundary. We observe that the addition of boundary cables in a physical implementation breaks the nice bulk hypercubic structure of the lattice. However, since their number is small in comparison with the number of the bulk ones, it should be possible to arrange them without much effort.
	\end{itemize}
	
	\subsection{Valid PBC} \label{sec: group PBC}
	
	Among all valid PBCs, a class of groups stands out, which have the property that the cut-and-glue procedure does not cut across the periodic pattern of their cabling nor it creates twists at the boundaries. We name these groups \emph{standard PBCs} and, in some way, they induce the most close generalization of the PBCs of Euclidean lattices. As shown in \cref{stand_ucell}, they are exactly the subgroups of $Q= P\cap Z$ (we refer to the cells associated to the group $Q$ as \emph{standard CS-supercell}). Some properties of this group are described in \cref{prop: p_underground_life2,coro_lp}.
	All other PBCs need twists at the boundaries. Standard and twisted PBCs are all determined by the primitive unit CS-cell in the following way.  In \cref{lemma: validPBC2} we show that valid PBCs need to be central in $G$, that is, $N\leq {\widetilde Z}_i$ for all $i$'s. Thus, all lattices with valid PBCs must be tessellated by the primitive CS-supercells defined by the transversal of some ${\widetilde Z}_i$, as mentioned in \cref{sec: vPBC_primit}. The shape of such supercell is discussed in \cref{app: trasl}. We notice that the twisted PBCs related to the $i$-th CS-supercell type are in one-to-one connection with the elements of $({\widetilde Z}_i \backslash Q)$.
	
	\subsection{Non-Cayley PBC}
	The group $P$ defined in \cref{stand perio} is not necessarily normal in $G$. In such cases, the transversal of $X$ in $G$, with $X \leq P$, still identifies a periodic lattice through the same cut-and-glue procedure described above (with no twist at the boundaries), but it will not correspond itself to a group. As a consequence the resulting lattice is not a Cayley one neither it can be a GTG. We show in an example case in \cref{app: PBCwithP} the impact of such PBCs on the spectral property of the Hamiltonia. The set of allowed quantum numbers loses its homomorphic correspondence with the irreps of the translation operators, exactly as it happens in magnetic crystals threaded by a fractional flux quantum\cite{Osh20}, where all momenta get a finite shift.

	\begin{table*}[hbt!]
		\begin{center}
			\def\arraystretch{1.5}
			\begin{tabular}{%
					!{\vrule width 1pt}   
					c | c||c|c|c 
					!{\vrule width 1pt}   
				}
				\noalign{\hrule height 1pt}
				Crystal & Cell type            & Tessel.\ group & Periodic     & PBC           \\ 
				\noalign{\hrule height .5pt}\hline
				
				Euclidean
				& standard             & $\mathbb{Z}^n$ & \ding{51}  & valid + standard \\ 
				\hline
				
				\multirow{4}{*}{\centering Cayley-Schreier}
				& standard (unit)      & $P$            & \ding{51}  & non-Cayley      \\ \cline{2-5}
				& standard (supercell) & $Q$            & \ding{51}  & valid + standard \\ \cline{2-5}
				& primitive (unit)     & $Z$            & \ding{55}  &  non-valid   \\ \cline{2-5}
				& primitive (supercells)& ${\widetilde Z}_i$      & \ding{55}  & valid        \\ 
				
				\hline
				\noalign{\hrule height 1pt}
			\end{tabular}
			\caption{Relevant CS-cells and their properties. The Euclidean lattice case is considered on top as a reference. (first column, after the double vertical line) Defining {\it tessellation} group. (second column) Has the cell pattern a Euclidean-like periodicity? (last column) Generic attributes of the PBCs lattice.}
			\label{tab: cells}
		\end{center}
	\end{table*}

	\section{Bloch decompositions of CS-crystals} \label{sec:restrH}
	
	In this section we shall discuss several \emph{Bloch} decompositions of the Hamiltonian in \cref{Ham_action} that are more refined than that in \cref{sec: gauge Ham}.	
	
	\subsection{Full and partial Bloch decompositions} \label{sec: full and partial}
	
	The Hamiltonian in \cref{Ham_action} can be fully block decomposed as\cite{Ful91,Sha24}
	\begin{eqnarray} \label{full_dec_H}
		\mathcal{H} &=& \bigoplus_{d >0,\xi_d}\ \,\mathcal{H}_{\xi_d}^{\oplus d}, \nonumber \\
		{\mathcal H}_{\xi_{d}} &=& \sum_{\lambda \in \Lambda} \kappa_\lambda \mathcal{T}_{\xi_{d},\lambda}		
	\end{eqnarray} where $\xi_d$ labels the $d$-dimensional $G$-irreps equivalence classes; since $\mathcal{T}^L_{\xi_{d}}$ is a regular irrep, each block ${\mathcal H}_{\xi_{d}}$ appears $d$ times in the direct sum\cite{Note7}. 
	
	Unfortunately, the full group structure of $G$ is not known \emph{a priori} when it is built as an integral of $C$ (see \cref{sec: group int}). As a consequence, the above decomposition cannot be carried out. However, the knowledge of the groups associated to the periodic cells (see \cref{tab: cells}) allows us to make further progress from \cref{gauge_H} and obtain a partial Bloch decomposition. 
	
	Let us proceed as with \cref{eq: first hand2} but with the subgroup given by $YC$ with $Y$ a central subgroup with trivial intersection with $C$, such as $Q$ or any $\widetilde{Z}_i$ (notice $\widetilde{Z}_iC$ does not depend on $i$). The Hamiltonian is decomposed accordingly as (see \cref{app: ind_two_rep} for the derivation)
	
	\begin{flalign} \label{partial_dec_H}
		\mathcal{H} =& \;\bigoplus_{\boldsymbol{\eta},\xi} \; \sum_{\lambda \in \Lambda} \kappa_\lambda\; \chi^G_{\boldsymbol \eta,\lambda}\; \widetilde{\mathcal{T}}^{\,\oplus d_\xi}_{{\boldsymbol \eta}\,\xi, \lambda}, \nonumber \\
		\widetilde{\mathcal{T}}_{{\boldsymbol \eta}\,\xi} =& \;\bar \chi^G_{\boldsymbol \eta}\;\, \mathrm{Ind}^G_{{Y} C}\left( \chi^{Y}_{\boldsymbol \eta} \sigma_{\xi}\right),
	\end{flalign}
	where $\chi^{Y}_{\boldsymbol \eta}$ is a (1D) representation of $Y$ whereas  $\chi^{G}_{\boldsymbol \eta}$ is a multiplicative representation of $G$ into $L^2(G)$ acting as $ \chi^{G}_{\boldsymbol \eta,g} =  e^{-i \boldsymbol{\eta} \cdot \mathbf{r}_{g}}$ and coinciding with $\chi^{Y}_{\boldsymbol \eta}$ on $Y$; ${\boldsymbol \eta}$ parametrizes the $n$-dimensional toric $BZ$ defined by the group $Y$ (we shall denote it by $pBZ$ if $Y = \widetilde{Z}_i$); each $\widetilde{\mathcal{T}}_{{\boldsymbol \eta}\,\xi}^L$ acts on a Hilbert space of finite dimension $d_\xi|G/(YC)|$ with $\xi$ labelling the irreps $\sigma_\xi$ of $C$ as in \cref{sec: CS and gauge}.
	
	This decomposition highlights its \emph{Bloch} character. In fact, the plane-wave sector, which describe the transport among cells, is made explicit and is controlled by the irreps ${\boldsymbol \eta}$ of $Y$. We observe that such a decomposition is saturated when $Y=\widetilde{Z}_i$.
	
	The span of Bloch wavefunctions $\{\mathcal{B}_{\boldsymbol{\eta}\xi\mu}^{(n\nu)}: 1 \leq n \leq |G/(YC)|, 1 \leq \nu \leq d_\xi\}$ equals the space on which $\widetilde{\mathcal{T}}_{{\boldsymbol \eta}\,\xi}$ acts. Their specific expression, together with a Bloch orthonormal basis, is given in \cref{app: bloch}.
	The Bloch \emph{Hamiltonian eigenstates} can be found by applying a unitary rotation on the Bloch basis. 
	If $ \sigma_{\xi} = 1$ is the trivial representation, then $\widetilde{\mathcal{T}}_{{\boldsymbol \eta}\,\xi}$ is an abelian representation and the Bloch Hamiltonian eigenstates are also eigenstates of the translation operators. For all other representations, non-trivial 1D representations included, it can be checked again that the Bloch eigenstates (at different energies but same invariant space) get mixed among each other upon left translations, whereas they mix within the degenerate multiplet (at different $\mu$s) upon right translations since the Hamiltonian commutes with the right regular representation.

	\subsection{Other Bloch decompositions}

	Finally, we can consider the regular representation induced from the product $P C$ (see \cref{app: trasl}). Since $P$ is not necessarily normal, degeneracies are not well highlighted by the associated Hamiltonian decomposition. However, in general, there is no inclusion relation among $P$ and ${\widetilde Z}$ and the two decompositions may display different invariant subspaces. Thus, the intersections of these invariant subspaces are again invariant and define a more refined decomposition that may turn useful in specific analysis. 
	Another group that can be considered is the centralizer of $C$ in $G$, since $C_{C}(G)\geq Z$. We show in \cref{prop: CHG abelian} that, for $n=2$ and $C$ cyclic, its presentation is known so long as the one of $G$ is given. 
	
	In the simple examples that we shall show in the next section, the partial Bloch decompositions along ${\widetilde Z} C$ allow to retrieve all irreducible subspaces and the full decomposition. However, for more complex examples this nice feature might not be lost. The bridge between the partial decomposition and the full one in \cref{full_dec_H} is found within the so-called MacKey theory\cite{Mac78,Cec18} that, however, requires further knowledge of the subgroup structure of $G/{\widetilde Z}$. However, in practical realizations the order of $G/{\widetilde Z}$ is not going to be large (in a group-theoretic sense) and its irreps can be read out from the GAP library or other databases\cite{Note2}.

	\section{Examples of CS-crystals} \label{sec: example}
	
	In this work, we restrict the examples to cyclic commutators, the smallest non-trivial groups being $C=\mathbb{Z}_2,\mathbb{Z}_3$. The simplest abelian support to treat is a two-dimensional one. The CS-lattices are built from the canonical presentations in \cref{sec: latt_constr}.
	
	The group $C$ has only one generator $c$ that must be embedded in $G$ necessarily as the terms $c_{12}$ (also shorten to $c$). The irreps of $C=\mathbb{Z}_m$ will be denoted by $\sigma_\omega$ with $\omega \in \{e^{2\pi i k/m}:k=1,\dots,m\}$. In \cref{Hin the theo_mod} the set $R_p$ is trivial and the only degree of choice will come from the set $\mathcal{A} =\{a_1,a_2\}$ chosen to determine $R_a$. Due to \cref{theo: lr period,prop: CHG abelian,prop: Pcell_size_n2,2dimZtheo} all subgroups of interest can be computed explicitly without effort.
	
	In the next sections, we shall investigate only the CS-lattices and their GTGs, confining the analysis of the spectra of the Hamiltonian to \cref{app: Ham_examples}.
	For simplicity of the treatment, we will consider the Hamiltonian \cref{Ham_action} with NN couplings only, i.e. $\lambda=1,2$.

	\subsection{$C=\mathbb{Z}_2$ and $n=2$} \label{sec:Z2F2}
	
	\begin{figure} [h!]
		\centering
		\includegraphics[width=0.7\linewidth]{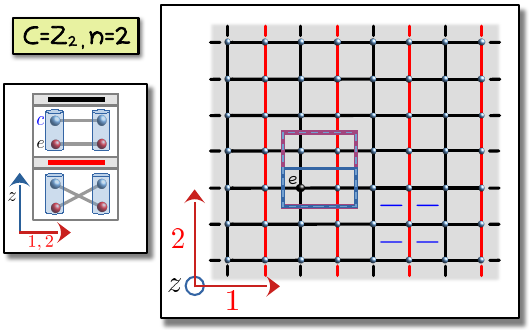} 
		\caption{ Two dimensional Cayley-Schreier lattices with $C=\mathbb{Z}_2$. The  cables listing is on the left panel. The main cells are shown on the abelian support: standard unit CS-cell (blue), standard CS-supercell (light blue, dashed), primitive unit CS-cell (purple). The flux pattern in the trivial and non-trivial irreps of $C$ can be inferred from the loops values shown in blue at one side of the lattice (we mapped $(e,c) \rightarrow (e,-)$). In both cases it is homogeneous.}\label{fig:Z2onF2}
	\end{figure}
	
	Since $\mathbb{Z}^\times_2 \cong \mathbb{Z}_1$, necessarily $\mathcal{A} =\{1,1\}$ and $R_{a} = \{\bar c^{(1)} c,\bar c^{(2)} c\}$. The presentation of the unique maximal integral is given by 
	\begin{equation} \label{Hz2_pres}
		H = \langle 1,2\,\vert\, c^2,\bar c^{(1)} c,\bar c^{(2)} c \rangle
	\end{equation}
	
	The CS-lattice is shown in \cref{fig:Z2onF2}. It is quite evident from the pattern in the lattices that $P = \langle 1^2 ,2 \rangle_{ab}$ (the `ab' subscript here denotes abelian generation), which is in agreement with the expectation, by Corollary~\ref{straight_ucell}, that $P_{\mathrm{straight}}=P$. Moreover $Q = \langle 1^2 ,2^2 \rangle$ and $C = Z_{CG}$ is central while $C_G(C) = G$. The group ${\widetilde Z}$ is not unique since $Z_{CG}$ is nontrivial (cf. \cref{sec: shape}). All ${\widetilde Z}_i$s are generated by the generators of $Q$, multiplied by either $e$ of $c$, cf.  \cref{2dimZtheo}. If ${\widetilde Z} = Q$, then the associated primitive CS-supercell group is a dihedral one,  $G/\widetilde{Z} = D_8$. In the other three cases one finds it to be the quaternionic group $Q_8$.

	\subsection{$C=\mathbb{Z}_3$ and $n=2$} \label{sec: Z3F2}
	
	\begin{figure*} [h!]
		\centering
		\includegraphics[width=1 \linewidth]{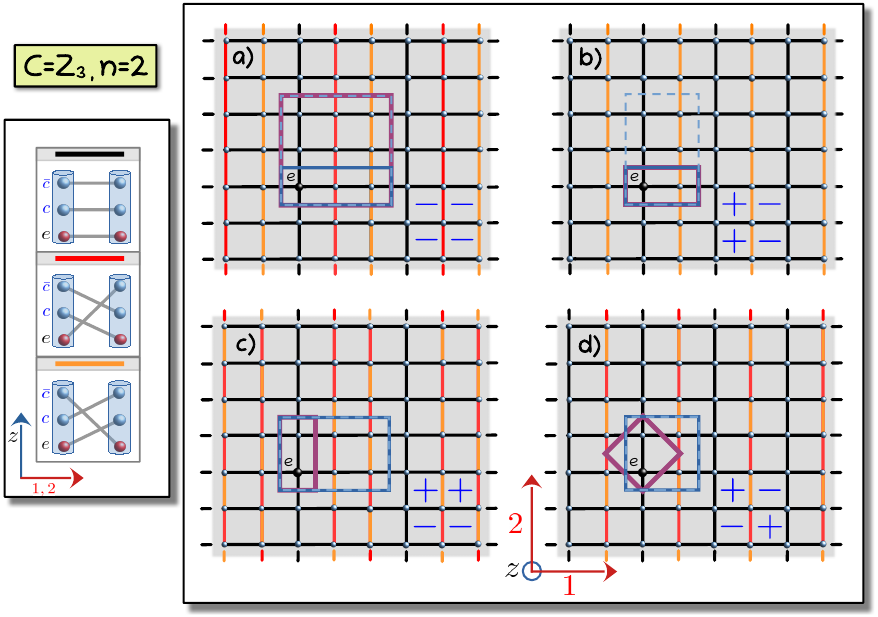} 
		\caption{ 
			Same as in \cref{fig:Z2onF2} with $C=\mathbb{Z}_3$. In this case 4 possible CS lattices exist.}\label{fig:Z3onF2}. The flux pattern in all 3 irreps of $C$ can be inferred from the loops values shown at one side of the lattice (we mapped $(e,c,\bar c) \rightarrow (e,-,+)$).
	\end{figure*}

	Since $\mathbb{Z}^\times_3 \cong \mathbb{Z}_2$, we have four possible choices for $\mathcal A$. However, only three of them lead to non-isomorphic GTGs. In \cref{tab:four-cases} we list the details about the four CS-lattices. Case `a' is the central extension case; cases `b' and `c' are equivalent up to the exchange of letter $1\leftrightarrow2$ in the maximal integral presentation; case `d' is arguably the most complex case.
	
	\begin{table*}[b!]
		\centering
		\resizebox{\columnwidth}{!}{%
			\renewcommand{\arraystretch}{1.5} 
			\begin{tabular}{|c||c|c|c|c|c|c|c|c|}
				\hline
				Case & $\mathcal A$ & $R_{a}$ & $P$ & $Q$ & $\widetilde Z$ & $Z_{\mathrm{CG}}$ & $\mathcal{C}_G(C)$ & $G/ \widetilde Z$ \\ 
				\hline\hline
				a  
				& $\{1,1\}$ 
				& $\{\bar c^{(1)}c,\;\bar c^{(2)}c\}$ 
				& $\langle 1^3,\;2\rangle$ 
				& $\langle 1^3,\;2^3\rangle$ 
				& $Q-\mathrm{affine}$
				& $C$ 
				& $G$ 
				& $U(3,3),C_9 \rtimes C_3$\\ \hline
				
				b  
				& $\{2,1\}$ 
				& $\{c^{(1)}c,\;\bar c^{(2)}c\}$ 
				& $\langle 1^2,\;2\rangle$ 
				& $\langle 1^2,\;2^3\rangle$ 
				& $\langle 1^2,\;2c\rangle$ 
				& \ding{55}
				& $C\widetilde Z$
				& $S_3$ \\ \hline
				
				c  
				& $\{1,2\}$ 
				& $\{\bar c^{(1)}c,\;c^{(2)}c\}$ 
				& $\langle 1^3,\;2^2\rangle$ 
				& $P$ 
				& $\langle 1\bar c,\;2^2\rangle$ 
				& \ding{55}
				& $C\widetilde Z$
				& $S_3$\\ \hline
				
				d 
				& $\{2,2\}$ 
				& $\{c^{(1)}c,\;c^{(2)}c\}$ 
				& $\langle 1^2,\;2^2\rangle$ 
				& $P$
				& $\langle 12c,\;\bar 12c\rangle$ 
				& \ding{55}
				& $C\widetilde Z$
				& $S_3$\\ \hline
			\end{tabular}
		}
		\caption{Details about the four CS‐lattices with $C= \mathbb{Z}_3$ and $n=2$. In case `a', $\widetilde{Z}$ can be defined in several "affine" ways from the generators of $Q$ (cf. similar case in \cref{sec:Z2F2}), which leads to 2 possible primitive CS-supercell groups (last row). The cross symbol denotes the trivial group.}
		\label{tab:four-cases}
	\end{table*}

	The CS-lattices relative to the four cases are shown in \cref{fig:Z3onF2}. Several remarks are in order. First, there is a reduction of the standard CS-supercell size the along the direction $k$ in which the corresponding $a_k$ is non-trivial. This is an expected consequence of \cref{prop: Pcell_size_n2}, by which the size is $m$ if $a_k=1$ trivial but can be less otherwise. Second, as mentioned, the cases `b' and `c' are isomorphic, however, this does not emerge clearly looking at the pattern of cables. Nevertheless, the sizes of standard CS-supercells and primitive unit CS-cell are the same (up to a rotation). This is expected since such cells are defined by central groups, defined by their implicit commutation property, while $P$-groups have a graph-dependent definitions.
	The flux patterns also coincide up to rotations. Third, `c' and `d' are instances where the primitive CS-cell truly identifies a hidden periodicity. Four, one can check how the claims in \cref{coro: PgeqZ,coro: ZCG} apply to these cases in a different way. In particular, in case `a' $P>\widetilde{Z}$ since the lattice generators commute with $C$, in case `b' the same holds even if the commutation fails while the inclusion fails in cases `c' and `d'. Finally, we observe, looking at the group $S_3$ in the last column in the table, that the primitive CS-supercell groups may coincide even for non-isomorphic CS-lattices. A theorem in Ref.~\cite{Note1} shows that non-isomorphic Cayley-lattices always lead to non-isomorphic lattices after imposing PBCs if the generators of $\widetilde{Z}$ are not in the PBCs group. Moreover, we observe that, since $S_3$ is the smallest non-abelian group, $S_3$ is also the smallest GTG of all possible gauge groups.
	
	\section{Discussion} \label{sec: disc}
	
	\subsection{Wilson-loop configurations} \label{sec: gauge_configs}
	It is natural to ask what kind of gauge configurations appear in \cref{gauge_H}. 
	
	Local shuffling of the pillars elements determines a redefinition of links as $l'_{\mathbf{r},\lambda} = \allowbreak P_{\mathbf{r +r}_\lambda}l_{\mathbf{r},\lambda} P^T_{\mathbf{r}}$, with $P_{\mathbf{r}}$ some permutation matrices. Since such reassembling should not change the physics it is convenient to ask the question not in terms of the set of hopping matrices configurations $\{\sigma_{\xi,l_{\mathbf{r},\lambda}}: \mathbf{r} \in \mathbb{Z}^n, \lambda \in \Lambda\}$ but rather in terms of the set of Wilson-loop configurations $\{\sigma_{\xi,c^{(g_{\mathbf r})}_{ij}} : g_{\mathbf{r}} \in T^{ab}_n,1\leq i<j \leq n\}$, which are invariant under such permutations (assuming without loss of generality $P_{\mathbf{r} = \mathbf{0}} = \mathbb{1}$). Since these configurations depend on the pillar elements $c^{(g_{\mathbf r})}_{ij}$, finally the question is about the types of possible embeddings of $C$ in a larger group, the issue addressed in \cref{sec: group int2,sec: latt_constr}.
	
	From the example in \cref{sec:Z2F2} we saw that there is only one possible flux configuration if $C=\mathbb{Z}_2$. In particular, all loop fluxes around the 4-plaquettes are $\pi$ and, for instance, no vanishing flux appears.
	Moreover, \cref{coro_lp} proves that the Wilson-loops around the standard CS-supercell in any irrep sector of $C$ vanish. Thus, it seems that CS-crystals admit only few Wilson-loop configurations. However, it can be shown that it is possible to construct \emph{any} (effectively-)$\mathbb{Z}_m$ flux configuration having $p$ freely-chosen fluxes in a periodic cell, in a CS-crystal with $C=\left(\mathbb{Z}_m\right)^{p}$ and $n=2$. At present, we do not know whether this is possible also in the non-abelian case. However, since such topic goes beyond the scope of this paper, it will be treated separately somewhere else. 
	In any case, even in the case that CS-crystals may realize any configuration, we must mention that they are redundant systems if one is only interested in reproducing the dynamics of particles under a \emph{single} taylored Wilson-loop configuration. Indeed, either one can opt for a standard implementation, using unitary matrices and local Hilbert spaces whose dimension equals that of the unitaries, or, if real hopping amplitudes are required, it is easy to envision systems where the pillars size need not be enhanced by any power. For instance, with 2D abelian supports, if links can be adjusted at will, there are enough degrees of freedom to impose any $\mathbb{Z}_m$ Wilson-loop configuration in a lattice with pillars of size $m$.
	Therefore, CS-crystals are interesting primarily because they represent a whole group, $G$, and not just some (generalized) projective representation (cf.  \cref{proj_T}).

	Finally, we can ask what is the periodicity of the Wilson-loop configuration in a given CS-crystal with group $G$. Assuming $C$ cyclic it is possible to easily answer such a question. The minimal \emph{common} periodicity, i.e. common to the flux configurations of all irreps subspaces, is given by the group $\mathcal{C}_G(C)/C$, with $\mathcal{C}_G(C)$ the centralizer of $C$ in $G$ as described in \cref{prop: CHG abelian2}. As a corollary, since $\mathcal{C}_G(C)\geq P,\widetilde{Z}\;$, it follows that the \emph{universal} flux-periodic cell, defined by this minimal periodicity, is smaller than or equal to both the primitive and the standard unit CS-cells. We do not know the answer in the non-abelian case.

	\subsection{Comparison with existing literature}
	
	Our proposed method for constructing GTGs and the associated CS-crystals from a desired gauge group is based on the short exact sequence 
	\begin{equation} \label{commute_ext}
		0 \rightarrow C \rightarrow G \rightarrow \mathbb{Z}^n \rightarrow 0.
	\end{equation}
	In the literature\cite{Lul94,Szc12,DAr19}, for a number of different purposes, it is often considered a "reversed" method based on central extensions of the form
	\begin{equation} \label{central_ext}
		0 \rightarrow \mathbb{Z}^n \rightarrow G \rightarrow H \rightarrow 0.
	\end{equation}
	The second method applies in our case, recognizing that the central group $\mathbb{Z}^n$ is $\widetilde{Z}$ while $H$ is $G/\widetilde{Z}$, the primitive CS-supercell group. We notice that our perspective is quite different, as in the latter $H$ (the primitive CS-supercell group) is fixed, while in the former its commutator group is. The second approach is preferable if the concern is in shaping the dynamics, as the irreps of $\widetilde{Z}$ refine the invariant subspaces better than those of $C$. However, since $C$ will always be contained in the primitive CS-supercell group, constructing lattices from $C$ addresses more fundamental properties and, arguably, the most fundamental ones, since $C$ is the minimal group determining non-commutativity.
	
	We notice that if $Z_{CG}$ is trivial then, in \cref{central_ext}, $\widetilde{Z}$ embeds as a maximal abelian normal group of finite index, making $G$ an abstract crystallographic group, according to Theorem 2.2 of Ref. \cite{Szc12}. Crystallographic groups are extended versions of the Euclidean symmorphic space groups and partial classifications have been obtained\cite{Szc12}. 
	
	In \cref{sec: example}, we found that two of the maximal integrals constructed with \cref{Hin the theo_mod} are isomorphic to each other. There seems to be no systematic ways to discard redundancies other than direct checks.
	The MacLane method, based on a central extensions version of \cref{commute_ext}, proposed to find MTGs in Refs.~\cite{Lul94,Flo94}, has the same issue, as noted in Ref.~\cite{Flo96}. Concerning this approach, which is restricted to $C$ cyclic, we observe that it is less general than our method. Indeed, it allows to describe only homogeneous flux configurations and corresponds to setting $a_i=1$ for all $i$'s. Moreover, nothing in their theory enforces the central (gauge) group to embed in $G$ as its commutator group whence, we believe, the built extension $G$ might not actually be always the desired MTG.
	
	Finally, we remark that different dynamics on Cayley-graphs have been considered\cite{Pro23,Hut23}, for instance quantum walks\cite{Ace06} and quantum cellular automata\cite{Arr19}. In particular, the so-called coinless quantum walks\cite{Por15,Bis16} on virtually abelian groups (that is, containing an abelian subgroup of finite index) closely resemble the dynamics we explored\cite{DAr17,DAr19}. In those models, the coin space, which serves as an analogue of the pillar space, arises from the graph connectivity and not from an intrinsic degree of freedom of the vertices.
	There, it has also been shown that the dynamics can be decomposed along the nominal abelian subgroup, which in our language can be either $Q$ or $Z$, and simplified accordingly. Despite some similarities, such as the shared interest in building group extensions\cite{DAr19}, the framework and the goals are different. The stroboscopic unitary evolution of quantum walks (or cellular automata) are much more constrained\cite{Mey96,Bis16} than the Schro\"dinger dynamics; moreover, gauge configurations and their charges are of no concern, being a hidden structure within the coin space.

	\section{Conclusions}
	
	In conclusion, we showed that any discrete group $G$ can be regarded as the \emph{Gauge} Translation Group of the multiple one-body dynamics appearing as Bloch constituents in its associated Cayley-crystal. This link between group theory and synthetic crystals allowed us to both generalize the notion of Magnetic Translation Groups and to provide an experimentally implementable setup for the realization of new types of dynamics. In particular, we introduced Cayley-Schreier-crystals as a scalable and universal implementation of virtually abelian GTGs. In these crystals the gauge degrees of freedom are made local in space and the bulk lattice becomes hypercubic. No complex hoppings terms are required to build such systems; a property which, combined with the fact that a neat embedding in 3D exists, makes them appealing for experiments. 
	
	Although any group $G$ can be regarded as a GTG, the converse is not true since there are groups $C$, the smallest one being $S_3$, that cannot be the gauge groups of any GTG. In this respect it would be interesting to understand the physical implications of such failure.
	
	It is intriguing to explore more complex Hamiltonians over these crystals to exploit their peculiar group structure, in particular mixing the different irreps dynamics. Adding interactions\cite{Flo99,Len24} and disorder\cite{Che24} allows one to achieve this goal.
	Moreover, provided complex hoppings are enabled in general, the structure of CS-lattices allows an easy implementation of Hamiltonian terms that are taylored to act differently on the specific representations. This would break the Hamiltonian symmetry while preserving the dynamics invariant subspaces.
	An interesting direction is to develop these systems into a full-fledged lattice gauge theory where the gauge field itself becomes dynamical.\cite{Kog75,Zoh15,Nyh21}. 
	
	Finally, given the strong connection between group theory and the theory of higher-order topological insulators and superconductors\cite{Sch18}, CS-crystals promise to be excellent platforms for realizing topological systems of arbitrary nominal dimension. They also provide a natural setting to study the robust states implied by the bulk/boundary correspondence when open boundary conditions -- instead of the PBCs discussed here -- are implemented \cite{Oji24}.

	\section*{Aknowledgements}
	The author wishes to thank Felix Flicker, Fabian Lux, Michele Burrello, Oded Zilberberg and Tom\'a\v{s} Bzdu\v{s}ek for inspiring inputs and Lavi K. Upreti for moral support and helpful discussions. 
	
	\paragraph{Funding information}
	The work was supported by the QuantERA II Programme STAQS project that has received funding from the European Union’s Horizon 2020 research and innovation programme under Grant Agreement No 101017733.
	
	\begin{appendix}
		\numberwithin{equation}{section}
		
		\section{2D quasi-isometric embedding of $(n>2)$-dimensional abelian supports}
		\label{sec: nbig embed}
		
		\begin{figure} [t!]
			\centering
			\includegraphics[width=1\linewidth]{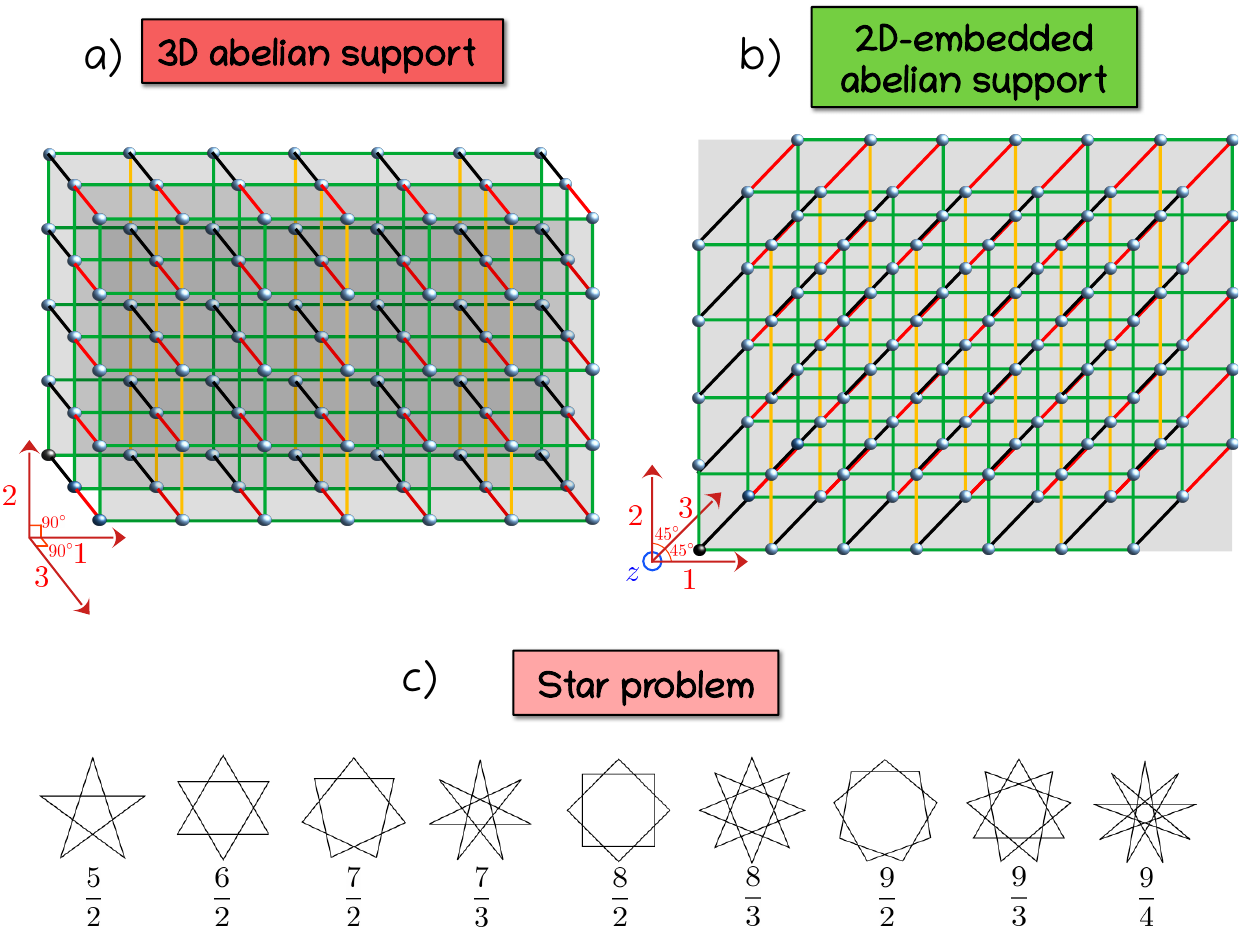} 
			\caption{ a) Abelian support with three generators in 3D. The coloring of cables is only for visual aid. Notice that the pillars, here collapsed to a single blue sphere, might make the connectivity very complex. b) Same CS-lattice but with the third dimensions collapsed on the plane $xy$. Pillars elements might be placed along the $z$ axis. The field extension corresponding to this arrangement has $r_2 = i$ and $r_3 = \exp\{i \pi/4\}$. c) Whenever the vectors (or their inverses) associated to the $r_i$ can be disposed to form a closed component of a star, such set determines a redundant field extension. We use the Schl\"afli symbols ${p/q}$ to denote the stars, where $p$ (the number of vertices) and $q$ (the discrete distance, on the circumcircle, between the points connected by a line) are coprime and $q \geq 2$.}\label{fig:lattice_splash}
		\end{figure}

		The CS-lattices we have shown in the examples of \cref{sec: example} have a 2D abelian support. Equivalently, the GTGs had rank $2$, that is, two generators. In general groups may have a bigger rank, posing the problem about the physically realization of these lattices in three dimensions (we assume one dimension is taken up by the pillars). An example of such groups are the hyperbolic ones, the least rank of which is $4$\cite{Boe22}. 
		
		Here, we suggest a solution to this problem consisting in a mapping of a $n$D abelian support to a 2D one with the additional property that the length of the nearest-neighboring (NN) cables is mapped isometrically. The intuitive idea behind is shown in \cref{fig:lattice_splash}(a-b): each physical direction associated to a generator of the abelian support is pushed isometrically into the 2D plane. To ensure that group elements with different associated Schreier transversal (pillar) do not overlap, the angles among the unit vector associated to these directions must be \emph{incommensurate}. To formalize it, we make use of the mathematical concept of field extensions\cite{Note3}. For what concerns us, we are interested in field extensions of $\mathbb{Z}$; they can be written as the sets  
		\begin{equation} \label{field_ext}
			\mathbb{Z}(\{r_i\}_{i=2}^n) = \{p_1 + \sum_{j=2}^n p_j r_j : p_j \in \mathbb{Z} \},
		\end{equation}
		
		which retain the specific property of their original field, i.e. closure under sum, difference, multiplication and division. An extension is fully nontrivial if the set $\{r_i\}_i$ contains reciprocally incommensurate numbers so that every specific set of numbers $\{p_j\}$ uniquely identifies an element of the extended field. For instance, $r_i = \sqrt{i^2+1}$ would produce an irreducible extension. Instead, $r_i = i$ would produce a completely trivial extension $\mathbb{Z}(r_1,r_2,\dots)=\mathbb{Z}$. It is customary to assume $r_1=1$.
		
		In our case, since we want the abelian support to be fit in two dimensions we can think of its points as points in the complex plane. We consider \cref{field_ext}
		with $n$ being the number of the lattice generators and each $r_s$ being a complex number. The set $\{r_i\}_{i}$ must be carefully chosen so that real and imaginary parts of the set points never vanish simultaneously (at $p_j\neq 0$), otherwise we get a so-called redundant extension, where a linear dependence between the points is present.
		
		Notice that, by choosing $r_s = e^{i \theta_s}$, the lattice will enjoy the appealing property that the links connecting NN pillars have the same length (small length adjustment are inevitably needed in the practice, to avoid overlaps among the links or between links and pillars). Non-NN links may get stretched or compressed, however, it can be seen that the number of cable sizes stays finite even in an infinite support.
		A bad choice of angles $\theta_s$ might lead to star configurations, implying that the extension is redundant, see \cref{fig:lattice_splash}(c). For example, choosing $r_2 = \exp\{-i \pi/3\}$ and $r_3 = \exp\{2i \pi/3\}$ would produce a hexagonal lattice whose points require only $2$ generators to get labeled (and not $3$). The hexagonal lattice has this property because the vectors associated to $\{r_i\}_{i=1,2,3}$ can be disposed to form one of the two triangles in the star $6/2$ in the figure.
		
		Clearly, if the pillars are small enough one may also opt to use the third dimension to accommodate one more dimension of the abelian support, placing the pillar in such a way not to obstruct the links. In this case, we may define our lattice as in \cref{field_ext} but with $\{r_i\}_{i}$ being a set of three-dimensional \emph{incommensurate} vectors.

		\section{Details about the center group}
		
		\subsection{Decomposition and computation}  \label{sec: shape}
		In \cref{prop: Zrank} we prove that we can decompose the center group as the direct product of subgroups
		
		\begin{equation} \label{Zdecomp}
			Z = {\widetilde Z}Z_{CG}
		\end{equation} 
		where $Z_{CG}= Z \cap C$, ${\widetilde Z} \cong \mathbb{Z}^n$  and ${\widetilde Z}\cap Z_{CG}$ is trivial. It holds,
		\begin{equation} \label{ZHGdecomp}
			Z_{CG} = \otimes_{i=1}^{\mathrm{rank}(Z_{CG})} \mathbb{Z}_{m_i}
		\end{equation} 
		where $m_i$ divides $m_{i+1}$ and by \emph{rank} of a group we mean the minimal size a generating set can have. The latter decomposition is called \emph{invariant factor decomposition}\cite{Rot99} and applies to all finite abelian groups.
		Differently from $Z_{CG}$, the group ${\widetilde Z}$ is not necessarily uniquely defined. Given $Z$ and $Z_{CG}$ the number of possible \emph{complements} ${\widetilde Z}_i$ equals $|Z_{CG}|^n$, as can be seen from the fact that this number must equal the cardinality of the set $Hom(\mathbb{Z}^n,Z_{CG})$ of the homomorphisms between $\mathbb{Z}^n(\cong {\widetilde Z})$ and $Z_{CG}$, and $Hom(\mathbb{Z}^n,Z_{CG}) = (Z_{CG})^{\times n}$\cite{Rot09}.
		
		The number of generators of $Z$ is $n+ \mathrm{rank}(Z_{CG})$.
		
		The explicit computation of $Z$ from a presentations of $G$ (see Section \cref{sec: group int}) is important if one is interested in a fine decomposition of the dynamics (see \cref{sec: full and partial}).
		The group $Z_{CG}$ is easily retrievable from the presentation of $G$ since the embedding of $C$ in $G$ is explicit in the presentations of \cref{sec: group int} and one should find the elements $c\in C < G$ such that $c^{(k)} = c$.
		The situation is less fortunate with ${\widetilde Z}$. Being the number of its generators finite, they may be found with an extensive search among the elements of the standard CS-supercell. 
		If $n=2$ and $C$ is cyclic, it is possible instead to find $Z$ directly from the knowledge of the presentation parameters $a_i$ (see \cref{2dimZtheo}). The same holds for the group $Q$ (see \cref{prop: Pcell_size_n2}).

		\subsection{Tessellation by the primitive unit CS-cell} 
		\label{sec: tessel}
		
		The primitive unit CS-cell does not stand out like the standard CS-supercell since it arises from a pattern in the links that is hidden. The former cell can be represented as subset of the latter cell since $Z\geq Q$. Differently from $Q$, the elements of $Z$ can have a pillar component, according to the decomposition \cref{f_decomp}. In particular, if $Z_{CG}$ is non-trivial, only a subset of the pillars elements can be retained in the primitive CS-cell definition. Instead, such pillar reduction can be avoided if $Z_{CG}$ is trivial. 
		Indeed, the non-trivial elements of any $\widetilde{Z}_i$ are of the form $z = g^{ab}_z c_z $ with $c \notin (Z_{CG}\backslash\{e\})$. In choosing the transversal of $\widetilde{Z}_i$ in $G$ that determines a CS-supercell, which is in one-to-one correspondence with the group $G/\widetilde{Z}_i$, one may use the relation $g^{ab}_z c_z$ to remove from the cell group either $g^{ab}_z$ or $c_z$. Clearly, the \emph{first choice} will not modify pillars and impacts only the abelian support of the cell. This choice is preferable, as ${\widetilde Z}$ and $Z_{CG}$ determine two distinct types of cuts in the cell definition and their tesselation action is different: elements in $\widetilde{Z}$ translate the primitive CS-cell horizontally (plus some vertical shift along the pillar determined by $c_z$) while those in $Z_{CG}$ only vertically.


		\section{Partial Bloch decomposition from representation theory} \label{app: trasl}
		In this appendix we simplify the notation with $\mathcal{T}^L\rightarrow\mathcal{T}$.
		
		\subsection{Induced representations} \label{app: ind_rep}
		
		We remind the reader the definition of an induced representation. Let us assume the coset decomposition $g = t \, x$ with $g\in G$, $x\in X$ and $t$ in one of its transversal. 
		Suppose we have a representation $\sigma$ of $X$ acting on a vector space $W$. Then, we may construct the induced representation of $G$ acting on the space $V$ made of functions
		\begin{equation} \label{L2_decomp}
			\mathbf{v} = \sum_t  \mathbf{e}_t  \otimes \mathbf{w}_t ,
		\end{equation}
		with $\mathbf{w}_t \in W$ for each $t$,
		from the action
		\begin{equation} \label{Ind_rep}
			(\mathrm{Ind}^G_X\,\sigma)_{g} (\mathbf{v}) =  \sum_{t'} \mathbf{e}_{t_{ gt'}}\, \sigma_{x_{gt'}} \cdot \mathbf{w}_{t'},
		\end{equation} 
		where it holds $ g\,t' =t_{gt'}\, x_{gt'} $.
		
		It is known that the left regular representation of any subgroup $X$ of $G$, we denote it by $\mathcal{T}_X$, induces the left regular representation of the group\cite{Ful91}.  
		Thus, if $\sigma = \mathcal{T}_X$, using \cref{L2_decomp} and \cref{Ind_rep}, we have  $\langle t\,x | \mathcal{T}_{g} \psi \rangle  = \langle \bar x_{gt'}\,x |\varphi_{t'} \rangle$ where we used the braket notation $(\mathbf{e}_t,\mathbf{v},\mathbf{w}) \rightarrow (|t \rangle,|\psi \rangle,|\varphi \rangle)$ and $t'$ is such that $t_{gt'} = t$. If, in addition, $X=C$ one has $x_{gt'} = l_{\mathbf{r - r}_g,g}$, where $t= g_{\mathbf{r}}$ and $t'= g_{\mathbf{r-r}_g}$.
		
		An interesting property of induced representations, is that $\mathcal{T}$ can be written as the direct sum of the induced representations of the irreps $\{\sigma_\alpha\}_\alpha$ of $\mathcal{T}_X$: 
		\begin{align} \label{eq: first hand}
			\mathcal{T} = \bigoplus_{\alpha} \mathrm{Ind}^G_X(\sigma^X_\alpha)^{\oplus d_\alpha}  \overset{def}{=} \bigoplus_{\alpha} \mathcal{T}_{\alpha}^{\oplus d_\alpha}
		\end{align}
		with $d_\alpha$ the dimension of the representation.
		We stress that the summands $\mathcal{T}_{\alpha}$ are not necessarily irreducible.
		
		\subsection{Induction from two subgroups} \label{app: ind_two_rep}
		
		Central subgroups, being abelian, have only 1D irreps $\chi^Y_{\boldsymbol \eta, y} = e^{-i \boldsymbol{\eta} \cdot \mathbf{y}}$ with $\boldsymbol \eta$ parameterizing the dual group and $y\in Y$ having an associated vector $\bf y$ by abelianity.  Considering only central subgroups that do not overlap with $C$, e.g. $Y= Q,{\widetilde Z}$ we can make an additional step in the decomposition \cref{eq: first hand}. In these cases, $\chi^Y_{\boldsymbol \eta, y} \equiv e^{-i \boldsymbol{\eta} \cdot \mathbf{r}_{y}}$ where $\mathbf{r}_{y}$ is extracted from the decomposition \cref{f_decomp} of $y$ and $\boldsymbol \eta$ lives in a Brillouin zone $BZ_Y$ defined by reciprocal vectors $\mathbf{G}_j$ satisfying $\exp{( i\,\mathbf{G}_j \cdot \mathbf{r}_{y_k} )} = \delta_{jk}$, with $\{y_k\}_k$ a generating set of $Y$. 
		
		We shall denote by $ \chi^{G}_{\boldsymbol \eta,g}$ the multiplicative representation $e^{-i \boldsymbol{\eta} \cdot \mathbf{r}_{g}}$ of $G$, trivial on $C$, which extends the domain of $\chi^Y$ from $Y$ to $G$. We observe that, since $\mathcal{T}_{{\boldsymbol \eta},y} \equiv \chi^Y_{\boldsymbol \eta, y}$ then the \emph{twisted} representation  
		\begin{equation} \label{T_tilde}
			\widetilde{\mathcal{T}}_{{\boldsymbol \eta},g} = \bar \chi^{G}_{\boldsymbol \eta,g} \; \mathcal{T}_{{\boldsymbol \eta},g}
		\end{equation} 
		factors through the quotient by $Y$, i.e. it is the identity when evaluated on the subgroup $Y$. 
		We show in \cref{prop: reps_equiv} that the representation $\widecheck{\mathcal{T}}_{{\boldsymbol \eta}}$ of $G/Y$ defined by $\widecheck{\mathcal{T}}_{{\boldsymbol \eta},gY} = \widetilde{\mathcal{T}}_{{\boldsymbol \eta},g}\;(g\in G)$ is (unitary-)equivalent to $\mathcal{T}_{G/Y}$. This fact allows to regard translations as a composition of an extra-cell cell shift, associated by $Y$, and an intra-cell one, associated by $G/Y$.
		
		We stress that it is not possible to include $Z_{CG}$ in the groups $Y$. However, this fact is not relevant since we can use the entire group $C$ to refine the decomposition of $\mathcal{T}$. Consider \cref{eq: first hand} with $X = YC$ (a direct product of subgroups). 
		We can write
		\begin{flalign} \label{T_prod_reps}
			\mathcal{T} =& \mathrm{Ind}^G_{YC}\left(\mathcal{T}_Y \, \mathcal{T}_C \right) = \bigoplus_{\boldsymbol{\eta},\xi} \mathrm{Ind}^G_{Y C}\left( \chi^Y_{\boldsymbol \eta} \,  \sigma^{C}_{\xi}\right)^{\oplus d_\xi}\nonumber\\
			\overset{def}{=}& \bigoplus_{\boldsymbol{\eta},\xi}  \chi^{G}_{\boldsymbol \eta}\, \widetilde{\mathcal{T}}_{{\boldsymbol \eta}\,\xi}^{\oplus d_\xi}
		\end{flalign}
		where $\mu=1,\dots,d_\xi$ labels the copies of the representation $\xi$ of $C$ and we defined the representation $ \widetilde{\mathcal{T}}_{{\boldsymbol \eta}\,\xi}$ that, in analogy of what was mentioned above about $\widetilde{\mathcal{T}}_{{\boldsymbol \eta}}$, factors through the quotient by $Y$ and is associated with a representation $\widecheck{\mathcal{T}}_{{\boldsymbol \eta}\,\xi}$ of $G/Y$ of dimension $d_{\xi} |G /(YC)|$.
		
		Unfortunately, it is not possible to create a factored representation which would be equivalent to $\mathcal{T}_{G/(YC)}$, unless $\xi$ is the trivial representation. Indeed, with $\xi$ nontrivial it is not possible to extend $\sigma^{C}_{\xi}$ from $C$ to a representation of $G$. Instead, in the trivial case, immediately one defines $\sigma^{G}_{\xi=id,g} = \sigma^{C}_{\xi=id,c_g} = 1$ and we can state that $\widetilde{\mathcal{T}}_{{\boldsymbol \eta}\xi=id,g}\sim \mathcal{T}_{G/(YC)}$.
		
		Finally, we note that such decomposition of $\mathcal{T}$ in \cref{T_prod_reps} is partial, that is, in principle the invariant subspaces appearing above could be further reduced. However, if the irreps of $G/Y$ are known, then the eigenspace decomposition of $\widetilde{\mathcal{T}}_{{\boldsymbol \eta},\xi,g}$, with $\xi$ generic, could be recovered as part of the full irreps decomposition of $\widetilde{\mathcal{T}}_{\boldsymbol \eta}$ ($\sim \widetilde{\mathcal{T}}_{G/Y}$).
		
		\subsection{Bloch eigenfunctions} \label{app: bloch}
		
		The decomposition \cref{T_prod_reps} can be associated to Bloch functions $\{ |\mathcal{B}_{\boldsymbol{\eta}\xi\mu}^{(n\nu)} \rangle \}_{n\nu}$ ($1 \leq n \leq | T^{ab}_{YC}|, \allowbreak 1 \leq \nu \leq |d_\xi|$) that form the invariant dynamical subspace of $L^2(G)$ of \cref{partial_dec_H}. They can be defined as the product of functions
		\begin{flalign} \label{bloch_app}
			|\mathcal{B}_{\boldsymbol{\eta}\xi\mu}^{(n\nu)} \rangle =& |\varphi_{n}\rangle  \, |\chi_{\boldsymbol{\eta}} \rangle |\sigma_{\xi}^{\mu\nu} \rangle 
		\end{flalign}
		To evaluate its components, we doubly coset-decompose $g=t_g y_gc_g$ where $t_g$ is an element of the transversal $T^{ab}_{YC} \subset T^{ab}_n$ comprising the abelian support within the cell associated with $G/(YC)$. $|\chi_{\boldsymbol{\eta}} \rangle \in L^2(Y)$ evaluates as (we use the same notation of \cref{psi_L2_notation} here) $\langle z_g | \chi_{\boldsymbol{\eta}}\rangle =  \overline{\chi}_{\boldsymbol{\eta},z_g}$. $|\sigma_{\xi}^{\mu\nu} \rangle \in L^2(C)$ evaluates as $\langle c_g |\sigma_{\xi}^{\mu\nu}\rangle =  \overline{\sigma}_{\xi, c_g}^{\,C,\mu\nu} $, where we use the basis $B_{\xi\mu}$ shown below \cref{matrix_coeff}; the set of functions $\{\varphi_{n}\}_{1\leq n\leq |T^{ab}_{YC}|}$ form an orthonormal basis of $L^2(T^{ab}_{YC})$, they evaluate as
		$ \langle t_g | \varphi_{n} \rangle = \varphi_{n} (t_g)$ and are analogous to the periodic part of the Bloch functions $u^{(n)}(\mathbf{r})$ commonly used in Euclidean crystals.

		\subsection{Induction from $PC$}
		
		$P$ is the easiest subgroup to identify, therefore, we consider the decomposition along the group $PC$. Since in general there is no inclusion relation between $P$ and $\widetilde{Z}$, the invariant spaces identified in this decomposition can be different. From \cref{eq: first hand} we get
		\begin{eqnarray} 
			\mathcal{T} &=& \bigoplus_{\mathbf{k},{\xi}} \mathrm{Ind}^G_{PC}(\chi^P_\mathbf{k} \sigma_{\xi})  \overset{def}{=} \bigoplus_{\mathbf{k},{\xi}} \mathcal{T}_{\mathbf{k}\,{\xi}}
		\end{eqnarray}
		where $\mathbf{k}$ is a \emph{standard} quasi-momentum in a toric $BZ$.
		
		We might define the representation $\widetilde{T}_\mathbf{k}$ as above, however we can not state any equivalence with the representation of the quotient of $G$ over $P$ since $P$ is not necessarily normal. Therefore, such decompositions might not explicitly highlight all spectral degeneracies (cf.  \cref{app: Z2withP}).

		\section{Spectra of the examples of \cref{sec: example}} \label{app: Ham_examples}
		
		We shall assume $\kappa_\lambda = 1$ for simplicity and we lighten the notation with $\mathcal{T}^L\rightarrow\mathcal{T}$. In all examples the representation $\mathcal{T}_{\boldsymbol{\eta}\omega}$ acts upon Bloch wavefunctions $|\mathcal{B}_{\boldsymbol{\eta}\omega}^{(n)}q \rangle =  |\varphi^{(n)} \rangle | \chi_{\boldsymbol{\eta}} \rangle |\sigma_{\omega} \rangle$ (cf.  \cref{bloch_app}) and we choose a canonical basis for $L^2(T^{ab}_{\widetilde{Z}C})$, $\langle t |\varphi^{(n)}\rangle =\delta_{nt}$.

		\subsection{$C=\mathbb{Z}_2$ and $n=2$} \label{app:Z2F2}
		
		\subsection{Hamiltonian analysis: Bloch decomposition along the primitive CS-supercell}

		We block-diagonalize the Hamiltonian using \cref{partial_dec_H} with respect to the group ${\widetilde Z}C$ (see purple square in \cref{fig:Z2onF2}). The $pBZ$ vectors are $\boldsymbol{\eta} = \eta_1(1/2,0) + \eta_2(0,1/2)\quad(\eta_{1,2} \in [-\pi,\pi])$, where we set the lattice spacing to unity. Using the definition \cref{Ind_rep} we get 
		\begin{flalign} \label{T_Z2F2}
			\mathcal{T}_{\boldsymbol{\eta}\omega,1} |\varphi \rangle &= \varphi_{e}|1\rangle  + \chi_{\eta_1}\varphi_{1} |e\rangle + \varphi_{2} |12\rangle+ \chi_{\eta_1}\varphi_{12} |2\rangle, \nonumber\\ 	 
			\mathcal{T}_{\boldsymbol{\eta}\omega,2} |\varphi \rangle &= \varphi_{e} |2\rangle + \chi_{\omega}\varphi_{1}|12\rangle + \,\chi_{\eta_2}\varphi_{2}|e\rangle + \chi_{\eta_2}\chi_{\omega}\varphi_{12} |1\rangle,
		\end{flalign}
		and 
		\begin{flalign}
			\widetilde{\mathcal{T}}_{\boldsymbol{\eta}\omega,1} &= 
			e^{i\eta_1/2}\mathcal{T}_{\boldsymbol{\eta}\omega,1} =  
			\scalebox{0.8}{$
				\begin{pmatrix} 
					0&e^{-i\eta_1/2}&  0 &0 \\
					e^{i\eta_1/2}&0 & 0&0 \\
					0 &0& 0 & e^{-i\eta_1/2}\\
					0 & 0& e^{i\eta_1/2} &0
				\end{pmatrix}$},
			\nonumber\\ 
			\widetilde{\mathcal{T}}_{\boldsymbol{\eta}\omega,2} &= 
			e^{i\eta_2/2}\mathcal{T}_{\boldsymbol{\eta}\omega,2} = \scalebox{0.8}{$\begin{pmatrix} 
					0&0&  e^{-i\eta_1/2} &0 \\
					0&0 & 0& \omega e^{-i\eta_1/2} \\
					e^{i\eta_1/2} &0& 0 & 0\\
					0 & \omega e^{i\eta_1/2}& 0 &0
				\end{pmatrix}$},
		\end{flalign}
		where $|\varphi \rangle\in L^2(T^{ab}_{\widetilde{Z}C}))$, we neglected the diagonal action of $\mathcal T$ on $| \chi_{\boldsymbol{\eta}} \rangle |\sigma_{\omega} \rangle$ and the matrix degrees are organized, from innermost to outermost, as $(e,1)\otimes(e,2)$.
		
		The Bloch Hamiltonian reads as
		\begin{flalign} \label{pBZ ham}
			\mathcal{H}_{\boldsymbol{\eta}\omega} &= \begin{pmatrix}
				0& g_{\eta_1} &   g_{\eta_2}&0 \\
				g_{\eta_1}^* &0 & 0& \omega g_{\eta_2}\\
				g_{\eta_2}^*  &0& 0 & g_{\eta_1}\\
				0 & \omega g_{\eta_2}^*&  g_{\eta_1}^* &0
			\end{pmatrix} 
		\end{flalign} 
		with $g_{\eta} = 1+ e^{-i\eta}$.
		
		The $\omega=+1$ sector controls the abelian states. Moreover, as commented below \cref{T_prod_reps}, on this eigenspace the associated representation $\widecheck{\mathcal{T}}_{\boldsymbol{\eta}+}$ is the regular representation of $G/(\widetilde{Z}C) = \allowbreak G/{Z} =  \mathbb{Z}_2 \times \mathbb{Z}_2$. Thus, all translations can be diagonalized on the same basis to get the spectrum
		\begin{equation} \label{pBZ_z2 energies+}
			E_{\mathbf{\eta}+}= s_1|g_{\eta_1}|+ s_2|g_{\eta_2}|,
		\end{equation} 
		where each choice of the pair of signs $s_j=\pm1$ corresponds to a different band (see green bands in \cref{fig:Z2onF2spectra}(row below)). The Bloch eigenvectors are associated to periodic plane waves on the group $\mathbb{Z}_2 \times \mathbb{Z}_2$ on top of an overall phase modulation $e^{i \mathbf{\eta}\cdot \mathbf{r}_z}$ on $\widetilde{Z}$. 
		
		The $\omega = -1$ sector retains the non-commutativity of the translations along the two directions, made evident by $\widetilde{ \mathcal{T}}_{\boldsymbol{\eta}-,c} = -\mathbb{1}\neq  \mathbb{1}$ (cf.  \cref{T_Z2F2}). According to \cref{prop: reps_equiv}, $\widecheck{\mathcal{T}}_{\boldsymbol{\eta}-}$ does not represent $G/Z$, instead its represents (unfaithfully) $G/{\widetilde Z}=D_8$. The structure of this group is well known: a regular representation has $5$ distinct irreps, $4$ of which are 1D (the abelian eigenspace), and one which is 2D and repeats twice. Thus, using the observation at the end of \cref{app: ind_two_rep} and reasoning by exclusion, we know that $\widetilde{ \mathcal{T}}_{\mathbf{\eta}-}$ splits in 2 copies of a 2D irreps. Therefore, we expect $\mathcal{H}_{\mathbf{\eta}-}$ to have 2 distinct degenerate bands. Indeed, diagonalizing \cref{pBZ ham} one finds only one pair of energies,
		\begin{equation} \label{pBZ_z2 energies-}
			E_{\mathbf{\eta}-} =\pm \sqrt{|g_{\eta_2}|^2 +|g_{\eta_1}|^2},
		\end{equation} 
		depicted in pink in \cref{fig:Z2onF2spectra}(row below).

		\subsection{Hamiltonian analysis: Bloch decomposition along the standard unit CS-cell} \label{app: Z2withP}
		
		This case is analyzed as before but, since the (standard) $BZ$ differs from the $pBZ$, we make the substitution $\boldsymbol{\eta} \rightarrow \boldsymbol{k}$ with $\boldsymbol{k} = k_1(1/2,0) + k_2(0,1)\quad(k_{1,2} \in [-\pi,\pi])$. Using again the definition \cref{Ind_rep}, it holds
		
		\begin{flalign}
			\mathcal{T}_{\boldsymbol{k}\omega,1} |\varphi \rangle &= \varphi_{e}|1\rangle  + \chi_{k_1}\varphi_{1} |e\rangle, \nonumber\\
			\mathcal{T}_{\boldsymbol{k}\omega,2} |\varphi \rangle &= \chi_{k_2}\varphi_{e}|e\rangle + \chi_{k_2}\chi_{\omega}\varphi_{1} |1\rangle
		\end{flalign}
		and 
		\begin{flalign}
			\widetilde{\mathcal{T}}_{\boldsymbol{k}\omega,1} &= 
			e^{ik_1/2}\mathcal{T}_{\boldsymbol{k}\omega,1} =  
			\begin{pmatrix} 
				0&e^{-ik_1/2}\\
				e^{ik_1/2}&0 
			\end{pmatrix}, \nonumber\\ 	 
			\widetilde{\mathcal{T}}_{\boldsymbol{k}\omega,2} &=
			e^{ik_2}\mathcal{T}_{\boldsymbol{k}\omega,2} \;\;\;= \begin{pmatrix} 
				1 &0 \\
				0& \omega 
			\end{pmatrix},
		\end{flalign} 
		where the matrix degrees are organized as $(e,1)$. 
		
		The Bloch Hamiltonian reads as
		\begin{eqnarray} 
			\mathcal{H}_{\boldsymbol{k}\omega} &=& \begin{pmatrix}
				f_{k_2}& g_{k_1} \\
				g_{k_1}^* & \omega f_{k_2}
			\end{pmatrix} 
		\end{eqnarray} 
		with $f_k = 2 \cos(k)$ and $g_{k} = 1+ e^{-ik}$.
		
		The bands spectrum is easily computed as 
		\begin{alignat}{2}
			E_{{\bf k}+} &=& \pm|g_{k_1}| + f_{k_2}\label{Z2F2ab_spec}\\
			E_{{\bf k}-} &=& \;\pm\sqrt{|g_{k_1}|^2 + f_{k_2}^2} \label{Z2F2nab_spec}
		\end{alignat} 
		
		Notice that this spectrum is obtained upon unfolding the $pBZ$ along the direction $\lambda=2$. The band, whose eigenvectors have the modulation $\pm1$ within the unit cell along such direction, unfolds in the $BZ$ to momenta $\mathbf{k}=(\eta_1, \eta_2+\pi)$ (the red arrows in \cref{fig:Z2onF2spectra} show the reversed folding procedure). Hence, we recover \cref{Z2F2ab_spec} from \cref{pBZ_z2 energies+,pBZ_z2 energies-} upon the substitution $\pm|g_{\eta_2}| \rightarrow f_{k_2}$.
		
		It is interesting to notice that the band degeneracy of $E_{\mathbf{\eta}-}$ is lost in passing to $E_{\mathbf{k}-}$. Analyzing the Hamiltonian with the standard unit CS-cell may hide structural degeneracies that can be confused with accidental ones. The analysis through the primitive CS-cells  highlights better such degeneracies.
		
		Finally, concerning the $\omega = +1$ sector, we may operate an even further unfolding to a cell comprising only one pillar. We would get $E^{square}_\mathbf{k+} = f_{k_2} + f_{k_1}$ as expected from an isotropic nearest-neighbor tight binding model on a square lattice. For the abelian states the standard unit CS-cell is a redundant structure.
		
		\subsubsection{PBC using the group $P$} \label{app: PBCwithP}
		As mentioned in \cref{sec: valid and stand PBC}, choosing PBC groups that are not in any $\widetilde{Z}_i$ leads to non-Cayley-crystals. Suppose we choose such group to be $N = \langle 1^2,2^3 \rangle_{ab}$. It holds $P > N \not\subset\widetilde{Z}$. If we add the relations of $N$ into the presentation of $G$, in \cref{Hz2_pres}, trying to construct $G/P$ we would soon realize some contradiction. For instance, by the relations of $G$ it holds $\bar 1 \bar 2 1 = \bar 2 c$ and it follows $\bar 1 \bar 2^3 1 = (\bar 2 c)^3 = \bar 2^3 c$. However, since $2^3=\bar 2^3= e$ by the relations of $N$, that equation leads to the unacceptable identification $e=c$, which would reduce the number of points of the tentative group to 6, instead of 12 as suggested by the transversal of $N$ in $G$. 
		However, we cannot ignore that the transversal of $N$ itself is a periodic lattice and the Hamiltonian could be diagonalized using the standard unit CS-cell. The abelian sector does not present any novelty with respect to the valid PBC case. The non-abelian sector instead describes a particle in a magnetic flux on a $2\times 3$ toric lattice. Importantly the flux threading the whole lattice along the vertical direction equals $3\pi (\Phi_0/2\pi)$, not a multiple of the flux quanta. As a result the eigenstates, which are plane waves in each direction, present a $\pi$ shift in momentum along the direction $2$ to adjust to the Born-von Karman boundary condition\cite{Zak64,Che88}. Notice that the set of momenta $k_2$ does not include the point $k_2=0$.

		\begin{figure} [t!]
			\centering
			\includegraphics[width=.9\linewidth]{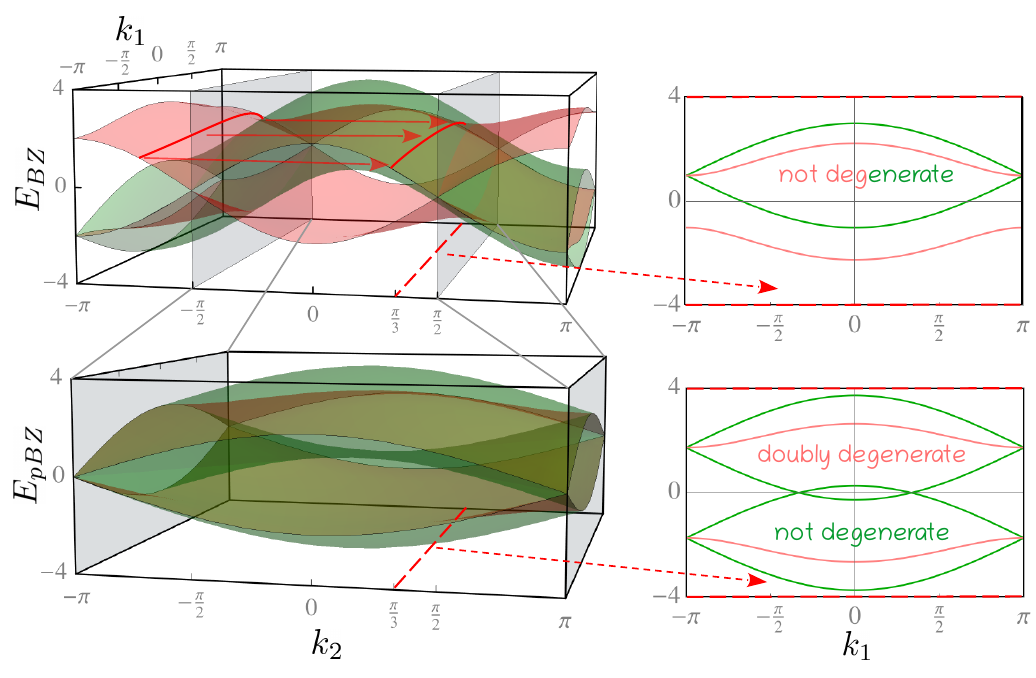} 
			\caption{(left) Spectrum $E_\mathbf{k}$ of $\mathcal{H}$ in the standard $BZ$ related to the group $P$ (row above) and the primitive folded one (row below). The abelian bands are shown in green while the non-abelian ones in pink. The folding planes along $k_2$ in going from one $BZ$ to the other are shown in gray; the red arrows show how the red band gets folded in going to the $pBZ$. (right) cut section of the spectrum along the dashed red line (along $k_1$) in the left plots. The non-abelian bands get doubly degenerate in the $pBZ$.} \label{fig:Z2onF2spectra}
		\end{figure}

		\subsection{$C=\mathbb{Z}_3$ and $n=2$} \label{app:Z2F2}
		We shall analyze only the two opposite-extreme cases in terms of the non-triviality of the powers $a_i$, that is, case `a' and `d'.
		
		\subsubsection{Case `a'. Brief Hamiltonian analysis} 
		
		This case features a central embedding of $C$ and its analysis is analogous to that of the single CS-crystal analyzed in the previous example. The bands for the non-trivial $C$-irreps, produced using the standard CS-cell, are those of the Hofst\"adter butterfly at 1/3 of the flux quantum\cite{Hof76}. They are gapped and present topological properties.

		\subsubsection{Case `d'. Hamiltonian analysis: primitive unit CS-cell} 
		
		One motivation for a closer look at this case comes from its tilted primitive unit CS-cell (see the purple area in \cref{fig:Z3onF2}).
		The $pBZ$ vectors have the form $\boldsymbol{\eta} = \eta_1 (1,1)/2 + \eta_2 (-1,1)/2\quad  \allowbreak  (\eta_{1,2} \in [-\pi,\pi])$, where we set the lattice spacing to unity. We get 
		\begin{flalign} 
			\mathcal{T}_{\boldsymbol{\eta}\omega,1} |\varphi \rangle &= \omega\bar \chi_{\eta_2} \varphi_{e} |2\rangle  + \omega^2\chi_{\eta_1}\varphi_{2} |e\rangle, \nonumber\\	 
			\mathcal{T}_{\boldsymbol{\eta}\omega,2} |\varphi\rangle &= \varphi_{e} |2\rangle  + \chi_{\eta_1}\chi_{\eta_2} \varphi_{2}|e\rangle
		\end{flalign}
		and 
		\begin{flalign} \label{T_Z3F2}
			\widetilde{\mathcal{T}}_{\boldsymbol{\eta}\omega,1} &= 
			e^{i(\eta_1 - \eta_2)/2} \, \mathcal{T}_{\boldsymbol{\eta}\omega,1} =  
			\scalebox{0.8}{$\begin{pmatrix} 
					0&\omega^2 e^{-i(\eta_1 + \eta_2)/2} \\
					\omega e^{i(\eta_1 + \eta_2)/2}&0 
				\end{pmatrix}$}, \nonumber\\ 
			\widetilde{\mathcal{T}}_{\boldsymbol{\eta}\omega,2} & =
			e^{i(\eta_1 + \eta_2)/2} \, \mathcal{T}_{\boldsymbol{\eta}\omega,2} =		
			\scalebox{0.8}{$\begin{pmatrix} 
					0&e^{-i(\eta_1 + \eta_2)/2} \\
					e^{i(\eta_1 + \eta_2)/2}&0 
				\end{pmatrix}$},
		\end{flalign}	
		where the matrix degrees are organized according to the vector $(e,2)$. One can check that the associated representation $\widecheck{\mathcal{T}}_{\boldsymbol{\eta},\omega}$ represents (unfaithfully) the group $G/\widetilde{Z} = S_3$.
		
		The Bloch Hamiltonian reads as
		\begin{eqnarray} 
			\mathcal{H}_{\boldsymbol{\eta}\omega} &=& \begin{pmatrix}
				0 & f_{\boldsymbol{\eta},\omega} \\
				f_{\boldsymbol{\eta},\omega}^* & 0
			\end{pmatrix} 
		\end{eqnarray} 
		with $ f_{\boldsymbol{\eta},\omega} =  \omega^2(e^{-i\eta_1} + e^{-i \eta_2}) + 1+e^{-i(\eta_1 + \eta_2)}$.
		
		The chiral symmetric spectrum is easily computed as $E_{{\boldsymbol{\eta},\omega} } = \pm |f_{\boldsymbol{\eta},\omega}| $ and is found to be gapless in all three $\omega$-sectors. A part from the property $f^*_{\boldsymbol{\eta},\omega} = f_{-\boldsymbol{\eta},\omega^*}$, implied by the time reversal symmetry of the whole lattice, the spectrum obeys $|E^i_{\boldsymbol{\eta},\omega}| = |E^j_{\boldsymbol{\eta},\omega^*}|\;(i,j = 1,2)$, which implies band degeneracy if we had put the two non-trivial $C$-irreps in direct sum. 
		Here, we remark a difference with respect to the case in \cref{sec:Z2F2} where the degeneracy appeared \emph{within} the non-trivial irrep.
		This difference is of no concern, since the theory requires only the representation $\widecheck{\mathcal{T}}_{\boldsymbol{\eta}}$ to be \emph{regular} while $\widecheck{\mathcal{T}}_{\boldsymbol{\eta},\omega}$ does not have to, see \cref{prop: reps_equiv}. As a consequence, since the represented group $S_3$ has one irrep of dimension two,  we get the band degeneracy only if we put in direct sum the irreps of $C$ shown in \cref{T_Z3F2}.
		
		\section{Theorems}
		
		In the following theorems we denote by $\prod_{a=1}^{n} c_a$ the string $c_1c_2\cdots$ and by $\prod_{a=1}^{'n} c_a$ the one with reversed order $c_{n}c_{n-1}\cdots$. 
		We use the notation $c^{(g)} = \bar g \,c \,g$ for conjugations where the overbar denotes inversion.
		
		We shall consider the basis for the commutator group $F'_n$ of the free group $F_n$ given by\cite{Tom03}
		\begin{equation} \label{Fprime_base}
			\mathcal{F}'_n = \{c_{ij}^{(g_\mathbf{r})} : 1 \leq i < j \leq n, g_\mathbf{r} \in T^{ab}_n \}
		\end{equation}
		It is a consequence of the universality of free groups that $C$ is a quotient group of $F'$. Therefore, $\mathcal{F}'_n$ can be used as a (redundant) basis for $C$ in a CS-lattice.
		
		\subsection{Group $P$ and standard CS-lattice periodicity}
		
		\begin{proposition} \label{prop: links}  \label{Fully checked}
			The links defined in \cref{link_main} can be expressed explicitly in the basis $\mathcal{F}'_n$ (see \cref{Fprime_base}):
			\begin{itemize}
				\item Assuming $\lambda = i \in \{1,2,\dots,n\}$, we have		
				\begin{equation} \label{link}
					l_{\mathbf{r},i} =  \prod_{j=1}^{i-1} \!\vphantom{a}' \prod^{\,|r_j|-1}_{p=0} \theta(r_j)\bar c_{ji}^{\,\left(j^p g_{\mathbf{r}_{\downarrow \,j+1}}\right)} +  \theta(-r_j) c_{ji}^{\,\left(\bar j^{p+1} g_{\mathbf{r}_{\downarrow \,j+1}}\right)}
				\end{equation} 
				where we make use of the Heaviside $\theta$ function and we denote by $\mathbf{r}_{\downarrow m}$ the vector that retains all values of the components of $\mathbf{r}$ down to the $m$-th one while the other ones are set to zero.
				\item For a generic $\lambda$, backward links are computed from the forward ones as
				\begin{equation} \label{l_back}
					l_{\mathbf{r},\bar \lambda} = \overline{l_{\mathbf{r -r}_\lambda,\lambda}}
				\end{equation}
				\item For a generic $\lambda$, an expression in terms of the generators $\{1,2,\dots,n\}$ is obtained by combining \cref{link} and the recursive formula
				\begin{equation} \label{link_rec_in_prop}
					l_{\mathbf{r},\lambda} = l_{\mathbf{r + \lambda}^{(RM)},\lambda^{(L)}} l_{\mathbf{r},\lambda^{(RM)}}
				\end{equation}
				where we split $\lambda = \lambda^{(L)}\lambda^{(RM)}$ to highlight its rightmost (RM) element (taken with power one) and the leftover (L) part.
			\end{itemize}
		\end{proposition}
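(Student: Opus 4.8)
The plan is to prove \cref{link}--\cref{link_rec_in_prop} first as identities \emph{inside the free group} $F_n$, where by the cited basis theorem $\mathcal{F}'_n$ of \cref{Fprime_base} is an honest free basis of $F'_n$, and then push everything down along the canonical surjection $F_n\twoheadrightarrow G$, $x_i\mapsto i$. This surjection carries the Schreier transversal $\{x_1^{r_1}\cdots x_n^{r_n}\}$ of $F'_n$ in $F_n$ onto the transversal $\{g_{\mathbf{r}}\}$ of \cref{f_decomp} and therefore carries the free-group link $\overline{g_{\mathbf{r}+\mathbf{r}_\lambda}}\,\lambda\,g_{\mathbf{r}}\in F'_n$ to the CS-lattice link $l_{\mathbf{r},\lambda}\in C$ of \cref{link_main}; since $C$ is a quotient of $F'_n$, an expansion valid in $\mathcal{F}'_n$ for $F_n$ remains valid (now redundantly) for $C$. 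So it suffices to work in $F_n$.

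Two of the three items are one-liners. For \cref{l_back}: the abelianization sends $\bar\lambda\mapsto-\mathbf{r}_\lambda$, so $l_{\mathbf{r},\bar\lambda}=\overline{g_{\mathbf{r}-\mathbf{r}_\lambda}}\,\bar\lambda\,g_{\mathbf{r}}$, and $l_{\mathbf{r}-\mathbf{r}_\lambda,\lambda}=\overline{g_{\mathbf{r}}}\,\lambda\,g_{\mathbf{r}-\mathbf{r}_\lambda}$ is exactly the inverse of this. For \cref{link_rec_in_prop}: writing $\lambda=\lambda^{(L)}\lambda^{(RM)}$ and inserting the identity $g_{\mathbf{r}+\mathbf{r}_{\lambda^{(RM)}}}\overline{g_{\mathbf{r}+\mathbf{r}_{\lambda^{(RM)}}}}$ into $\overline{g_{\mathbf{r}+\mathbf{r}_\lambda}}\,\lambda^{(L)}\lambda^{(RM)}\,g_{\mathbf{r}}$ splits it (using $\mathbf{r}_\lambda=\mathbf{r}_{\lambda^{(L)}}+\mathbf{r}_{\lambda^{(RM)}}$) as the product of $l_{\mathbf{r}+\mathbf{r}_{\lambda^{(RM)}},\lambda^{(L)}}$ and $l_{\mathbf{r},\lambda^{(RM)}}$. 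Combined with \cref{link} for positive generators and \cref{l_back} for their inverses, an induction on the word length of $\lambda$ then handles a generic $\lambda$.

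The real content is \cref{link}. The key reduction is a \emph{commute-through} identity: by repeatedly inserting $w\,\bar w$ one pulls the leading $i$ in $i\,g_{\mathbf{r}}$ to the right, past each block $x_j^{r_j}$ with $j<i$, obtaining $i\,g_{\mathbf{r}}=g_{\mathbf{r}+\mathbf{r}_i}\cdot\big[\,i\,,\,x_1^{r_1}\cdots x_{i-1}^{r_{i-1}}\,\big]^{(g_{\mathbf{r}_{\downarrow i}})}$, hence
\[
 l_{\mathbf{r},i}=\big[\,i\,,\,P\,\big]^{(g_{\mathbf{r}_{\downarrow i}})},\qquad P=x_1^{r_1}\cdots x_{i-1}^{r_{i-1}},\quad g_{\mathbf{r}_{\downarrow i}}=x_i^{r_i}x_{i+1}^{r_{i+1}}\cdots x_n^{r_n}.
\]
One then expands the iterated commutator with the identity $[a,bc]=[a,c]\,[a,b]^{(c)}$, associating from the right, to get a \emph{reversed} ordered product over $j=i-1,\dots,1$ of the factors $[\,i,x_j^{r_j}\,]$, the $j$-th one conjugated by $x_{j+1}^{r_{j+1}}\cdots x_{i-1}^{r_{i-1}}$; conjugating the whole expression by $g_{\mathbf{r}_{\downarrow i}}$ merges that partial word into $g_{\mathbf{r}_{\downarrow\,j+1}}$, matching the outer $\prod'$ of \cref{link}. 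Finally, applying the same identity to get $[\,i,x_j^{r_j}\,]=\prod_{p=0}^{r_j-1}[\,i,x_j\,]^{(x_j^{p})}$ when $r_j>0$, and the inverse-argument identity $[a,u^{-1}]=\big([a,u]^{-1}\big)^{(u^{-1})}$ when $r_j<0$, together with the elementary $[\,i,x_j\,]=\overline{c_{ji}}$, reproduces the inner product with its $j^p$ (resp. $\bar j^{\,p+1}$) conjugators and the $\bar c_{ji}$ (resp. $c_{ji}$) alternative exactly as written.

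The only place real care is needed -- and where an index slip is easiest -- is this last combinatorial bookkeeping: keeping the outer product in reversed order while the inner one is not, correctly tracking which partial word $x_{j+1}^{r_{j+1}}\cdots x_{i-1}^{r_{i-1}}$ conjugates the $j$-th factor before it is absorbed into $g_{\mathbf{r}_{\downarrow\,j+1}}$, and handling the sign split $r_j\gtrless 0$ that produces both the $\bar c_{ji}$ versus $c_{ji}$ dichotomy and the shift $x_j^{p}\to x_j^{-(p+1)}$ in the conjugating element. None of this is conceptually hard; the free-group reduction, \cref{l_back}, and \cref{link_rec_in_prop} are essentially immediate.
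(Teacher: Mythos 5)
Your proposal is correct and follows essentially the same route as the paper's proof: you push the generator $i$ through $g_{\mathbf{r}}$ (your commute-through identity $l_{\mathbf{r},i}=[\,i,P\,]^{(g_{\mathbf{r}_{\downarrow i}})}$ is just a compact packaging of the paper's step-by-step accumulation of the factors $\bar c_{j^{r_j}i}$ conjugated by the trailing word), expand along powers of $j$ with the same commutator identity, treat $r_j<0$ via the inverse-argument relation (the paper's $c_{\bar j i}=\bar c_{ji}^{(\bar j)}$), and obtain \cref{l_back} and \cref{link_rec_in_prop} by the same inversion and splitting of $\lambda$. The preliminary reduction to the free group is a harmless extra layer; the identities are valid in $G$ directly, which is how the paper proceeds.
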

		
		\begin{proof} 
			We recall the definition of the links from \cref{link_main}: $l_{\mathbf{r},i} = \overline{\prod_{j=1}^n j^{r_j+\delta_{ij}}} \;i\, \left(\prod_{j=1}^n j^{r_j}\right) $.  
			Let us focus on the first terms $x = i \left(\prod_{j=1}^n j^{r_j}\right)$. Moving $i$ on the right of its adjacent $j$-term one at a time until it meets the term $j\equiv i$ we get $x= 1^{r_1} i\, \bar c_{1^{r_1}i} \left(\prod_{j=2}^{n} j^{r_j}\right) =  1^{r_1} i \left(\prod_{j=2}^{n} j^{r_j}\right) \bar c_{1^{r_1}i}^{\left(\prod_{j=2}^{n} j^{r_j}\right)} \allowbreak \rightarrow  \left(\prod_{j=1}^{n} j^{r_j + \delta_{ji}}\right) \prod_{j=1}^{'i-1}\bar c_{j^{r_j}i}^{\left(\prod_{k=j+1}^{n} k^{r_k}\right)} $. Therefore, \[l_{\mathbf{r},i} = \prod_{j=1}^{i-1} \!\vphantom{a}'\;\bar c_{j^{r_j}i}^{\,\left(\prod_{k=j+1}^{n} k^{r_k}\right)}\].
			
			If $r_j > 0$ We can decompose the commutators in the last product along the generator $j$ using the identity
			\begin{equation}
				c_{j^r i} =  c_{ji}^{(j^{r -1})} c_{j^{r-1}i}
			\end{equation}
			multiple times till exhaustion to get $c_{j^r i} = \prod'^{\,r_j-1}_{p=0} c^{(j^p)}_{j\,i}$. After substitution of this expression into the one for $l$, and using the definition of $\mathbf{r}_{\downarrow m}$ we get \cref{link} (assuming $r_j > 0$). The case with $r_j < 0$ is very similar. The decomposition above still holds tih but with $j$ replaces with $\bar j$. We get the identity $c_{\bar j^r i} = \prod'^{\,|r_j|-1}_{p=0} c^{(\bar j^p)}_{\bar j\,i}$. Since $c_{\bar j\,i} = \bar c_{j\,i}^{(\bar j)}$ we can rewrite the previous identity as $c_{\bar j^r i} = \prod'^{\,|r_j|-1}_{p=0} \bar c^{(\bar j^{p+1})}_{j\,i}$. Thus we lend to the full result of \cref{link}.
			
			If the shift is backward, $\bar \lambda$, the formula \cref{l_back} is easily retrieved by noticing, from $\cref{link_main}$, that $\overline{l_{\mathbf{r},\lambda}} =\overline{g_{\mathbf r}} \, \bar \lambda \,g_{{\mathbf{r+r}}_\lambda} = l_{\mathbf{r + r}_\lambda,\bar \lambda}$.
			
			The recursive formula is easily obtained using the definition of links in \cref{link_main} and splitting the central $\lambda$ element in two parts.	
		\end{proof}
		
		\begin{corollary} \label{coro: links}
			
			At fixed direction $i$ the related links can be constructed recursively dimension by dimension on the abelian support, speeding up the computation, by means of
			
			\begin{alignat}{3}\label{links_recur}
				l_{\mathbf{r}_{\downarrow m},i} =& \;\;  l_{\mathbf{r}_{\downarrow m+1},i}\,\prod_{p=0}^{r_m-1} c_{m i}^{\left( m^p g_{\mathbf{r}_{\downarrow \,m+1}} \right)}  \quad &(i> m \geq 1)
			\end{alignat}
			where it holds
			\begin{equation} \label{triv_links}
				l_{\mathbf{r}_{\downarrow m},i} = e \quad (i \leq m).
			\end{equation}
			
		\end{corollary}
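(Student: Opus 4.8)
The plan is to read both displayed identities directly off the closed-form expression \cref{link} for $l_{\mathbf{r},i}$ obtained in \cref{prop: links}; nothing beyond that formula is needed, and the whole argument reduces to tracking how the order-reversed product over the spatial index $j$ behaves under the truncation $\mathbf{r}\mapsto\mathbf{r}_{\downarrow m}$.

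I would first dispose of \cref{triv_links}. In \cref{link} the outer product ranges over $j=1,\dots,i-1$, so when $i\le m$ every index occurring there satisfies $j\le i-1<m$; for the truncated vector $\mathbf{r}_{\downarrow m}$ all such components vanish, hence each inner product $\prod_{p=0}^{|r_j|-1}(\cdots)$ is empty and the entire word collapses to the identity. (For $i=1$ the outer product is already empty.)

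For \cref{links_recur} I would write \cref{link} side by side for $l_{\mathbf{r}_{\downarrow m},i}$ and for $l_{\mathbf{r}_{\downarrow m+1},i}$, with $i>m$. In the first word the truncation at level $m$ kills every factor with $j<m$, so the reversed product effectively runs over $j=m,m+1,\dots,i-1$; since the product is order-reversed, the $j=m$ factor sits at the far right, and I would peel it off, writing $l_{\mathbf{r}_{\downarrow m},i}$ as a leading word over $j=m+1,\dots,i-1$ times this trailing $j=m$ factor. The point that needs care is that each retained factor with $j\ge m+1$ is conjugated by $j^{p}\,g_{\mathbf{r}_{\downarrow\,j+1}}$ and $\mathbf{r}_{\downarrow\,j+1}$ involves only components of index $\ge j+1\ge m+2>m$; consequently these factors, and their order, are literally unchanged whether the ambient vector is $\mathbf{r}_{\downarrow m}$ or $\mathbf{r}_{\downarrow m+1}$, since the two differ only in the component $m$, which is invisible to them. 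Hence the leading word is exactly the right-hand side of \cref{link} for $l_{\mathbf{r}_{\downarrow m+1},i}$, and what survives is precisely the $j=m$ contribution $\prod_{p=0}^{|r_m|-1}(\cdots)^{(m^{p} g_{\mathbf{r}_{\downarrow\,m+1}})}$, i.e. the trailing product displayed in \cref{links_recur} when $r_m>0$ (the case $r_m<0$ being the other branch of the $\theta$-combination in \cref{link}, and $r_m=0$ giving an empty product). Feeding this recursion starting from the base case $l_{\mathbf{r}_{\downarrow i},i}=e$ (supplied by \cref{triv_links}) and working downward through $m=i-1,i-2,\dots,1$, with $l_{\mathbf{r}_{\downarrow 1},i}=l_{\mathbf{r},i}$, rebuilds the link one dimension at a time, which is the claimed shortcut.

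The statement is essentially a structural observation about \cref{link}, so there is no genuine mathematical obstacle; the only thing to watch is the bookkeeping --- remembering that the primed product is order-reversed, so that ``the last factor'' is the $j=m$ term, and checking that the conjugating words $g_{\mathbf{r}_{\downarrow\,j+1}}$ of the surviving factors truly do not feel the truncation at level $m$.
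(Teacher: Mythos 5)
Your argument is correct and is exactly the route the paper intends: the corollary carries no separate proof precisely because both identities are direct readoffs of \cref{link} in \cref{prop: links}, using that the truncation $\mathbf{r}\mapsto\mathbf{r}_{\downarrow m}$ annihilates every $j<m$ factor (giving \cref{triv_links} for $i\le m$) and that the conjugators $j^{p}g_{\mathbf{r}_{\downarrow\,j+1}}$ of the surviving $j\ge m+1$ factors are blind to the $m$-th component, so peeling off the rightmost ($j=m$) factor of the order-reversed product yields \cref{links_recur}. The one caveat, which is an inconsistency internal to the paper rather than a gap in your reasoning, is that the $j=m$ factor read off from \cref{link} involves $\bar c_{mi}$ (and, for $r_m<0$, the other $\theta$-branch), whereas \cref{links_recur} displays $c_{mi}$, so the trailing product matches the proposition only up to that inversion/ordering convention.
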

		
		\begin{lemma} \label{lemma: prod} 
			Every element of $C$ can be written as a product of links.  
		\end{lemma}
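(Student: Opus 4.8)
The plan is to exhibit $C$ as generated, in the ordinary group‑theoretic sense, by links, and then upgrade ``generated by'' to ``equal to the set of finite products of'' using that the link set is closed under inversion. First I would recall from \cref{sec: links} that the basis $\mathcal{F}'_n$ of \cref{Fprime_base} generates $C$: this is exactly the statement that $C$ is a quotient of the commutator subgroup $F'_n$ of the free group (surjections carry commutator subgroups onto commutator subgroups), so the images of a free basis of $F'_n$ generate $C$. It therefore suffices to write each element $c_{ij}^{(g_\mathbf{r})}$ of $\mathcal{F}'_n$ as a product of links.

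Next I would observe that $c_{ij}^{(g_\mathbf{r})}$ is \emph{itself} a link. Applying \cref{link_main} with $\lambda = c_{ij}$, and noting that $c_{ij}$ is a commutator and hence lies in $C$, so that its abelian shift vanishes, one gets $l_{\mathbf{r},c_{ij}} = \overline{g_\mathbf{r}}\,c_{ij}\,g_\mathbf{r} = c_{ij}^{(g_\mathbf{r})}$, exactly as already observed in \cref{sec: links}. (If one insists that links only carry directions $\lambda$ in the hopping/generating set, then I would instead expand $c_{ij}$ as a word in that set and peel letters off one at a time with the recursion \cref{link_rec_in_prop}, reaching the single‑generator links of \cref{link}; the reduction terminates because the rightmost‑letter word length strictly decreases, and \cref{l_back} supplies the inverse‑generator links.) In either reading, every element of $\mathcal{F}'_n$ is a product of links.

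Finally, I would close the argument: by \cref{l_back} the inverse of any link is again a link, so the set $L$ of finite products of links is closed under products and inversion, i.e.\ $L$ is a subgroup of $C$ (each link lies in $C$ by definition). Since $L$ contains $\mathcal{F}'_n$, which generates $C$, we conclude $L = C$, as asserted. The only step needing a little care is the termination of the word‑length induction in the parenthetical alternative route; along the main route there is essentially nothing to prove beyond correctly invoking \cref{link_main} together with the fact that $\mathcal{F}'_n$ generates $C$, so I expect no genuine obstacle — this lemma is a bookkeeping consequence of \cref{prop: links}.
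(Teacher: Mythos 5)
Your proposal is correct, and the substantive part of it is the paper's own argument. One caution on your ``main route'': in \cref{sec: links} a link is by definition taken in a direction $\lambda\in\Lambda$, and the way \cref{lemma: prod} is used in the proof of \cref{theo: lr period} (to upgrade ``$g_{\mathbf m}$ commutes with every link $l_{\mathbf r,i}$'' to ``$g_{\mathbf m}\in\mathcal{C}_G(C)$'') requires exactly that reading, so observing $c_{ij}^{(g_{\mathbf r})}=l_{\mathbf r,c_{ij}}$ with $\lambda=c_{ij}\notin\Lambda$ does not by itself prove the lemma — it is your parenthetical peeling argument that does the work. That argument coincides with the paper's proof: the paper takes any $c\in C$, notes $c=l_{\mathbf 0,c}$ from \cref{link_main}, writes $c$ as a word $\prod_k\lambda_k$ in the generating set $\Lambda$, and applies the recursion \cref{link_rec_in_prop} to obtain $c=\prod_k l_{\mathbf r_k,\lambda_k}$, with \cref{l_back} covering inverse letters. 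Your detour through the basis $\mathcal{F}'_n$ of \cref{Fprime_base} plus the closure-under-products-and-inverses argument is harmless but unnecessary, since the peeling applies directly to an arbitrary element of $C$ rather than only to the conjugated commutators; otherwise the two proofs are the same.
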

		\begin{proof}

			Consider \cref{link_main} substituting $\lambda$ with a generator $c$ of $C$. We get $l_{\mathbf{r},c} = \overline {g_\mathbf{r}} \,c\, {g_\mathbf{r}}$ and, in particular, 
			$$
			c = l_{\mathbf{0},c}.
			$$ 
			We can now express $c$ explicitly in terms of the CS-lattice generators $\Lambda$ (which generate $G$ as well) as $c = \prod_k \lambda_k$. Thus, in analogy with the recursive formula \cref{link_rec_in_prop}, we can write $c =l_{\mathbf{0},\prod_k \lambda_k} = \prod_k l_{\mathbf{r}_k,\lambda_k}$ for a certain set of positions $\mathbf{r}_k$ which we do not need to specify. 
		\end{proof}

		\begin{proposition} \label{prop: vV} \label{double checked}
			
			For each group $G$ with finite commutator group $C$ there exist a vector $v$ and a matrix $V$ with positive integer elements with $1\leq v_{i},\! V_{ij}\leq |C|,\;(1\leq i,j\leq n)$ such that
			\begin{itemize}
				\item[a)]$c_{i^{V_{ij}}j}=e\quad\quad i,j=1,\dots n$  
				\item[b)] $i^{v_i} \in \mathcal{C}_G(C)\quad\quad i=1,\dots n$
			\end{itemize}
			where $\mathcal{C}_G(C)$ is the centralizer of $G$ in $G$.
			
			The minimal possible values of $v_{i}, V_{ij}$ can be found with an algorithm that does not require \emph{a priori} knowledge of the center group and $\mathcal{C}_G(C)$.
		\end{proposition}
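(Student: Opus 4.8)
The plan is to reduce each of (a) and (b) to an orbit argument on the finite set $C$, which simultaneously delivers the algorithmic claim since no step will ever invoke $Z$ or $\mathcal C_G(C)$. For (a) I would start from the commutator identity $c_{a,xy}=c_{a,y}\,c_{ax}^{(y)}$ (immediate from $\bar x a x=a\,c_{ax}$, exactly as in the proof of \cref{prop: links}) and specialise it by writing $i^{k+1}=i^{k}i$, obtaining the recursion
\[
c_{j,\,i^{k+1}} \;=\; c_{j,i}\;\tau_i\!\bigl(c_{j,\,i^{k}}\bigr),\qquad c_{j,\,i^{0}}=e ,
\]
where $\tau_i(\,\cdot\,)=(\,\cdot\,)^{(i)}$ is conjugation by $i$, a genuine automorphism of $C$ because $C\lhd G$. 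Thus $k\mapsto c_{j,\,i^{k}}$ is the forward orbit of $e$ under the permutation $f\colon C\to C$, $f(c)=c_{j,i}\,\tau_i(c)$ (a composition of the automorphism $\tau_i$ with a left translation inside $C$). Hence the orbit of $e$ is a cycle; I let $V_{ij}$ be its length, so $1\le V_{ij}\le|C|$ and $c_{j,\,i^{V_{ij}}}=e$, whence $c_{i^{V_{ij}}j}=\overline{c_{j,\,i^{V_{ij}}}}=e$ because a reversed commutator equals the inverse commutator. This $V_{ij}$ is automatically the minimal power realising (a), since inside a permutation cycle the base point does not recur before a full period, and the algorithm is just to iterate $f$ from $e$: its only inputs are the element $c_{j,i}\in C$ and the conjugation action $\tau_i$ on $C$, both read off from a presentation of $G$ (or from \cref{prop: links,coro: links}).

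For (b), note that $i^{v}\in\mathcal C_G(C)$ is equivalent to $\tau_i^{\,v}=\mathrm{id}_C$, so the minimal admissible $v_i$ is the order of $\tau_i$ in $\mathrm{Aut}(C)$; this is found by iterating $\tau_i$ until it fixes a chosen generating set of $C$, again without ever referring to $\mathcal C_G(C)$ itself. What remains is the bound $\mathrm{ord}(\tau_i)\le|C|$. When $C$ is cyclic — the situation of every example in \cref{sec: example} — this is immediate: if $c$ generates $C$ then $\tau_i^{k}(c)=c$ already forces $\tau_i^{k}=\mathrm{id}_C$, so the $\tau_i$-orbit of $c$ has size exactly $\mathrm{ord}(\tau_i)$ and is contained in $C$. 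For general finite $C$ one invokes the elementary fact that an automorphism of a finite group of order $N$ has order at most $N$; a cheap self-contained substitute, giving only $\mathrm{ord}(\tau_i)\le|C|^{\,\mathrm{rank}(C)}$, is to follow the $\tau_i$-orbit of a generating tuple inside $C^{\mathrm{rank}(C)}$, but pinning the constant down to $|C|$ is precisely that automorphism-order statement.

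The one place I expect to need genuine care is therefore exactly this sharp bound $v_i\le|C|$ for non-cyclic $C$; the rest — the commutator identity, the permutation property of $f$, the reversed-commutator relation, and the extraction of $c_{j,i}$ and $\tau_i$ from the defining data of $G$ — is routine. The clause ``without a priori knowledge of the center group and $\mathcal C_G(C)$'' then comes for free: the iterations above only ever \emph{detect} when a power of $i$ has already entered those subgroups, rather than presupposing them.
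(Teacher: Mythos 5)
Your proof is correct and follows essentially the same route as the paper: part (a) is the same finiteness/recurrence argument (your bijection-orbit formulation of $k\mapsto c_{j,\,i^{k}}$ via $f(c)=c_{j,i}\,\tau_i(c)$ is just a cleaner packaging of the paper's pigeonhole-plus-cancellation step), and part (b) likewise computes $v_i$ as the order of conjugation by $i$ acting on $C$, obtained from the periods of a generating set exactly as in the paper's $\mathrm{m.c.m.}$ construction, with the same conclusion about minimality and the algorithm needing only the presentation data. The one point you flag --- justifying the sharp bound $v_i\le|C|$ for non-cyclic $C$, which requires the nontrivial fact that an automorphism of a finite group of order $N$ has order at most $N$ --- is genuine, but note that the paper's own proof silently has the same issue, since an $\mathrm{l.c.m.}$ of generator-periods each $\le|C|$ is not automatically $\le|C|$.
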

		
		\begin{proof}
			These properties are a consequence of the finiteness of the group $C$. Indeed, by finiteness and the pigeonhole principle for each $i$ and $j$ there exist positive integers $1\leq x^*,y^*\leq |C|$ s.t. $$c_{i^{x^*+y^*}j} = c_{i^{y^*}j}$$ i.e. one element of the group $|C|$ must repeat when scrolling through the set $\{c_{i^{x}j}\}_x$ from $x=0$ to increasing values. By writing the commutator explicitly in the equation above, simple algebraic simplifications lead to $c_{i^{x^*}j} = e$. Then one sets $V_{ij}= x^*$. The scrolling procedure is important to guarantee that $x^*$ is the minimum periodicity one can get. This proves point a).
			
			Consider now a generating set of $C$. It embeds into $G$ as a subset $S$ whose elements are products of commutators. We use the same procedure as above to find, for each $s\in S$, the element in the set $M_s=\{s^{(i^{x_s})}\}_{x_s}$ for which $s^{(i^{x_s})} = s$. Finally, one defines the quantity $v_{i} = \mathrm{m.c.m.} \;\{x^*_s\}_{s\in S}$ which happens to satisfy point b). Clearly, since $i^{v_{i}}$ commutes with all generators of $C\leq G$ it is in $\mathcal{C}_G(C)$. As with the definition of $V_{ij}$, $v_i$ is minimal; moreover, it is independent on the specific chosen set of generators of $C$ and its embedding. Indeed, if we had found another $v'\geq v$ (we omit the labels here) with a set $T\neq S$ then we would have $t^{(i^{v'})} = t$. However, since $T$ is generated by $S$, we have also $t^{(i^{v})} = t$. Thus, since $v'$ was supposed to be minimal, it must hold $v'=v$.
		\end{proof}
		
		\begin{theorem} \label{theo: lr period} 
			(Periodicity of the infinite CS-crystal) 
			\begin{enumerate}
				
				\item The periodicity of the infinite CS-crystal, moving along the abelian support, is determined by the group
				\begin{flalign} \label{Pl_def}
					P &= X \cap \mathcal{C}_G(C), \nonumber\\
					X &= \{g_\mathbf{m} \in T^{ab}_n: [\, \prod_{j=1}^{k-1} j^{m_j}, k\,] = e, \nonumber\\
					& \quad\quad m_j \in \mathbb{Z}\,,\, \,k \in \{1,2,\dots,n\}\}
				\end{flalign}
				where $\mathcal{C}_G(C)$ is the centralizer of $C$ in $G$ and is independent from the set of the crystal generators $\Lambda$.

				\item $P$ is isomorphic to $\mathbb{Z}^{n}$.

				\item $P$ always contains the non-trivial subgroup: 
				\begin{flalign} 
					P_{\mathrm{straight}}= &\langle \{i^{m_i} : m_i=\mathrm{m.c.m.} \;\left(\{V_{ij}|n\geq j>i\}\cup \{v_{i}\}\right),\nonumber\\ 
					& \quad i\in\{1,\dots,n\}\,\}\rangle.
				\end{flalign} 
			\end{enumerate}

		\end{theorem}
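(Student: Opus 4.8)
The three parts are handled in order, the first carrying the essential content.

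\emph{Part 1.} The first move is a reduction to the generator directions. Unwinding the recursion \eqref{link_rec_in_prop}, for any word $\lambda=s_1\cdots s_p$ in $\{1,\dots,n\}^{\pm1}$ the link $l_{\mathbf r,\lambda}$ is an ordered product of single-letter links $l_{\mathbf r+\mathbf s_j,s_j}$ at positions shifted by fixed vectors $\mathbf s_j$ depending only on $\lambda$; together with \eqref{l_back} this shows that $l_{\mathbf r+\mathbf m,\lambda}=l_{\mathbf r,\lambda}$ for all $\mathbf r$ holds for every $\lambda\in\Lambda$ if and only if it holds for every $\lambda\in G$ (using that $\Lambda$ generates), in particular for the generators, which already gives the asserted independence of $P$ from $\Lambda$. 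Writing $l_{\mathbf r,i}=\overline{g_{\mathbf r+\mathbf e_i}}\,i\,g_{\mathbf r}$, the periodicity relations in the generator directions are equivalent to the following recursion for $a_{\mathbf r}:=g_{\mathbf r+\mathbf m}\,\overline{g_{\mathbf r}}$ (note $a_{\mathbf 0}=g_{\mathbf m}$):
\[
a_{\mathbf r+\mathbf e_i}=i\,a_{\mathbf r}\,\bar i\qquad(i\in\{1,\dots,n\},\ \mathbf r\in\mathbb Z^n).
\]
Taking $c\in C\subseteq G$ as a hopping and evaluating periodicity at $\mathbf r=\mathbf 0$ gives $c^{(g_{\mathbf m})}=l_{\mathbf m,c}=l_{\mathbf 0,c}=c$ for all $c$, so $g_{\mathbf m}\in\mathcal C_G(C)$ is necessary. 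Conversely, when $g_{\mathbf m}\in\mathcal C_G(C)$ one checks directly — using $i\,g_{\mathbf r}=g_{\mathbf r+\mathbf e_i}\,l_{\mathbf r,i}$ and that $l_{\mathbf r,i}\in C$ commutes with $g_{\mathbf m}$ — that $b_{\mathbf r}:=g_{\mathbf r}\,g_{\mathbf m}\,\overline{g_{\mathbf r}}$ also solves this recursion, with $b_{\mathbf 0}=g_{\mathbf m}$; since a solution is determined by its value at $\mathbf 0$, periodicity amounts to $g_{\mathbf m}\in\mathcal C_G(C)$ together with $a=b$, i.e.\ $g_{\mathbf r+\mathbf m}=g_{\mathbf r}g_{\mathbf m}$ for all $\mathbf r$. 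Specialising to $\mathbf r=r_k\mathbf e_k$ and sliding $k^{r_k}$ rightwards through $g_{\mathbf m}$ forces $[\prod_{j<k}j^{m_j},\,k]=e$ for every $k$, that is $g_{\mathbf m}\in X$. The reverse of this last equivalence — that these finitely many commutation conditions, together with $g_{\mathbf m}\in\mathcal C_G(C)$, already imply $g_{\mathbf r+\mathbf m}=g_{\mathbf r}g_{\mathbf m}$ for all $\mathbf r$ — I would obtain by inserting the identities of \cref{prop: vV} and \cref{coro: links} into the explicit link formula \eqref{link} and peeling the outer conjugations off dimension by dimension, $g_{\mathbf m}\in\mathcal C_G(C)$ neutralising them; this is the main technical obstacle, the rest being bookkeeping. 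Altogether $\mathbf m$ is a periodicity vector if and only if $g_{\mathbf m}\in X\cap\mathcal C_G(C)=P$.

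\emph{Part 2.} The set $\mathcal P$ of periodicity vectors is visibly a subgroup of $\mathbb Z^n$, hence free abelian of rank at most $n$. For $\mathbf m,\mathbf m'\in\mathcal P$, using $h(\mathbf m,\mathbf m')=l_{\mathbf m',g_{\mathbf m}}$, the $\mathbf m'$-periodicity of the link pattern, and $l_{\mathbf 0,g_{\mathbf m}}=h(\mathbf m,\mathbf 0)=e$, one gets $h(\mathbf m,\mathbf m')=e$; hence $g_{\mathbf m}g_{\mathbf m'}=g_{\mathbf m+\mathbf m'}$ and $\overline{g_{\mathbf m}}=g_{-\mathbf m}$, so $P$ is indeed a subgroup of $G$ and $\mathbf m\mapsto g_{\mathbf m}$ is an isomorphism $\mathcal P\xrightarrow{\ \sim\ }P$. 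That $\mathcal P$ has full rank $n$ is exactly Part 3, whence $P\cong\mathbb Z^n$.

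\emph{Part 3.} For each $i$ take $\mathbf m=m_i\mathbf e_i$ with $m_i=\mathrm{m.c.m.}\big(\{V_{ij}:j>i\}\cup\{v_i\}\big)$ and verify $g_{\mathbf m}=i^{m_i}\in X\cap\mathcal C_G(C)$ via \cref{prop: vV}: since $v_i\mid m_i$ and $i^{v_i}\in\mathcal C_G(C)$ we get $i^{m_i}\in\mathcal C_G(C)$; and for every $k$ the prefix $\prod_{j<k}j^{m_j}$ is trivial unless $i<k$, in which case it equals $i^{m_i}$ and $[i^{m_i},k]=e$ because $V_{ik}\mid m_i$ and $c_{i^{V_{ik}}k}=[i^{V_{ik}},k]=e$. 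Hence each $i^{m_i}\in P$, and since $P$ is a subgroup it contains $P_{\mathrm{straight}}=\langle\{i^{m_i}\}_i\rangle$, which is nontrivial and of rank $n$ as the $i^{m_i}$ map to the linearly independent vectors $m_i\mathbf e_i\in\mathbb Z^n$. Fed back into Part 2, this gives $P\cong\mathbb Z^n$ and completes the proof.
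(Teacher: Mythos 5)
Your proof is structurally the paper's proof, but the one step you explicitly defer is a genuine hole as written: you never establish that the commutation conditions defining $X$ force $g_{\mathbf{r}+\mathbf{m}}=g_{\mathbf{r}}\,g_{\mathbf{m}}$ for all $\mathbf{r}$, and the route you gesture at (inserting \cref{prop: vV} and \cref{coro: links} into the explicit link formula \cref{link} and ``peeling off conjugations'') is off-target — the integers $V_{ij},v_i$ of \cref{prop: vV} play no role in this implication, and neither does $g_{\mathbf{m}}\in\mathcal{C}_G(C)$. The missing step is exactly the paper's \cref{rr'_eq}, and it is the easy half: it is the same sliding you already used in the forward direction at $\mathbf{r}=r_k\mathbf{e}_k$, applied inductively with the full prefix block. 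From $[1^{m_1},2]=e$ one gets $[1^{m_1},2^{r_2}]=e$, hence $1^{r_1+m_1}2^{r_2+m_2}=1^{r_1}2^{r_2}\,1^{m_1}2^{m_2}$; then $[1^{m_1}2^{m_2},3]=e$ slides the block $1^{m_1}2^{m_2}$ past $3^{r_3}$, giving $1^{r_1}2^{r_2}3^{r_3}\,1^{m_1}2^{m_2}3^{m_3}$, and continuing through $k=n$ yields $g_{\mathbf{r}+\mathbf{m}}=g_{\mathbf{r}}g_{\mathbf{m}}$ using only the relations in \cref{Pl_def} (valid for negative $r_k$ as well, since $[x,k]=e$ implies $[x,k^{r}]=e$ for all $r\in\mathbb{Z}$). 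With that inserted, your chain closes: $b_{\mathbf{r}}=g_{\mathbf{r}}g_{\mathbf{m}}\overline{g_{\mathbf{r}}}$ solves the recursion once $g_{\mathbf{m}}\in\mathcal{C}_G(C)$, so $a=b$ and periodicity follows, and both inclusions $\mathcal{P}\leftrightarrow X\cap\mathcal{C}_G(C)$ are proved.

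Apart from this, your argument parallels the paper's with a few sound (and in places cleaner) reorganizations: deriving $g_{\mathbf{m}}\in\mathcal{C}_G(C)$ by treating $\lambda=c\in C$ as a hopping is \cref{lemma: prod} in disguise (the paper instead derives $l_{\mathbf{r}+\mathbf{m},i}=l_{\mathbf{r},i}^{(g_{\mathbf{m}})}$ in \cref{lR} and then invokes that lemma); your uniqueness argument for the recursion $a_{\mathbf{r}+\mathbf{e}_i}=i\,a_{\mathbf{r}}\,\bar i$ replaces the paper's direct computation; in Part 2 your identity $h(\mathbf{m},\mathbf{m}')=l_{\mathbf{m}',g_{\mathbf{m}}}=l_{\mathbf{0},g_{\mathbf{m}}}=e$ gives closure and inverses more economically than the paper's word manipulations, and your handling of the rank (full rank only via Part 3) is more careful than the paper's appeal to Nielsen--Schreier. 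Part 3 and the $\Lambda$-independence via \cref{link_rec_in_prop} and \cref{l_back} coincide with the paper's treatment.
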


		\begin{proof}
			Firstly we will prove point 1 supposing $\Lambda = \mathcal{F}_n = \{1,2,\dots,n\}$. 
			Let us define $\mathbf{r'} = \mathbf{r} + \mathbf{m}$, with $\mathbf{m}$ a vector upon which
			$$ 
			l_{\mathbf{r},i} = l_{\mathbf{r'},i}.
			$$
			Using the definition of links in \cref{link_main} and evaluating the formula above at $\mathbf{r} = \mathbf{0}$ one gets easily 
			\begin{equation} \label{commut_in_theo}
				[\, \prod_{j=1}^{i-1} j^{m}, i\,] = e.
			\end{equation} 
			
			Consider now $g_{\mathbf{r'} } =  \prod_{j=1}^n j^{r +m}$ appearing in the definition of $l_{\mathbf{r'},i}$. By \cref{commut_in_theo} we have $[\, 1^m, 2\,] = e$, thus $g_{\mathbf{r'} } = 1^r 2^r \,1^m 2^m \prod_{j=3}^{n} j^{r+m}$. As a second step, setting $i=3$ in \cref{commut_in_theo}, we get $g_{\mathbf{r'} } = 1^r 2^r 3^r \,1^m 2^m 3^m \prod_{j=4}^{n} j^{r+m}$. Repeating the steps up to $i=n-1$ we land to the equation
			\begin{equation} \label{rr'_eq}
				g_{\mathbf{r'} } = \left(\prod_{j=1}^n j^r\right) \left(\prod_{j=1}^n j^m \right).
			\end{equation} Since we can factorize analogously also the other product in $l_{\mathbf{r'},i}$, we get 
			\begin{flalign} \label{lR}
				l_{\mathbf{r'},i} &=  \overline{\left(\prod_{j=1}^n j^m\right)}\;\overline{\left(\prod_{j=1}^n j^{r +\delta_{ij}} \right)} i \, \left(\prod_{j=1}^n j^r \right) \left(\prod_{j=1}^n j^m\right)  \nonumber\\
				&= l_{\mathbf{r},i}^{\left(g_{\mathbf m}\right)},
			\end{flalign}
			hence, by $l_{\mathbf{r'},i} = l_{\mathbf{r},i}$ and \cref{lemma: prod} we have the additional constraint $g_{\mathbf m}\in \mathcal{C}_G(C)$. Thus we conclude that $g_\mathbf{m} \in P$. 
			
			To prove any element in $P$ produce periodicity, we can use again \cref{lR} that, together with $g_{\mathbf m} \in \mathcal{C}_G(C)$, implies $l_{\mathbf{r'},i}=l_{\mathbf{r},i}$.
			
			Now, we have to show that the set $P$ is a group. First of all, it is closed under multiplication. Indeed, consider two elements of $P$, $a = g_{\mathbf{a}}$ and $b = g_{\mathbf{b}}$. Define $m_{\uparrow k} = \prod_{j=1}^k j^{m}$. Then we have
			\begin{alignat} {2}
				ab &= a_{\uparrow n} b_{\uparrow n} = \left(\prod_{j=1}^n j^{a}\right) \prod_{j=1}^n j^{b}  \nonumber\\
				&= \left(\prod_{j=1}^{n-1} j^{a}\right)  n^a  \left(\prod_{j=1}^{n-1} j^{b}\right) n^b \nonumber\\
				&=    \left(\prod_{j=1}^{n-1} j^{a}\right)   \left(\prod_{j=1}^{n-1} j^{b}\right)  n^{a+b} = a_{\uparrow n-1} b_{\uparrow n-1} \, n^{a+b}
			\end{alignat}
			where we used \cref{commut_in_theo} with $k=n$. The procedure can be continued till exhaustion in the subscript of $a_{\uparrow k} b_{\uparrow k}$ producing finally $ab = \prod_{j=1}^n j^{a+b}$. That every element has an inverse can be seen similarly:
			\begin{alignat} {2} \label{Pgroupinverse}
				\bar a &= \bar a_{\uparrow n} = \left(\prod_{j=1}^n \!\vphantom{a}' \;\bar j^{a}\right) = n^{-m} \left(\prod_{j=1}^{n-1} \!\vphantom{a}' \;\bar j^{a}\right) \nonumber\\
				&= n^{-m} \;\overline{\prod_{j=1}^{n-1} \;j^{a}} = \;\left(\overline{\prod_{j=1}^{n-1} \;j^{a}} \right)\,n^{-m} = \bar a_{\uparrow n-1} n^{-m}  \nonumber\\
				&= \prod_{j=1}^{n} \;j^{-a}   
			\end{alignat}
			where again we used \cref{commut_in_theo} in the third-to-last step, and the recursion in the last one. Obviously $P$ contain the trivial element, $e$, and the point 1 is proved with a restriction on the set $\Lambda$. To release this constraint, one can use recursively \cref{link_rec_in_prop} to decompose $\lambda$ along the set $\mathcal{F}_n$ and see that the periodicity with the latter set implies periodicity on the new set $\Lambda$. The opposite implication can be seen in the same way since $\Lambda$ is a generating set and we can express $\mathcal{F}_n$ using that set.
			
			To prove point 2 we observe that the set of vectors $\mathbf m$ such that $[\, \prod_{j=1}^{k-1} j^{m_j}, i\,] = e$ for all generators $x^k$ forms a subgroup $M$ of $\mathbb{Z}^{n}$. It follows by  Nielsen–Schreier theorem that $M \cong \mathbb{Z}^{n}$. Hence, since every vector is in one-to-one correspondence with the elements of $P$ and we have the homomorphism $\mathbf{m}_{ab} = \mathbf{m}_a + \mathbf{m}_b$, then the groups are isomorphic. Thus, point 2 is proved.
			
			It is straightforward to verify that $P_{\mathrm{straight}}\leq P$ and is non-trivial. If a power $i^m$ is in the group $X$ (see \cref{Pl_def}) then its minimal exponent is $m=\mathrm{m.c.m.} \;\{V_{ij}|1\leq j< i\}$ ($V_{ij}$ is defined in \cref{prop: vV}). The additional requirement to be in the centralizer implies that $m \propto v_i$, defined in \cref{prop: vV}, which is again a minimal requirement. The non-triviality comes from the fact that the numbers $V_{ij}$ and $v_i$ are finite as shown in \cref{prop: vV}.
		\end{proof}

		\begin{lemma} \label{lem: p_underground_life}
			
			If $p \in P$ then $t\,p = (t\,p)^{ab}$ for any $t \in T^{ab}_n$.
		\end{lemma}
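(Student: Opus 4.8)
The plan is to show that the group element $tp$ already lies in the Schreier transversal $T^{ab}_n$; once this is established the claim is immediate, since the coset decomposition \cref{f_decomp} is unique, so any element of $T^{ab}_n$ equals its own abelian representative (with trivial commutator part $c=e$).

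Write $t = g_{\mathbf r}$ and $p = g_{\mathbf m}$. Since $p \in P \subseteq X$ (recall $P = X \cap \mathcal{C}_G(C)$ from \cref{Pl_def}), the vector $\mathbf m$ satisfies the commutation relations $[\,\prod_{j=1}^{k-1} j^{m_j},\,k\,] = e$ for every $k \in \{1,\dots,n\}$. I would then invoke the normal-form manipulation already carried out in the proof of \cref{theo: lr period} (the passage from \cref{commut_in_theo} to \cref{rr'_eq}): starting from $g_{\mathbf r}\, g_{\mathbf m} = 1^{r_1}1^{m_1}2^{r_2}2^{m_2}\cdots n^{r_n}n^{m_n}$, one uses the $k=2$ relation to push $1^{m_1}$ past $2^{r_2}$, then the $k=3$ relation to push the block $1^{m_1}2^{m_2}$ past $3^{r_3}$, and so on; after $n-1$ such steps one arrives at $tp = g_{\mathbf r}\, g_{\mathbf m} = \big(\prod_{j=1}^{n} j^{r_j}\big)\big(\prod_{j=1}^{n} j^{m_j}\big) = g_{\mathbf{r+m}} \in T^{ab}_n$, as desired.

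The only point that needs a word of care is that this identity $g_{\mathbf r}\, g_{\mathbf m} = g_{\mathbf{r+m}}$ must hold for \emph{arbitrary} $\mathbf r \in \mathbb{Z}^n$, not merely for $\mathbf r = \mathbf 0$. This is exactly what the telescoping argument supplies: at the $k$-th step the block being commuted past $k^{r_k}$ is $\prod_{j<k} j^{m_j}$, which depends only on $\mathbf m$, so the relation $[\,\prod_{j<k} j^{m_j},\,k\,] = e$ (valid because $p\in X$) is enough irrespective of $\mathbf r$. Thus $tp\in T^{ab}_n$ and hence $tp = (tp)^{ab}$. There is no genuine obstacle here: the lemma is essentially a corollary of the group-closure computation for $P$ already performed inside \cref{theo: lr period}, merely applied with a left factor $g_{\mathbf r}$ that need not lie in $P$.
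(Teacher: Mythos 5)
Your proof is correct, and it runs on the same engine as the paper's --- iterated use of the defining relations $[\,\prod_{j<k} j^{m_j},\,k\,]=e$ of the set $X$ in \cref{Pl_def} --- but it takes a more direct route. The paper's own proof first rewrites $t\,p$ as a single inverted word, invokes the inverse formula \cref{Pgroupinverse} for elements of $P$, telescopes on that inverted word and then inverts back; you instead telescope the forward word, i.e.\ you observe that \cref{rr'_eq} (equivalently, the closure computation for $P$ inside the proof of \cref{theo: lr period}) only ever uses the commutation relations attached to the \emph{right} factor, so it applies verbatim with an arbitrary left factor $g_{\mathbf r}\in T^{ab}_n$, giving $t\,p=g_{\mathbf r}\,g_{\mathbf m}=g_{\mathbf{r}+\mathbf{m}}\in T^{ab}_n$ and hence, by uniqueness of the decomposition \cref{f_decomp}, $t\,p=(t\,p)^{ab}$. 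This buys a shorter argument and makes explicit that only $p\in X$, not the centralizer condition in $P=X\cap\mathcal{C}_G(C)$, is needed. One presentational quibble: the equality you ``start from'', $g_{\mathbf r}\,g_{\mathbf m}=1^{r_1}1^{m_1}2^{r_2}2^{m_2}\cdots n^{r_n}n^{m_n}$, is not free --- it is exactly what the commutation steps establish, read from the interleaved word $g_{\mathbf{r}+\mathbf{m}}$ towards $g_{\mathbf r}\,g_{\mathbf m}$; since every step is an equality the chain is reversible and nothing is lost, but the cleaner presentation starts from $g_{\mathbf{r}+\mathbf{m}}$ and commutes the $\mathbf m$-blocks rightward to reach $t\,p$.
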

		
		\begin{proof}
			We first prove the right implication. We have the trivial equalities $t\,p =  \left(\prod^n_{i=1} i^{t_i}\right) \prod^n_{i=1}  i^{p_i} \allowbreak= \left(\prod i^{t}\right)  \overline{ \prod'  i^{- p}}$, where in the last step we removed the obvious subscripts for better readability. By \cref{Pgroupinverse} we may rework the last expression to get $t\,p = \left(\prod i^{t}\right)  \overline{ \prod  i^{- p}} = \overline{    \prod  i^{- p}  \left(\prod' i^{- t}\right)}$. Finally, by applying repeatedly the commuting properties of the elements of $P$ at each $k= \allowbreak 1,\dots, n$, see its definition in \cref{Pl_def} we get
			\begin{flalign}
				t\,p &= \overline{    \left(\prod_{i=1}^{n-1} i^{- p}\right) \;\bar n^{p+t} \,\left(\prod_{i=1}^{n-1}  \!\vphantom{a}^{\vphantom{a}^{\vphantom{a}'}}\; i^{- t}\right)} \nonumber\\
				&= \overline{  \bar n^{p+t} \left(\prod_{i=1}^{n-1} i^{- p}\right) \; \,\prod_{i=1}^{n-1}  \!\vphantom{a}^{\vphantom{a}^{\vphantom{a}'}}\; i^{- t} } = \overline{    \prod_{i=1}^{n} \!\vphantom{a}^{\vphantom{a}^{\vphantom{a}'}}\; \bar i^{p+t} } \nonumber\\
				&= \prod_{i=1}^{n} i^{p+t} = (t\,p)^{ab}.
			\end{flalign}
			To prove the converse we can take $t=i$. Upon explicit substitution of $p$ into $i\,p = (i\,p)^{ab}$, this condition becomes equivalent to $[\, \prod_{j=1}^{k-1} j^{m_j}, i\,] = e$.
		\end{proof}

		\begin{lemma} \label{lem: cables_PBC} (Cable formula for PBC clusters)

			Let $I = G/N$ be an integral of $C$ where $G$ is a maximal integral and $N$ a valid PBC. Consider the CS-lattice of $I$ embedded in $G$, that is, $I$ identified with a transversal of $N$ in $G$ whose abelian support is made of adjacent points and convex. The cable formula at the boundary of the CS-lattice of $I$ reads as
			\begin{equation} \label{cables_PBC}
				c^{I}_{\mathbf{r},\lambda} (c^I) =l_{\mathbf{r},\bar n_{\mathbf{r},\lambda} \lambda} \;c^I 
			\end{equation}
			with $\lambda$ is the shift, $\mathbf{r}$ lies inside the abelian support of $I$, $\mathbf{r}+\mathbf{r}_\lambda$ lies outside (inside that of $G$) and $n_{\mathbf{r},\lambda}$, a generator of $N$, is such that $\mathbf{r}+\mathbf{r}_\lambda-\mathbf{r}_{n_{\mathbf{r},\lambda}}$ lies inside.
			
			Moreover, the twist factor defined as $c^{I}_{\mathbf{r},\lambda} = c^{twist}_{\mathbf{r},\lambda}\;c_{\mathbf{r},\lambda}$,
			with $c_{\mathbf{r},\lambda}$ satisfying the cable formula \cref{cables} of $G$, amounts to 
			\begin{equation} \label{c_twist}
				c^{twist}_{\mathbf{r},\lambda} =   l_{\mathbf{r}+\mathbf{r}_\lambda,\bar n_{\mathbf{r},\lambda}}.
			\end{equation}
		\end{lemma}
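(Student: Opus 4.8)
The plan is to reduce the statement to coset bookkeeping in $G/N$ together with the defining formula \cref{link_main} for links, the one nontrivial external input being that a valid PBC $N$ is central in $G$, i.e.\ $N\le\widetilde{Z}_i$ by \cref{lemma: validPBC2}. Recall that with left multiplication a hopping by $\lambda$ carries the vertex $g$ to $\lambda g$, so above $\mathbf r$ the pillar element $c$ is sent to $l_{\mathbf r,\lambda}c$ above $\mathbf r+\mathbf r_\lambda$, which is \cref{cables}. I would identify the chosen transversal of $N$ in $G$ with the vertex set of $I=G/N$, writing $g_{\mathbf r}$ both for the element of $T^{ab}_n$ and for the vertex of $I$ it labels, and decompose vertices as $g_{\mathbf r}\,c^I$ as in \cref{f_decomp}.

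For a boundary hopping, $\lambda\,g_{\mathbf r}\,c^I=g_{\mathbf r+\mathbf r_\lambda}\,l_{\mathbf r,\lambda}\,c^I$ has abelian part $\mathbf r+\mathbf r_\lambda$ lying outside the support of $I$; since that support is convex and built of adjacent points, and here $N^{ab}\cong\mathbb Z^n$, there is a generator $n:=n_{\mathbf r,\lambda}\in N$ whose abelian part cancels precisely the overflow, so that $\mathbf r':=\mathbf r+\mathbf r_\lambda-\mathbf r_{n}$ lies inside. Now I would use centrality of $n$: $\bar n$ commutes through the word, hence $\bar n\lambda\,g_{\mathbf r}\,c^I$ lies in the same $N$-coset as $\lambda\,g_{\mathbf r}\,c^I$ yet has abelian part $\mathbf r'$ inside the support, so it is the representative of the target vertex. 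Treating $\bar n\lambda$ as a single composite element of $G$ and applying the definition \cref{link_main} with $\lambda$ replaced by $\bar n\lambda$ (equivalently, iterating the recursion \cref{link_rec_in_prop}), one gets $\bar n\lambda\,g_{\mathbf r}=g_{\mathbf r+\mathbf r_{\bar n\lambda}}\,l_{\mathbf r,\bar n\lambda}=g_{\mathbf r'}\,l_{\mathbf r,\bar n\lambda}$, so $\bar n\lambda\,g_{\mathbf r}\,c^I=g_{\mathbf r'}\,l_{\mathbf r,\bar n\lambda}\,c^I$; reading off the pillar element above $\mathbf r'$ gives $c^{I}_{\mathbf r,\lambda}(c^I)=l_{\mathbf r,\bar n_{\mathbf r,\lambda}\lambda}\,c^I$, which is \cref{cables_PBC}.

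For the twist factor I would simply split this new link. Inserting $g_{\mathbf r+\mathbf r_\lambda}\,\overline{g_{\mathbf r+\mathbf r_\lambda}}$ in $l_{\mathbf r,\bar n\lambda}=\overline{g_{\mathbf r'}}\,\bar n\,\lambda\,g_{\mathbf r}$ and regrouping yields $l_{\mathbf r,\bar n\lambda}=\big(\overline{g_{\mathbf r'}}\,\bar n\,g_{\mathbf r+\mathbf r_\lambda}\big)\big(\overline{g_{\mathbf r+\mathbf r_\lambda}}\,\lambda\,g_{\mathbf r}\big)=l_{\mathbf r+\mathbf r_\lambda,\bar n}\,l_{\mathbf r,\lambda}$, using $\mathbf r'=(\mathbf r+\mathbf r_\lambda)+\mathbf r_{\bar n}$. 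Comparing with $c_{\mathbf r,\lambda}(c^I)=l_{\mathbf r,\lambda}c^I$ and the definition $c^{I}_{\mathbf r,\lambda}=c^{twist}_{\mathbf r,\lambda}c_{\mathbf r,\lambda}$ gives $c^{twist}_{\mathbf r,\lambda}=l_{\mathbf r+\mathbf r_\lambda,\bar n_{\mathbf r,\lambda}}$, i.e.\ \cref{c_twist}. I expect the only genuinely delicate step — the \emph{main obstacle} — to be the geometric one: verifying that convexity and adjacency of the chosen transversal really do produce such a generator $n_{\mathbf r,\lambda}$ (and, for hoppings whose abelian displacement exceeds one cell face, a suitable short product of generators) returning the overflowing coordinate to the interior, so that $\mathbf r'$ is unambiguously inside the support. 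Once this is granted, everything else is a direct manipulation of \cref{link_main} together with the centrality input \cref{lemma: validPBC2}.
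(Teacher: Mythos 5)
Your proposal is correct and follows essentially the same route as the paper: decompose the out-of-support image into $n_{\mathbf r,\lambda}\,g^{I}_{new}$ using the coset structure of $G/N$, apply the link definition \cref{link_main} to the composite element $\bar n_{\mathbf r,\lambda}\lambda$ to get \cref{cables_PBC}, and then split that link via \cref{link_rec_in_prop} (your explicit insertion of $g_{\mathbf r+\mathbf r_\lambda}\overline{g_{\mathbf r+\mathbf r_\lambda}}$ is just a direct verification of that recursion) to read off the twist factor \cref{c_twist}. The only cosmetic difference is that you invoke centrality from \cref{lemma: validPBC2}, whereas the paper only needs the quotient/normality structure for the coset bookkeeping.
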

		
		\begin{proof}	
			The quotient in $I=G/N$ implies the equivalence $n  g^{ab} c\sim g^{ab}c $ in $G$ for any $n \in N$. The cable along a translation by $\lambda$ in $G$ is obtained from $g_{new} = \lambda\,g_{old}$ that implies formula \cref{cables} (in the following "old" refers to object on a certain pillar and "new" to those on the pillar advanced by $\mathbf{r}_\lambda$). Within $I$, if $g_{\mathbf{r} + \mathbf{r}_\lambda}^{ab}$ happens to lie outside the abelian support we can write $g_{new} = n_{\mathbf{r},\lambda} g_{new}^I$ with $n_{\mathbf{r},\lambda} \in N$(we suppress this subscript in the following) and $g^I_{new}$ inside the support. Thus, we get $ n g^I_{new} = \lambda\,g_{old}$. Using the decomposition \cref{f_decomp} we get $ g^{ab}_{\mathbf{r}+\mathbf{r}_g-\mathbf{r}_n} c^{I}_{new} = \bar n \lambda g^{ab}_{\mathbf{r}} c_{old}  $ where $\mathbf{r}_n = n^{ab}$, whence, by using the definition of links in \cref{link_main} and changing the notation for $c_{new/old}$, we obtain \cref{cables_PBC}. The twist factor is recovered, by observing that $c^I_{old} = c_{old}$ and applying the splitting formula \cref{link_rec_in_prop} in \cref{cables_PBC}.
		\end{proof}

		\subsection{Central groups $Z$ and $Q$, centralizer of $C$ and their properties}
		
		{\bf Corollary 2.1 of \cref{prop: vV}} \label{Znontrivial}
		
		\emph{
			The center group of $G$ is non-trivial and contains $\mathbb{Z}^n$.}
		
		\begin{proof}
			The elements of $ Z_{\mathrm{straight}} = \{i^{V_i} : V_i = \mathrm{m.c.m.} \;\{V_{ij}\}_{j=1,2,\dots,n}, i=1,2,\dots,n\}$ are central, by the definition of the $V_{ij}$, and generate a free abelian group.
		\end{proof}
		
		\addtocounter{theorem}{1}
		\begin{proposition} \label{prop: Zrank} (Center factorization and rank)  
			
			The center of $G$ can be written as the direct product of subgroups $Z = {\widetilde Z} Z_{CG}$ where ${\widetilde Z} \cong \mathbb{Z}^n$, $Z_{CG}= Z \cap C$ and ${\widetilde Z}\cap Z_{CG}$. It holds $\mathrm{rank}(Z) = n + \mathrm{rank}(Z_{CG})$.
		\end{proposition}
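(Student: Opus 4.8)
The plan is to restrict the defining extension $0\to C\to G\xrightarrow{\pi}G^{ab}\to 0$, with $G^{ab}\cong\mathbb{Z}^n$, to the center $Z$. Since $\ker\pi=C$, we immediately get $\ker(\pi|_Z)=Z\cap C=Z_{CG}$, and $\pi(Z)$ is a subgroup of $\mathbb{Z}^n$, so there is a short exact sequence
\[ 0 \to Z_{CG} \to Z \xrightarrow{\ \pi\ } \pi(Z) \to 0 . \]
Everything reduces to understanding $\pi(Z)$ and splitting this sequence.

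The first real step is to show $\pi(Z)$ has rank exactly $n$. Here I would invoke the Corollary to \cref{prop: vV}: $Z$ contains $Z_{\mathrm{straight}}=\langle i^{V_i}:1\le i\le n\rangle\cong\mathbb{Z}^n$, and $\pi$ sends this onto the subgroup $\bigoplus_i V_i\mathbb{Z}\,\mathbf e_i\le\mathbb{Z}^n$ of finite index $\prod_i V_i$, where $\mathbf e_i=\pi(x_i)$ is the standard basis of $G^{ab}$. Hence $\pi(Z)\supseteq\pi(Z_{\mathrm{straight}})$ already has rank $n$; since a subgroup of $\mathbb{Z}^n$ is free abelian of rank at most $n$, this forces $\pi(Z)\cong\mathbb{Z}^n$, and in particular $\pi(Z)$ is $\mathbb{Z}$-free.

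Because $\pi(Z)$ is free abelian (hence a projective $\mathbb{Z}$-module), the exact sequence above splits. I would then fix a section $s:\pi(Z)\to Z$ and set $\widetilde Z:=s(\pi(Z))$. By construction $\widetilde Z\cong\pi(Z)\cong\mathbb{Z}^n$; and $\widetilde Z\cap Z_{CG}=\{e\}$, since an element of the intersection lies in $\ker\pi$ and therefore equals both itself and $e$ under $\pi$. As $Z$ is abelian with $\pi|_Z$ onto $\pi(Z)$ and kernel $Z_{CG}$, each $z\in Z$ factors as $z=s(\pi(z))\cdot\bigl(s(\pi(z))^{-1}z\bigr)$ with the second factor in $Z_{CG}$, so $Z=\widetilde Z\,Z_{CG}$ is an internal direct product with the stated properties. (The non-uniqueness of $\widetilde Z$ noted in the main text is exactly the freedom in choosing $s$, parametrized by $\mathrm{Hom}(\mathbb{Z}^n,Z_{CG})$.)

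For the rank formula I would use $Z\cong\mathbb{Z}^n\oplus Z_{CG}$ with $Z_{CG}$ finite, together with the fact that for any finitely generated abelian group $A$ the minimal size of a generating set is $\max_p\dim_{\mathbb{F}_p}(A/pA)$ over primes $p$. The free summand contributes $n$ to $\dim_{\mathbb{F}_p}(A/pA)$ for every $p$, so $\mathrm{rank}(Z)=n+\max_p\dim_{\mathbb{F}_p}(Z_{CG}/pZ_{CG})$; and for the invariant factor decomposition $Z_{CG}\cong\bigoplus_{i=1}^r\mathbb{Z}_{m_i}$ with $m_1\mid\cdots\mid m_r$, the prime dividing $m_1$ divides every $m_i$ and realizes the maximum value $r=\mathrm{rank}(Z_{CG})$, giving $\mathrm{rank}(Z)=n+\mathrm{rank}(Z_{CG})$. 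The only delicate point in the whole argument is the first step — pinning down that $\pi(Z)$ has full rank $n$, i.e.\ that the center projects onto a finite-index sublattice of the abelianization — which is precisely what the Corollary to \cref{prop: vV} supplies; the splitting and the rank count are then routine abelian-group bookkeeping.
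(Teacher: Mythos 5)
Your proposal is correct and follows essentially the same route as the paper: both hinge on restricting the canonical projection $G \to G^{ab}$ to $Z$ (with kernel $Z_{CG}$) and on the group $Z_{\mathrm{straight}}$ supplied by the corollary to \cref{prop: vV} to force the image to have full rank $n$, after which the decomposition $Z=\widetilde Z\, Z_{CG}$ is routine abelian-group theory. The only cosmetic differences are that you split the short exact sequence via a section using freeness of the quotient, whereas the paper identifies $Z_{CG}$ as the torsion subgroup and invokes the classification of finitely generated abelian groups, and that you spell out the rank count (via $\max_p \dim_{\mathbb{F}_p}(A/pA)$), which the paper dismisses as a simple corollary.
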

		
		\begin{proof}
			We may embed $Z$ in $G^{ab}$ with the canonical projection
			\begin{equation} \label{sandwitch}
				\phi: G\ni g\mapsto g\,C \in G^{ab}.
			\end{equation}
			Using such homomorphism and the first isomorphism theorem we have $\phi(Z)\cong Z/(\mathrm{Ker}(\phi) \cap Z) \allowbreak = Z/Z_{CG}$. It is a known fact that for free abelian groups $\mathrm{rank}(X)\geq \mathrm{rank}(Y)$ if $X\geq Y$.
			In our case we have $n = \mathrm{rank}(G^{ab}) \geq \mathrm{rank}(\phi(Z))\geq \mathrm{rank}(\phi(Z_{\mathrm{straight}}))= \mathrm{rank}(Z_{\mathrm{straight}}) = n$, where $Z_{\mathrm{straight}}$ is the group defined in \cref{Znontrivial}. This implies $\mathrm{rank}(\phi(Z))=n$ and, since $\phi(Z)$ is abelian, $\phi(Z) \cong \mathbb{Z}^n$.
			We make the observation that $Z_{CG}$ is the torsion group of $Z$. Indeed, $C$ is finite by assumption and all elements of $Z\backslash Z_{CG}$ have a finite abelianized component, thus having infinite order.
			Therefore, by the classification of finitely presented abelian groups\cite{Lyn01} we have the decomposition $Z = {\widetilde Z} Z_{CG}$, where ${\widetilde Z} \cong Z/Z_{CG} \cong \phi(Z) \cong \mathbb{Z}^n$.
			The second claim is a simple corollary of the first claim.
			
		\end{proof}
		
		\begin{proposition}  \label{prop: p_underground_life2}  
			$q \in Q = P \cap Z$ if and only if $ q\,t = t\,q = (q\,t)^{ab}$ for any $t \in T^{ab}_n$.
		\end{proposition}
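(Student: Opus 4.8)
The plan is to read the two-sided identity $q\,t = t\,q = (q\,t)^{ab}$ as a conjunction of two independent pieces of information: the equation $q\,t = t\,q$, when $t$ runs over the generators, forces $q$ into the centre $Z$, whereas the equation $t\,q = (t\,q)^{ab}$ is exactly the content of \cref{lem: p_underground_life} and its converse. Thus the proof is essentially a repackaging of \cref{lem: p_underground_life}, \cref{theo: lr period}, and the trivial inclusion $Z \subseteq \mathcal{C}_G(C)$.

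For the direction $(\Rightarrow)$, assume $q \in Q = P\cap Z$. Since $q\in P$, \cref{lem: p_underground_life} gives $t\,q = (t\,q)^{ab}$ for every $t\in T^{ab}_n$. Since $q\in Z$ commutes with all of $G$ we also have $q\,t = t\,q$, and therefore $q\,t = t\,q = (q\,t)^{ab}$.

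For the direction $(\Leftarrow)$, assume $q\,t = t\,q = (q\,t)^{ab}$ for all $t\in T^{ab}_n$. First I would take $t=e=g_{\mathbf 0}\in T^{ab}_n$ to get $q=q^{ab}$, so that $q\in T^{ab}_n$ and may be written $q=g_{\mathbf m}$. Next, taking $t=i$ for each generator $i\in\{1,\dots,n\}$ yields $q\,i = i\,q$; since the generators generate $G$, $q$ commutes with every element of $G$, i.e.\ $q\in Z$, and in particular $q\in\mathcal{C}_G(C)$. Finally, the remaining half of the hypothesis, $t\,q = (t\,q)^{ab}$ for all $t$ (equivalently, for each generator $t=i$), is precisely the condition analysed in the converse part of the proof of \cref{lem: p_underground_life}, where it is shown to be equivalent to the commutation relations $[\, \prod_{j=1}^{k-1} j^{m_j},\, k\,]=e$ that define the set $X$ in \cref{theo: lr period}; hence $q\in X$. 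Combining $q\in X\cap\mathcal{C}_G(C)=P$ with $q\in Z$ gives $q\in P\cap Z=Q$.

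I do not expect a serious obstacle here; the only points needing attention are to confirm that the converse computation in \cref{lem: p_underground_life} genuinely delivers membership in $X$ (rather than merely restating the hypothesis), and to notice that centrality must be extracted separately from $q\,t=t\,q$ and is \emph{not} implied by the abelianisation condition alone — a central element of the form $g^{ab}c$ with nontrivial pillar component $c$ (cf.\ the examples of \cref{sec: example}) satisfies $q\,t=t\,q$ but fails $q=q^{ab}$, showing the two halves of the stated condition are independent.
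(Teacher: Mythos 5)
Your proof is correct and follows essentially the same route as the paper's: the forward direction combines \cref{lem: p_underground_life} with centrality of $q$, and the reverse direction splits the hypothesis into the commutation part (giving $q\in Z\leq \mathcal{C}_G(C)$) and the abelianisation part (giving $q\in X$ via the converse computation in \cref{lem: p_underground_life}), whence $q\in X\cap\mathcal{C}_G(C)\cap Z = Q$. Your extra steps — taking $t=e$ to force $q\in T^{ab}_n$ and noting the independence of the two halves of the condition — merely make explicit what the paper's terse proof leaves implicit.
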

		
		\begin{proof}
			The right implication follows straightforwardly from \cref{lem: p_underground_life} and the additional requirement that $q \in \widetilde{Z}$. The left implication follows since the first equality in the assumption implies $q \in Z \leq \mathcal{C}_G(C)$ whereas the second one that implies $q \in X$, with the group $X$ as in \cref{Pl_def}. 
		\end{proof}
		
		\begin{corollary} \label{cor: rel P Z} (Relation between $Q$ and $Z$)
			
			The only non-trivial maximal group contained in the set $Z \cap T^{ab}_n$ is $Q$.
		\end{corollary}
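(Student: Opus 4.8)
The plan is to verify three facts about $Q$: it is a subgroup contained in $Z\cap T^{ab}_n$, it is non-trivial, and it contains every subgroup of $G$ that is contained in $Z\cap T^{ab}_n$; together these say exactly that $Q$ is the unique non-trivial maximal such group. The first is immediate: \cref{prop: p_underground_life2} at $t=e$ gives $q=q^{ab}$ for every $q\in Q$, so $Q\subseteq T^{ab}_n$, while $Q\le Z$ by definition. For non-triviality, the powers $x_i^{V_i}$ (with $V_i=\mathrm{lcm}\{V_{ij}\}_j$ as in \cref{prop: vV}) are central, hence in $\mathcal C_G(C)$, and satisfy the relations defining $X$ because for $\mathbf m=V_i\mathbf e_i$ the element $[\prod_{j<k}x_j^{m_j},x_k]$ is either trivial or a power of a commutator $c_{x_i^{V_{ik}}x_k}$, which vanishes by \cref{prop: vV}; so they generate a rank-$n$ free abelian subgroup of $X\cap\mathcal C_G(C)\cap Z=P\cap Z=Q$. (For $n=2$ one even gets $Z\cap T^{ab}_n=Q$: a central $g_{\mathbf m}$ automatically satisfies the single relation $[x_1^{m_1},x_2]=e$ defining $X$, by $[g_{\mathbf m},x_2]=e$, so $Z\cap T^{ab}_n\subseteq X\cap\mathcal C_G(C)\cap Z=Q$; so assume $n\ge3$.)

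For the containment it suffices to prove: if $Q\le K\le G$ and $K\subseteq Z\cap T^{ab}_n$ then $K\subseteq Q$ — the general case follows by replacing $K$ with $QK$, which is again a subgroup contained in $Z\cap T^{ab}_n$ (both $Q$ and $K$ are central, hence normal; $QK\le Z$; and for $q\in Q$, $k\in K$ one has $qk=kq$ with $qk\in T^{ab}_n$ by \cref{prop: p_underground_life2}, the part $qt=(qt)^{ab}$ at $t=k$). Under $Q\le K$, $M:=\phi(K)\supseteq\phi(Q)$ has rank $n$ in $G^{ab}\cong\mathbb Z^n$, hence finite index, so for each generator $x_i$ there is $N_i\ge1$ with $x_i^{N_i}\in K$.

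Now fix $g_{\mathbf m}\in K$. By \cref{prop: p_underground_life2}, $g_{\mathbf m}\in Q$ iff $g_{\mathbf m}t=tg_{\mathbf m}=(g_{\mathbf m}t)^{ab}$ for all $t\in T^{ab}_n$; the commutation is free since $g_{\mathbf m}\in Z$. For the rest, since $g_{\mathbf m}\in Z\le\mathcal C_G(C)$, showing $g_{\mathbf m}\in X$ puts $g_{\mathbf m}\in P$, whence \cref{lem: p_underground_life} gives $g_{\mathbf m}t=(g_{\mathbf m}t)^{ab}$ for every $t$ at once; and by the converse established in the proof of \cref{lem: p_underground_life}, $g_{\mathbf m}\in X$ is equivalent to $x_ig_{\mathbf m}\in T^{ab}_n$ for all generators $x_i$. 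Since $x_ig_{\mathbf m}=g_{\mathbf m+\mathbf e_i}\,l_{\mathbf m,i}$ by \cref{link_main}, this means $l_{\mathbf m,i}=e$ for all $i$. For $i=n$ this is automatic: $l_{\mathbf m,n}=e$ amounts to $[x_1^{m_1}\cdots x_{n-1}^{m_{n-1}},x_n]=e$, which follows from $[g_{\mathbf m},x_n]=e$ via the identity $[ab,c]=[a,c]^b[b,c]$ with $a=g_{\mathbf m}$, $b=x_n^{-m_n}$, $c=x_n$.

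The remaining task — and the main obstacle — is $l_{\mathbf m,i}=e$ for $1\le i\le n-1$; this is precisely where one must use that $K$ is a genuine subgroup, not merely a central subset of $T^{ab}_n$ (an isolated central element of $T^{ab}_n$ whose powers all lie in $T^{ab}_n$ need not lie in $Q$). I would argue from closedness of $K$: since $x_i^{N_i}\in K$, the element $g_{\mathbf m}x_i^{N_i}=g_{\mathbf m+N_i\mathbf e_i}$ lies in $K\subseteq T^{ab}_n$, so the link $l_{\mathbf m,x_i^{N_i}}$ vanishes; expanding it by the recursion of \cref{coro: links} as the ordered product $\prod_{p=N_i-1}^{0}l_{\mathbf m+p\mathbf e_i,i}$ and noting from the explicit expansion of \cref{prop: links} that each factor is obtained from $l_{\mathbf m,i}$ by the inner automorphism induced by $x_i^{p}$ gives a linear relation on $l_{\mathbf m,i}$, which together with the analogous relations coming from the other generators of $M$ and from the identities $g_{\mathbf a}g_{\mathbf b}=g_{\mathbf a+\mathbf b}$ ($\mathbf a,\mathbf b\in M$) forces $l_{\mathbf m,i}=e$; an induction on $n$ peeling off $x_n$ via \cref{coro: links} and invoking the settled case $i=n$ disposes of the dependence of $l_{\mathbf m,i}$ on the tail $(m_{i+1},\dots,m_n)$. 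Disentangling the conjugated commutators of \cref{prop: links} in this step is the delicate point; the rest is formal. Once $l_{\mathbf m,i}=e$ for all $i$, the previous paragraph gives $g_{\mathbf m}\in Q$, hence $K\subseteq Q$, which finishes the proof.
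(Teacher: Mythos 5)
Your preparatory steps are fine and, in places, more careful than the paper itself, which states this corollary with no separate proof, treating it as an immediate consequence of \cref{prop: p_underground_life2}. Specifically: $Q\subseteq Z\cap T^{ab}_n$ via the criterion at $t=e$; non-triviality via the central powers $x_i^{V_i}$; the reduction to showing $K\subseteq Q$ for a subgroup $K\supseteq Q$ contained in $Z\cap T^{ab}_n$ (using that $QK$ is again such a subgroup); the observation that for $n=2$ centrality alone already forces $[x_1^{m_1},x_2]=e$ so that $Z\cap T^{ab}_2=Q$; and the verification that $l_{\mathbf m,n}=e$ follows from centrality. You have also correctly identified the real subtlety for $n\geq 3$: membership of a single element in the set $Z\cap T^{ab}_n$ does not by itself give $g_{\mathbf m}t=(g_{\mathbf m}t)^{ab}$ for \emph{all} $t\in T^{ab}_n$, which is what \cref{prop: p_underground_life2} requires.

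The problem is that the step you yourself call the main obstacle — showing $l_{\mathbf m,i}=e$ for $1\leq i\leq n-1$ from the closure of $K$ — is asserted rather than proved, and the sketch offered does not go through as written. From $x_i^{N_i}\in K$ you correctly get $l_{\mathbf m,x_i^{N_i}}=\prod_{p}l_{\mathbf m+p\mathbf e_i,i}=e$ via \cref{link_rec_in_prop}, but a product of $C$-valued links can be trivial without any single factor being trivial, and no mechanism is given for isolating $l_{\mathbf m,i}$. The auxiliary claim meant to provide that mechanism — that each $l_{\mathbf m+p\mathbf e_i,i}$ is the conjugate of $l_{\mathbf m,i}$ by $x_i^{p}$ — is not correct in general: in the explicit expansion of \cref{prop: links} the conjugating words $j^{p}g_{\mathbf r_{\downarrow j+1}}$ contain the tail $x_{i+1}^{m_{i+1}}\cdots x_n^{m_n}$ to the right of the $x_i$-power, so shifting $m_i\mapsto m_i+p$ inserts $x_i^{p}$ in the middle of each conjugator, not at its end, and the tail need not commute with $x_i$. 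The phrases ``together with the analogous relations \dots forces $l_{\mathbf m,i}=e$'' and ``an induction on $n$ peeling off $x_n$ \dots disposes of'' stand in for the actual argument. Since everything beyond $Q\subseteq Z\cap T^{ab}_n$ (namely that \emph{every} subgroup contained in $Z\cap T^{ab}_n$ lies in $Q$, which is what maximality and uniqueness mean here) hinges on exactly this step, the proposal does not yet prove the corollary; it rather makes explicit that the statement needs an argument the paper does not supply either.
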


		\begin{corollary} \label{coro_lp} (Group $Q$ and Wilson-loops) 
			
			It holds $l_{\mathbf{r},q} = e$ for any $\mathbf{r} \in \mathbb{Z}^n$ and $q\in Q$. In particular, the $C$-valued Wilson-loop around a standard CS-supercell vanishes.
		\end{corollary}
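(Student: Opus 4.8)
The plan is to read the first identity straight off the definition of a link in \cref{link_main}, using the characterization of $Q$ supplied by \cref{prop: p_underground_life2}, and then to obtain the Wilson-loop statement as a one-line consequence of the backward-link rule \cref{l_back} together with the path-composition rule \cref{link_rec_in_prop}.

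First I would note that $Q = P \cap Z \subseteq P \subseteq T^{ab}_n$ (by \cref{Pl_def}), so every $q \in Q$ lies in the Schreier transversal and has trivial pillar component; write $\mathbf{r}_q$ for its abelian label, so $q = g_{\mathbf{r}_q}$. For any $\mathbf{r} \in \mathbb{Z}^n$, \cref{link_main} gives $l_{\mathbf{r},q} = \overline{g_{\mathbf{r}+\mathbf{r}_q}}\, q\, g_{\mathbf{r}}$. Now \cref{prop: p_underground_life2} tells us that $q \in Q$ forces $q\,t = (q\,t)^{ab}$ for every $t \in T^{ab}_n$; applied to $t = g_{\mathbf{r}}$ this reads $q\, g_{\mathbf{r}} = (q\, g_{\mathbf{r}})^{ab} = g_{\mathbf{r}+\mathbf{r}_q}$, since abelianization is a homomorphism and hence the abelian label of $q\,g_{\mathbf{r}}$ is $\mathbf{r}_q + \mathbf{r}$. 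Substituting, $l_{\mathbf{r},q} = \overline{g_{\mathbf{r}+\mathbf{r}_q}}\, g_{\mathbf{r}+\mathbf{r}_q} = e$, which proves the first claim. (One could equally move $q$ past $g_{\mathbf{r}}$ using $q \in Z$ and then invoke \cref{lem: p_underground_life}; this is the same computation.)

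For the second claim, fix two generators $q_i, q_j$ of $Q$; the boundary of the corresponding face of the standard CS-supercell is the closed rectangular path $\mathbf{0} \to \mathbf{r}_{q_i} \to \mathbf{r}_{q_i}+\mathbf{r}_{q_j} \to \mathbf{r}_{q_j} \to \mathbf{0}$. Iterating \cref{link_rec_in_prop} along the four sides, the associated untraced Wilson loop equals an ordered product of the four links $l_{\mathbf{0},q_i}$, $l_{\mathbf{r}_{q_i},q_j}$, $l_{\mathbf{r}_{q_i}+\mathbf{r}_{q_j},\overline{q_i}}$, $l_{\mathbf{r}_{q_j},\overline{q_j}}$. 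The first two vanish by the first part of the corollary; for the backward ones, \cref{l_back} gives $l_{\mathbf{r},\overline{q_i}} = \overline{l_{\mathbf{r}-\mathbf{r}_{q_i},q_i}} = \overline{e} = e$, and likewise for $\overline{q_j}$. Hence the whole product is $e$, so the $C$-valued Wilson loop around a standard CS-supercell vanishes. (Alternatively, since $q_i, q_j \in Z$ commute, the coarse-grained commutator $\overline{q_i}\,\overline{q_j}\,q_i\,q_j$ is trivial, and the argument of the Wilson-loop discussion in \cref{sec: CS-lattice} identifies the loop with $l_{\mathbf{r},\,\overline{q_i}\,\overline{q_j}\,q_i\,q_j} = e$.)

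The only real obstacle is the bookkeeping of conventions: making sure the left/right placement of $q$ in \cref{link_main} matches the hypotheses of \cref{prop: p_underground_life2}, and that the corner labels $\mathbf{r}_{q_i}$, $\mathbf{r}_{q_j}$ appearing in the four boundary links are exactly those produced by spelling out the path through \cref{link_rec_in_prop}. Neither point is conceptually delicate; once they are pinned down the corollary closes with no further input.
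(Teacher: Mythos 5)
Your proposal is correct and follows essentially the paper's own route: the first claim is exactly the one-line consequence of \cref{link_main} together with \cref{prop: p_underground_life2} (equivalently \cref{lem: p_underground_life} plus centrality), and the Wilson-loop claim is obtained, as in the paper, by applying the splitting rule \cref{link_rec_in_prop} (with \cref{l_back} for the reversed sides) to the closed loop string around the supercell, so that the loop reduces to a product of $Q$-side links each equal to $e$. Your parenthetical alternative, identifying the whole loop with $l_{\mathbf{r},\overline{q_i}\,\overline{q_j}\,q_i\,q_j}=l_{\mathbf{r},e}=e$ by centrality of $Q$, is precisely the paper's terse formulation, so there is nothing to add.
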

		
		\begin{proof}
			The latter statement can be proved using the recursive formula in \cref{link_rec_in_prop} with $\lambda$ equal to the $C$-valued Wilson-loop string.
		\end{proof}

		\begin{proposition} \label{stand_ucell} (Group $Q$ and valid PBCs)
			
			\begin{itemize}
				\item[(a)] The subgroups of $Q=P\cap Z$ are \emph{all and only} the standard PBCs.
				\item[(b)] $Q$ is isomorphic to $\mathbb{Z}^{n}$.
			\end{itemize}
		\end{proposition}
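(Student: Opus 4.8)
\emph{Part (b).} I would prove (b) first, since it is short. As $Q=P\cap Z$ is a subgroup of $P$, and $P\cong\mathbb{Z}^{n}$ by \cref{theo: lr period}, $Q$ is free abelian of rank at most $n$. To pin the rank down to $n$ I would exhibit an explicit copy of $\mathbb{Z}^n$ inside $Q$, namely the group $Z_{\mathrm{straight}}=\langle i^{V_i}\rangle_i$ of \cref{Znontrivial}, with $V_i=\mathrm{m.c.m.}\{V_{ij}\}_j$ in the notation of \cref{prop: vV}. Its generators are central by construction of the $V_{ij}$, hence lie in $\mathcal{C}_G(C)$ and in $Z$; being pure powers of single generators, they lie in $T^{ab}_n$ and satisfy the commutator conditions defining the set $X$ of \cref{theo: lr period} trivially. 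Thus $Z_{\mathrm{straight}}\leq (X\cap\mathcal{C}_G(C))\cap Z=Q$, so $n=\mathrm{rank}(Z_{\mathrm{straight}})\leq\mathrm{rank}(Q)\leq n$; a rank-$n$ subgroup of $\mathbb{Z}^n$ is isomorphic to $\mathbb{Z}^n$, which settles (b).

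\emph{Part (a), $N\leq Q\Rightarrow N$ standard.} Given $N\leq Q$: since $Q\leq Z$, $N$ is a subgroup of the centre and hence normal in $G$; since $Q\leq P\subseteq T^{ab}_n$, we get $N\cap C=\{e\}$, so $N$ is a valid PBC. Because $N\leq P$, \cref{theo: lr period} gives $l_{\mathbf{r},\lambda}=l_{\mathbf{r}+\mathbf{r}_n,\lambda}$ for every $n\in N$, so the cut $G\to G/N$ respects the periodic cable pattern. Finally, by \cref{lem: cables_PBC} (see \cref{c_twist}) the boundary twist factors are of the form $l_{\mathbf{r}+\mathbf{r}_\lambda,\bar n}$ with $n\in N\leq Q$, which by \cref{l_back} equals $\overline{l_{\mathbf{r}+\mathbf{r}_\lambda-\mathbf{r}_n,n}}$ and vanishes by \cref{coro_lp}; hence no twists occur and $N$ is a standard PBC.

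\emph{Part (a), $N$ standard $\Rightarrow N\leq Q$.} Conversely, let $N$ be a standard PBC. Validity together with \cref{lemma: validPBC2} gives $N\lhd G$, $N\leq Z$ and $N\cap C=\{e\}$. The condition that the cut not cross the periodic pattern reads $l_{\mathbf{r}+\mathbf{r}_n,\lambda}=l_{\mathbf{r},\lambda}$ for all $n\in N$, which by the necessary-and-sufficient criterion of \cref{theo: lr period} forces the transversal representative $n^{ab}$ to lie in $P$. The condition that no twists be created gives, via \cref{lem: cables_PBC}, $l_{\mathbf{r},\bar n}=e$ for all $\mathbf{r}$ and all generators $n$ of $N$; evaluating \cref{link_main} at $\mathbf{r}=\mathbf{0}$ and writing $n=n^{ab}c_n$ with $n^{ab}=g_{\mathbf{r}_n}$ yields $l_{\mathbf{0},n}=\overline{g_{\mathbf{r}_n}}\,n=\overline{n^{ab}}\,n^{ab}c_n=c_n$, so $c_n=e$ and thus $n=n^{ab}\in P$. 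Combined with $n\in Z$ this gives $n\in Q$ for a generating set of $N$, i.e.\ $N\leq Q=P\cap Z$.

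\emph{Main obstacle.} The delicate point is making the informal requirements ``does not cut across the periodic pattern of the cabling'' and ``does not create twists at the boundaries'' precise as the link identities used above, and in particular checking that as the PBC cell and its boundary faces are varied the twist factors of \cref{lem: cables_PBC} range over enough triples $(\mathbf{r},\lambda,n)$ — with $n$ exhausting a generating set of $N$ — to conclude $c_n=e$ for all those generators. The $P$-periodicity of the links proved in the first step is exactly what makes this reduction to a finite check possible.
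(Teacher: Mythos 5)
Your proof is correct and follows essentially the same route as the paper's: the forward direction rests on $Q\leq P$ (no cut through the periodic pattern) together with \cref{lem: cables_PBC} and \cref{coro_lp} (trivial twist), the reverse direction on \cref{lemma: validPBC2} and \cref{theo: lr period}, and part (b) on exhibiting a rank-$n$ free abelian subgroup inside $Q$ — you use $Z_{\mathrm{straight}}$ of \cref{Znontrivial} where the paper uses $P_{\mathrm{straight}}\cap Z_{\mathrm{straight}}$, which is fine since central elements of the transversal automatically satisfy the defining conditions of $P$. Your explicit extraction of $c_n=e$ from the no-twist condition (and the caveat you flag about which boundary triples are available) only makes precise a step the paper's own terser proof leaves implicit, so there is no gap relative to the paper's standard of rigor.
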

		
		\begin{proof}
			Consider \cref{lem: cables_PBC} and, in particular, the twisting factor at the boundary in \cref{c_twist}, $c^{twist}_{\mathbf{r},\lambda}$. If $n_{\mathbf{r},\lambda} \in Q$, the factor is trivial by \cref{coro_lp}. Moreover, since $Q\leq P$, the lattice cut produced by  $n_{\mathbf{r},\lambda}$ does not go through a standard CS-cell. Thus elements of $Q$ generate valid PBCs. The other implication stems directly from the definition of $Q$. Consider a valid PBC that does not cut through a periodic CS-cell, then the the PBC element associated to it must be in $P$ by \cref{theo: lr period}. Moreover, since the PBC is a valid one, the element must also be in $\widetilde{Z}$, whence it is in $Q$.
			

				$Q$ is a subgroup of $P$ therefore, by the Nielsen-Schreier theorem, is also free abelian. At the same time $Q \geq P_{\mathrm{straight}} \cap Z_{\mathrm{straight}}$ where the first group is defined in \cref{theo: lr period} and the second in \cref{Znontrivial}. Such intersection is nontrivial since for each generator $i = 1,2,\dots,n$ we can find a finite m.c.m. of the respective powers in the first and second group generators and obtain and element in $Q$. This intersection is clearly isomorphic to $\mathbb{Z}^{n}$. Using the same argument used in \cref{prop: Zrank}, with the chain of inclusion $P\geq Q \geq P_{\mathrm{straight}} \cap Z_{\mathrm{straight}}$, one proves that $Q$ is isomorphic to $\mathbb{Z}^{n}$. 
				
			\end{proof}

			\begin{proposition} \label{prop: CHG abelian} (Presentation of $\mathcal{C}_G(C)$)

				Given a maximal integral $G$ of a cyclic group $C$, then $\mathcal{C}_G(C) = C\,T$ is abelian, with $T= \allowbreak \{\prod_{i=1}^n i^{s_i} : \prod_{i=1}^n a^{s_i}_i = 1 \,(\,\mathrm{mod}\;m)\}$, with ${a_i}$ related to the canonical presentation of $G$ given in \cref{Hin the theo_mod}. Denoting $\varphi(x)$ the Euler totient function, it holds $\{i^{\varphi(m)} : i =1,\dots,n\}\subset \mathcal{C}_G(C)$ for all integrals of the cyclic group $\mathbb{Z}_m$. In particular, the free group on $T$, contained in $\mathcal{C}_G(C)$, has $n$ generators lying strictly within the $n$D hypercube (in the abelian support) with perimeter $\{\prod_{i=1}^n i^{s_i}: s_i = 0,m\}$.
			\end{proposition}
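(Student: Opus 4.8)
The plan is to first pin down the group $\mathcal{C}_G(C)$ explicitly, then read off the two arithmetic corollaries, and finally deal with abelianity, which is the delicate point.

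\textit{Step 1 (identifying $\mathcal{C}_G(C)$).} Since $C=\langle c\rangle$ with $c\equiv c_{12}$ its embedded generator, $g\in\mathcal{C}_G(C)$ iff $g$ commutes with $c$, i.e. $c^{(g)}=c$. I would coset-decompose $g=g_{\mathbf r}c'$ as in \cref{f_decomp} and note that $c^{(g)}=\bigl(c^{(g_{\mathbf r})}\bigr)^{(c')}=c^{(g_{\mathbf r})}$, because $c^{(g_{\mathbf r})}\in C$ (normality of $C$) while $c'\in C$ is abelian. Writing $g_{\mathbf r}=\prod_i i^{r_i}$ and combining the relations $R_a$ (i.e.\ $c^{(k)}=c^{a_k}$) with $c^{(gh)}=\bigl(c^{(g)}\bigr)^{(h)}$, a short induction on the syllables gives $c^{(g_{\mathbf r})}=c^{\,\prod_i a_i^{r_i}}$, the exponent read in $\mathbb{Z}_m^{\times}$ (the order is immaterial since $\mathbb{Z}_m^{\times}$ is abelian, and the formula extends to negative powers since the $a_i$ are units). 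Hence $g\in\mathcal{C}_G(C)$ iff $\prod_i a_i^{r_i}\equiv 1\pmod m$, i.e.\ iff $\mathbf r$ lies in $K:=\ker\!\bigl(\mathbb{Z}^n\to\mathbb{Z}_m^{\times},\ \mathbf r\mapsto\prod_i a_i^{r_i}\bigr)$, so that $\mathcal{C}_G(C)=\{g_{\mathbf r}:\mathbf r\in K\}\cdot C=T\,C=C\,T$ (using $C\lhd G$). Note this uses only $R_a$, so it is insensitive to the $p_{ij}$ and hence to \cref{G_word_constr_mod}; moreover $K$ is a finite-index subgroup of $\mathbb{Z}^n$, hence free abelian of rank $n$, and $\mathcal{C}_G(C)/C\cong K$ is abelian, giving already the quotient content of the statement. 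To obtain the advertised presentation one then appends to the presentation \cref{Hin the theo_mod} of $G$ a finite generating set of $K$, written as abelian words $\prod_i i^{s_i}$.

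\textit{Step 2 (the arithmetic corollaries).} For each $i$ one has $a_i\in\mathbb{Z}_m^{\times}$, so Euler's theorem gives $a_i^{\varphi(m)}\equiv 1$ and thus $\varphi(m)\mathbf e_i\in K$, i.e.\ $i^{\varphi(m)}\in T\subseteq\mathcal{C}_G(C)$, for every integral of $\mathbb{Z}_m$; more sharply $d_i\mathbf e_i\in K$ with $d_i:=\mathrm{ord}_{\mathbb{Z}_m^{\times}}(a_i)\mid\varphi(m)$. The subgroup $\bigoplus_i d_i\mathbb{Z}\le K$ has rank $n$, its basis elements correspond to the $i^{d_i}\in T$, and each of these has abelian coordinate $d_i\le\varphi(m)\le m-1<m$ (the case $m=1$ being trivial), hence sits strictly inside the hypercube whose boundary is $\{\prod_i i^{s_i}:s_i\in\{0,m\}\}$; they generate a free abelian subgroup of $\mathcal{C}_G(C)$ of rank $n$, which is the ``free group on $T$'' of the statement.

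\textit{Step 3 (abelianity — the crux).} By definition of the centralizer, $C$ is central in $\mathcal{C}_G(C)$, so $\mathcal{C}_G(C)$ is nilpotent of class $\le 2$ and the commutator descends to an alternating bilinear pairing $\beta\colon K\times K\to C$, $\beta(\mathbf u,\mathbf v)=[g_{\mathbf u},g_{\mathbf v}]$, independent of the chosen lifts; the remaining task is to show $\beta\equiv 0$. I would write $[g_{\mathbf u},g_{\mathbf v}]=h(\mathbf v,\mathbf u)^{-1}h(\mathbf u,\mathbf v)$ with $h$ the Schreier $2$-cocycle $h(\mathbf u,\mathbf v)=l_{\mathbf v,g_{\mathbf u}}$, and expand $l_{\mathbf v,g_{\mathbf u}}$ in the basis $\mathcal{F}'_n$ of \cref{Fprime_base} via \cref{prop: links}: since every $c_{ij}^{(g_{\mathbf r})}$ equals $c^{\,p_{ij}\prod_k a_k^{r_k}}$, the cocycle is $c$ raised to an explicit integer assembled from geometric sums $1+a_k+\dots+a_k^{\,r_k-1}$ (and, for $n\ge 3$, the $p_{ij}$). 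The identity $\beta\equiv 0$ then reduces to a system of congruences mod $m$, which I would establish by telescoping those geometric sums against the defining relation $\prod_i a_i^{u_i}\equiv\prod_i a_i^{v_i}\equiv 1$ of $K$ (for $n\ge 3$ also invoking \cref{G_word_constr_mod}). Steps~1--2 are mechanical once the conjugation formula $c^{(g_{\mathbf r})}=c^{\prod a_i^{r_i}}$ is in hand; the genuine difficulty is Step~3 — making the link expansion collapse so that the commutator pairing on $K$ vanishes identically. For $n=2$ this is a short telescoping with a single generic link type; for general $n$ it is exactly where the presentation constraints must be used, and it is the step whose verification needs real care.
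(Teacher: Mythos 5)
Your Steps 1 and 2 follow essentially the same route as the paper's own proof: the paper likewise restricts the search to the Schreier transversal, uses $c_{12}^{(j)}=c_{12}^{a_j}$ to turn the centralizing condition into the congruence $\prod_i a_i^{n_i}=1\ (\mathrm{mod}\ m)$ defining $T$, obtains $i^{\varphi(m)}\in\mathcal{C}_G(C)$ from Euler's theorem, and gets the rank-$n$/hypercube claim from the sandwich $Q\leq T\leq G^{ab}$ together with the remark that any generating set can be reduced, by multiplying with the elements $i^{\varphi(m)}$, to lie strictly inside the cube. Your finite-index argument for $\mathrm{rank}(K)=n$ is a fine variant, but note that the statement's ``free group on $T$'' is $T$ itself: you must show $T$ (not merely the sublattice $\bigoplus_i d_i\mathbb{Z}$) admits $n$ generators inside the cube, which is exactly what the paper's reduction-mod-$X$ remark supplies and your Step 2 does not.

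The real issue is Step 3. As written it is only a plan (``which I would establish by telescoping\dots''), so the abelianity claim is left unproven; worse, that plan cannot succeed in general, because the asserted abelianity of $\mathcal{C}_G(C)=CT$ is false as literally stated. Whenever $a_1=a_2=1$ (e.g.\ the $\mathbb{Z}_2$ example or case `a' of the $\mathbb{Z}_3$ examples, where the paper itself records $\mathcal{C}_G(C)=G$), the vectors $\mathbf{e}_1,\mathbf{e}_2$ lie in your kernel $K$ and $\beta(\mathbf{e}_1,\mathbf{e}_2)=c_{12}^{\pm1}\neq e$, so the commutator pairing does not vanish and $CT$ is non-abelian; no telescoping of geometric sums against $\prod_i a_i^{u_i}\equiv 1$ can change that. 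What holds unconditionally, and what the subsequent results (e.g.\ \cref{prop: CHG abelian2}) actually use, is the identification of the set $T$ and the abelianity of the quotient $\mathcal{C}_G(C)/C\leq G^{ab}$, which your Step 1 already delivers. For comparison, the paper's proof simply never addresses the abelianity assertion at all, so your proposal is more candid about where the difficulty sits — but as a proof of the proposition as stated, Step 3 is both incomplete and, without extra hypotheses on the $a_i$, unattainable.
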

			\begin{proof}
				Clearly $\mathcal{C}_G(C) \geq C$ since $C$ is abelian. Thus, we can restrict the search of the other elements of $\mathcal{C}_G(C)$ within the Schreier transversal $T_n^{ab}$. We search for elements $d = \prod_{i=1}^n i^{n_i}$ that satisfy $c_{d\,c_{12}} = e$. Computing the l.h.s., using $c^{(j)}_{12} = c_{12}^{a_j}$ (for $j=1,\dots,n$) we obtain the equation $\prod_{i=1}^{n} a^{n_i}_i = 1 \,(\,\mathrm{mod}\;m)$, thus obtaining the claimed set.
				
				Since each $a_k\in \mathbb{Z}_m^{\times}$ is coprime with $m$, by Euler's theorem the previous equation is satisfied, in particular, by setting all $n_i$ but one to $0$, and the non-vanishing one to $\varphi(m)$. 
				
				We observe that the free group on $T$ has $n$ generators. Indeed $Q\leq T\leq G^{ab}$, and $Q$ and $G^{ab}$ have the same rank and are free (cf. proof about the rank in \cref{prop: Zrank}).
				Since $\varphi(m) < m$, the full set $X=\{i^{\varphi(m)}: i = 1,\dots,n\}$ can only be generated by elements lying strictly within the $n$-dimensional cube with perimeter $\{\prod_{i=1}^n i^{s_i}: s_i = 0,\varphi(m)\}$. Indeed, if there was a generator lying outside it could be shifted back inside the cube by multiplication of an element of $X$.
			\end{proof}

			\begin{proposition} \label{prop: CHG abelian2} (Centralizer and flux periodicity)

				Let $G$ be the CS-crystal with cyclic commutator group $C$. The subgroup $T$ of $\mathcal{C}_G(C)$, defined in \cref{prop: CHG abelian}, determines the periodicity of the flux configuration common to all irreps of $C$. To find the periodicity it is enough to identify the fluxes that equal the one at the origin.
			\end{proposition}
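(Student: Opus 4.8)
The plan is to strip the statement down to a claim about the $C$-valued Wilson-loop pattern and then identify its period lattice via \cref{prop: CHG abelian}. Recall from \cref{sec: CS and gauge} that the flux through face $(ij)$ at $\mathbf r$ in the sector $\xi$ is $\sigma_{\xi,\,c_{ij}^{(g_{\mathbf r})}}$, so the configuration \emph{common to all irreps} is the assignment $(\mathbf r,i,j)\mapsto\bigl(\sigma_{\xi,\,c_{ij}^{(g_{\mathbf r})}}\bigr)_{\xi}$. Since $\{\sigma_\xi\}_\xi$ ranges over all irreps of the finite group $C$, it is a jointly faithful family (Peter--Weyl), so a shift $\mathbf m\in\mathbb{Z}^n$ is a period of the configuration in every sector simultaneously if and only if it is a period of the $C$-valued pattern $(\mathbf r,i,j)\mapsto c_{ij}^{(g_{\mathbf r})}$. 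Hence the common period lattice is the intersection of the sectorwise ones and coincides with that of the $C$-valued pattern, and everything is reduced to group theory.

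Next I would realise the $C$-valued pattern as the orbit of a $\mathbb{Z}^n$-action. Since $C=\mathbb{Z}_m$ is abelian, conjugation by an element of $C$ is trivial on $C$, so the conjugation action of $G$ on $C$ descends to a well-defined homomorphism $\rho\colon G^{ab}\cong\mathbb{Z}^n\to\mathrm{Aut}(C)$, $\rho(\mathbf m)=$ conjugation by any preimage of $\mathbf m$. Writing $g_{\mathbf r+\mathbf m}=g_{\mathbf r}\,g_{\mathbf m}\,h^{-1}$ with the $2$-cocycle $h=h(\mathbf r,\mathbf m)\in C$ and using the abelianity of $C$ to cancel the conjugating $h$, one gets $c_{ij}^{(g_{\mathbf r+\mathbf m})}=\rho(\mathbf m)\bigl(c_{ij}^{(g_{\mathbf r})}\bigr)$. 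Thus $\mathbf m$ is a period iff $\rho(\mathbf m)$ fixes every $c_{ij}^{(g_{\mathbf r})}$. But $\{c_{ij}^{(g_{\mathbf r})}\}=\mathcal{F}'_n$ of \cref{Fprime_base} generates $C$ (already $c=c_{12}$ generates $\mathbb{Z}_m$), so this is equivalent to $\rho(\mathbf m)=\mathrm{id}$, i.e.\ to $g_{\mathbf m}\in\mathcal C_G(C)$. By \cref{prop: CHG abelian}, $\mathcal C_G(C)=C\,T$ where $T$ consists of transversal elements with $T\cap C=\{e\}$, whence comparing images in $G^{ab}$ shows $g_{\mathbf m}\in\mathcal C_G(C)$ iff $g_{\mathbf m}\in T$. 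Therefore the common period lattice is exactly $T$.

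The last assertion then drops out: since $\rho(\mathbf m)=\mathrm{id}$ is the same as $\rho(\mathbf m)$ fixing the single generator $c=c_{12}=c_{12}^{(g_{\mathbf 0})}$, the period lattice is $\{\mathbf m:\ c_{12}^{(g_{\mathbf m})}=c_{12}\}$ — one only has to scan the fluxes around a single reference face for the positions where they equal the flux at the origin. Expanding $c_{12}^{(g_{\mathbf m})}=c_{12}^{\,\prod_i a_i^{m_i}}$ through the relations $R_a$ recovers the congruence $\prod_i a_i^{m_i}\equiv 1\pmod m$ that defines $T$, closing the loop with \cref{prop: CHG abelian}.

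The only real work is bookkeeping: verifying that the conjugation action genuinely descends to $\mathbb{Z}^n$ and commutes past the cocycle $h$, and arguing that ``common to all irreps'' means the intersection of the sectorwise period lattices, which by joint faithfulness of $\{\sigma_\xi\}$ equals the $C$-valued one. Once these are in place, the statement is an immediate corollary of \cref{prop: CHG abelian}.
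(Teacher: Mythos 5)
Your proof is correct and takes essentially the same approach as the paper: reduce (via joint faithfulness of the full set of $C$-irreps) to the $C$-valued pattern, observe that equality of the flux at $\mathbf m$ with the one at the origin means $c_{12}^{(g_{\mathbf m})}=c_{12}$, hence $g_{\mathbf m}\in\mathcal{C}_G(C)\cap T^{ab}_n=T$ since $c_{12}$ generates $C$, and conversely that elements of $T$ shift the whole pattern trivially. Your homomorphism $\rho:\mathbb{Z}^n\to\mathrm{Aut}(C)$ obtained by cancelling the cocycle with the abelianity of $C$ is just a repackaging of the paper's conjugation identity $c_{12}^{(g\,t\,\bar g)}=c_{12}^{(c_{\bar g\,\bar t}\,t)}=c_{12}^{(t)}=c_{12}$.
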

			\begin{proof}
				As shown in \cref{prop: CHG abelian}, $\mathcal{C}_G(C) = C T$, with $T \subseteq T_n^{ab}$.
				Consider the presentation of $G$ in its canonical form \cref{Hin the theo_mod}. For any dimension $n$, $c_{12}$ is the only generator of $C$ in $G$, when $C$ is cyclic.
				
				Suppose the flux at position $g_\mathbf{r} \in T^{ab}_n$ equals the one at the origin $e$ in any of the Hamiltonian sectors associated to the representations of $C$. This means that the shifted loop $c^{(g_\mathbf{r})}_{12}$ is equal the parent one, $c_{12}$. So $g_\mathbf{r}$ commutes with $c_{12}$ and, as a consequence, with all elements of $C$, whence $g_\mathbf{r} \in T$. This commutation with all elements implies, in particular, that $g_\mathbf{r}$ determines the periodicity of the full pattern, not just of the loop at the origin.

				At the same time any element of $T$ implies periodicity: if $t \in T$ and $g \in T^{ab}_n$ then $c^{(g\,t)}_{12} = c_{12}^{(g)}$, as can be seen using the identity $c^{(g\,t \,\bar g)}_{12} = c^{(c_{\bar g\,\bar t} \, t)}_{12} = c^{(t)}_{12} = c_{12}$.
			\end{proof}

			\subsection{Maximal integrals}
			
			\begin{proposition} \label{prop: G_existence} (Existence of a maximal integral, simplified version from \cite{Note1})
				
				Let $C$ be a finite integrable group, $F_n$ a free group (for some $n$) and $F_n'$ its commutator group. For every integral $I\cong F_n/N$, with $N \lhd F_n$, it exists a group $G = F_n/N_P$, with $ N_P = N \cap F_n'$, such that its commutator subgroup is isomorphic to $C$ and it exists $N_S = N/N_P  \lhd G$ such that $I = G/N_S$.
			\end{proposition}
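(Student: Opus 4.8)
The plan is to obtain the statement entirely from the three isomorphism theorems, the only input beyond pure group theory being the hypothesis that $I$ is an integral of $C$, that is, $I'\cong C$.

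First I would check that the objects are well-defined. Since $N\lhd F_n$ and $F_n'\lhd F_n$, their intersection $N_P=N\cap F_n'$ is normal in $F_n$, so $G=F_n/N_P$ is a genuine group and $N/N_P$ makes sense as a subset of $G$. Because $N_P\subseteq F_n'$ we have $F_n'N_P=F_n'$, and therefore the derived subgroup of $G$ is $G'=(F_n/N_P)'=F_n'N_P/N_P=F_n'/N_P$; this identification of $G'$ with the honest quotient $F_n'/N_P$ (rather than with some further quotient of it) is the one small point worth stating explicitly.

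The key step is to recognize $F_n'/N_P$ as $C$. Applying the second isomorphism theorem inside $F_n$ gives $F_n'/(F_n'\cap N)\cong F_n'N/N$, and the right-hand side is precisely the commutator subgroup of $I=F_n/N$, since $(F_n/N)'=F_n'N/N$. By hypothesis $I'\cong C$, hence $F_n'/N_P\cong C$ and so $G'\cong C$, which proves the first assertion.

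For the second assertion I would put $N_S=N/N_P$. As $N_P\leq N$ and $N_P\lhd F_n$, the set $N_S$ is a subgroup of $G$, and it is normal in $G$ because $N\lhd F_n$. The third isomorphism theorem then yields $G/N_S=(F_n/N_P)/(N/N_P)\cong F_n/N=I$, which is the claimed identity $I=G/N_S$. I do not anticipate any genuine obstacle here: the proposition is a bookkeeping exercise with the isomorphism theorems, and the only place where something more than formal manipulation enters is the identification $I'\cong C$, which is simply the definition of an integral of $C$.
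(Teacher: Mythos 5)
Your argument is correct and complete for the statement as written. Note that the paper itself does not prove this proposition: it is imported as a ``simplified version'' from Ref.~\cite{Note1}, with the body of the paper immediately moving on, so there is no internal proof to compare against. Your route --- $N_P=N\cap F_n'$ is normal as an intersection of normal subgroups; $G'=(F_n/N_P)'=F_n'N_P/N_P=F_n'/N_P$ because $N_P\subseteq F_n'$; the second isomorphism theorem gives $F_n'/(F_n'\cap N)\cong F_n'N/N=(F_n/N)'=I'\cong C$; and the correspondence plus third isomorphism theorems give $N/N_P\lhd G$ with $G/(N/N_P)\cong F_n/N\cong I$ --- is exactly the bookkeeping the statement needs, and the only non-formal input is indeed the definition $I'\cong C$ of an integral. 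One small addition worth making, since the proposition is titled ``existence of a \emph{maximal} integral'': the same computation yields $G^{ab}=(F_n/N_P)/(F_n'/N_P)\cong F_n/F_n'\cong\mathbb{Z}^n$, which is precisely the property (abelianization free of rank $n$, i.e.\ abelian support $T^{ab}_n$) that the paper uses to call $G$ maximal; the statement body does not claim it, but recording it makes the title accurate and costs one line.
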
		
			
			\begin{proposition} \label{lemma: validPBC2} 
				
				Given an integrable group $C$ and a maximal integral $G$, if $N \lhd G$ and $N \cap C$ is trivial then $N \leq Z$.
			\end{proposition}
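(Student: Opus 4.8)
The plan is to reduce the statement to the elementary fact that a normal subgroup meeting the derived subgroup trivially is central, and to observe that no part of the hypothesis beyond ``$C$ is the commutator subgroup of $G$'' is actually needed (integrability of $C$ and maximality of $G$ enter only through \cref{prop: G_existence}, which guarantees that such a $G$ exists in the first place). Concretely, I would establish the chain of inclusions
\[
[N,G] \;\leq\; N \cap C \;=\; \{e\}.
\]

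First I would verify the left-hand inclusion in two pieces. The inclusion $[N,G]\leq N$ is the standard absorption property of normal subgroups: for $n\in N$ and $g\in G$, the commutator $\bar n\,\bar g\, n\, g = \bar n\,(\bar g\, n\, g)$ lies in $N$ because $\bar g\, n\, g\in N$ by normality and $N$ is closed under products. The inclusion $[N,G]\leq C$ is immediate from the definition of $C$ given in \cref{sec: intro CS}: since $N\subseteq G$, every generator $\bar n\,\bar g\, n\, g$ of $[N,G]$ is one of the commutators $c_{n g}$ and hence belongs to $C=\langle c_{g_1g_2}\rangle$. Intersecting the two conclusions and invoking the hypothesis $N\cap C=\{e\}$ forces $[N,G]=\{e\}$, i.e.\ every element of $N$ commutes with every element of $G$, which is exactly the statement $N\leq Z$.

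There is essentially no obstacle here; the entire content is the two routine containments above, and the argument is purely group-theoretic. The only point worth emphasising is the payoff rather than the difficulty: combined with \cref{prop: Zrank} this shows that every valid PBC sits inside the center $Z={\widetilde Z}Z_{CG}$, hence (since a PBC group also meets $C$ trivially) inside some complement $\widetilde{Z}_i$, which is precisely the input needed in \cref{sec: valid and stand PBC} to conclude that all lattices carrying valid PBCs are tessellated by the primitive CS-supercells of the $\widetilde{Z}_i$.
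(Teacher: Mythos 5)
Your proof is correct and follows essentially the same route as the paper's: in both cases one observes that for $n\in N$, $g\in G$ the commutator $\bar n\,\bar g\, n\, g$ lies in $N$ by normality and in $C$ by definition, hence is trivial, forcing $N$ to be central. The only difference is cosmetic (you phrase it as $[N,G]\leq N\cap C=\{e\}$ rather than element-by-element).
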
 
			
			\begin{proof}
				Let $ g\in G$ and $n \in N$. Normality implies $\bar g n g = n^{(g)} \in N$. Hence, we have $N\ni \bar n n^{(g)} = \allowbreak c_{ng} \in C$. But $N \cap C$ is trivial so we conclude $n^{(g)} = n$ for all $g\in G$.  
			\end{proof}
			
			\begin{proposition} (Subreps of $\mathcal{T}^L$ and primitive supercells reps $\mathcal{T}^L_{G/Y}$) \label{prop: reps_equiv}
				
				The representation $\widecheck{\mathcal{T}}_\eta$ of $G/Y$ associated to $\widetilde{\mathcal{T}}_\eta$, defined in \cref{T_tilde}, is equivalent to the left regular representation $\mathcal{T}_{G/Y}$.
			\end{proposition}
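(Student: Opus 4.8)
The plan is to produce an explicit unitary intertwiner between $\mathcal{T}_{G/Y}$ and $\widecheck{\mathcal{T}}_{\boldsymbol\eta}$, working in a fixed transversal of $Y$ in $G$. The key preliminary observation is that, under the standing hypothesis that $Y\cap C$ is trivial, the canonical projection $\phi$ of \cref{sandwitch} restricts to an embedding of $Y$ into $G^{ab}\cong\mathbb{Z}^n$; consequently $\chi^G_{\boldsymbol\eta,g}=e^{-i\boldsymbol\eta\cdot\mathbf{r}_g}$ is a genuine one-dimensional character of $G$ — being the pullback along $\phi$ of a character of $\mathbb{Z}^n$, it satisfies $\chi^G_{\boldsymbol\eta,g_1g_2}=\chi^G_{\boldsymbol\eta,g_1}\chi^G_{\boldsymbol\eta,g_2}$ and $\overline{\chi^G_{\boldsymbol\eta,g}}\,\chi^G_{\boldsymbol\eta,g}=1$ — and its restriction to $Y$ is exactly $\chi^Y_{\boldsymbol\eta}$.

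I would then write both representations in coordinates. Fix a transversal $\{t\}$ of $Y$ in $G$, so that $g\,t'=t_{gt'}\,y_{gt'}$ with $y_{gt'}\in Y$, as in \cref{app: ind_rep}. On the induced space, with basis $\{\mathbf{e}_t\}$, one has $\mathcal{T}_{\boldsymbol\eta,g}\,\mathbf{e}_{t'}=\chi^Y_{\boldsymbol\eta}(y_{gt'})\,\mathbf{e}_{t_{gt'}}$, hence $\widetilde{\mathcal{T}}_{\boldsymbol\eta,g}\,\mathbf{e}_{t'}=\overline{\chi^G_{\boldsymbol\eta}(g)}\,\chi^Y_{\boldsymbol\eta}(y_{gt'})\,\mathbf{e}_{t_{gt'}}$; on $\mathbb{C}[G/Y]$, with basis $\{|t'Y\rangle\}$, the left regular representation acts by $\mathcal{T}_{G/Y,gY}\,|t'Y\rangle=|t_{gt'}Y\rangle$. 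The claim is that the diagonal unitary $U$ from $\mathbb{C}[G/Y]$ onto the induced space, defined by $U|t'Y\rangle=\overline{\chi^G_{\boldsymbol\eta}(t')}\,\mathbf{e}_{t'}=e^{i\boldsymbol\eta\cdot\mathbf{r}_{t'}}\mathbf{e}_{t'}$, satisfies $U^{-1}\widetilde{\mathcal{T}}_{\boldsymbol\eta,g}\,U=\mathcal{T}_{G/Y,gY}$ for every $g$.

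The verification reduces to checking that a single scalar equals $1$: applying $U^{-1}\widetilde{\mathcal{T}}_{\boldsymbol\eta,g}U$ to $|t'Y\rangle$ produces $\overline{\chi^G_{\boldsymbol\eta}(t')}\;\overline{\chi^G_{\boldsymbol\eta}(g)}\,\chi^Y_{\boldsymbol\eta}(y_{gt'})\;\chi^G_{\boldsymbol\eta}(t_{gt'})\,|t_{gt'}Y\rangle$, and using $\chi^Y_{\boldsymbol\eta}=\chi^G_{\boldsymbol\eta}|_Y$ together with multiplicativity applied to $g\,t'=t_{gt'}\,y_{gt'}$ one gets $\chi^Y_{\boldsymbol\eta}(y_{gt'})=\chi^G_{\boldsymbol\eta}(g)\,\chi^G_{\boldsymbol\eta}(t')\,\overline{\chi^G_{\boldsymbol\eta}(t_{gt'})}$, so the prefactor becomes a product of three factors of the form $\bar z\,z=1$. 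This is exactly the asserted unitary equivalence; it descends to $\widecheck{\mathcal{T}}_{\boldsymbol\eta}$ because $\widetilde{\mathcal{T}}_{\boldsymbol\eta}$ is trivial on $Y$ (for $g\in Y$, centrality gives $t_{gt'}=t'$ and $y_{gt'}=g$, so the prefactor above reduces to $\overline{\chi^G_{\boldsymbol\eta}(g)}\,\chi^G_{\boldsymbol\eta}(g)=1$, recovering the well-definedness stated after \cref{T_tilde}).

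Conceptually the content is just the projection formula $\mathrm{Ind}^G_Y(\chi^Y_{\boldsymbol\eta})\cong\chi^G_{\boldsymbol\eta}\otimes\mathrm{Ind}^G_Y(\mathbf{1}_Y)$ combined with $\mathrm{Ind}^G_Y(\mathbf{1}_Y)=\mathbb{C}[G/Y]\cong\mathcal{T}_{G/Y}$ (valid since $Y\lhd G$): the twist by $\overline{\chi^G_{\boldsymbol\eta}}$ annihilates the one-dimensional character carried by the induced representation, and $U$ merely makes this cancellation explicit. I expect no substantive obstacle; the only point that needs care is keeping track of the cocycle $g\mapsto(t_{gt'},y_{gt'})$ and exploiting that $\chi^G_{\boldsymbol\eta}$, despite $G$ being nonabelian, is an honest homomorphism — which is precisely what lets the scalar telescope.
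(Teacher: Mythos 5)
Your proposal is correct and follows essentially the same route as the paper: both exhibit the intertwiner as multiplication by the genuine character $\chi^G_{\boldsymbol\eta}$ (well defined because it is pulled back from the abelianization) and verify the equivalence with $\mathcal{T}_{G/Y}$ by a direct computation exploiting its multiplicativity. The only difference is cosmetic -- you work in the transversal basis of the induced module with the cocycle $(t_{gt'},y_{gt'})$, while the paper realizes $\widetilde{\mathcal{T}}_{\boldsymbol\eta}$ on the $\boldsymbol\eta$-eigenspace $T_{\boldsymbol\eta}\subset L^2(G)$ and maps it onto $L^2(G/Y)$ by the same character twist.
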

			
			\begin{proof}
				Let us denote by $T_{\eta}\subset  L^{2}(G)$ the eigenspace associated to $\mathcal{T}_\eta$. $T_{\eta}$ contains functions that satisfy $\langle zg| \psi\rangle = \bar \chi_{\eta,g}^G \,\langle g| \psi\rangle $. Consider the map
				\begin{flalign}
					U_{\eta}: \quad &T_{\eta} \rightarrow L^{2}(G/Y) \nonumber\\ 
					&| \psi\rangle \mapsto | \phi\rangle: \phi(gZ) = \chi_{\eta,g}^G\, \psi(g).
				\end{flalign}
				It is not hard to see that this map is unitary. The representation $\mathcal{T}'_{\eta}$ of $G/Y$ defined as $\mathcal{T}'_{\eta} =\allowbreak U_{\eta}\circ \widecheck{\mathcal{T}}_{\eta} \circ U_{\eta}^{\dagger}$, with $\vphantom{a}^\dagger$ denoting the adjoint operation, is the left regular representation of $G/Y$. Indeed (we suppress the $\eta$ label here)
				\begin{flalign}
					\langle gY| \mathcal{T}'_{g'Y} \phi \rangle & =
					\langle gY| U\circ \widecheck{\mathcal{T}}_{g'Y} \circ U^{\dagger}\,\phi \rangle  \nonumber\\& =
					\langle gY| U\circ \widetilde{\mathcal{T}}_{g'} \circ U^{\dagger}\,\phi \rangle  \nonumber\\
					&=  \chi_{g}^G\, \langle g| \widetilde{\mathcal{T}}_{g'} \circ U^{\dagger}\,\phi \rangle \nonumber\\
					&=	 \chi_{g}^G \bar \chi_{g'}^G\, \langle \bar{g'}g|U^{\dagger}\, \phi \rangle  \nonumber\\
					&= \chi_{\bar{g'}g}^G \langle \bar{g'} g |\psi \rangle \nonumber\\
					&= \langle \bar{g'}gY| \phi \rangle.
				\end{flalign}
			\end{proof}

			\subsection{$2$-dimensional CS-crystals with cyclic commutator group} \label{sec: 2integrals}
			
			{\bf Corollary 3 of \cref{theo: lr period}} 
			\label{straight_ucell} (Straight unit CS-cell with two generators) 
			
			\emph{If $n=2$ then $P= P_{\mathrm{straight}}$ and the generator which is a power of $x_1$ must be in the center.}
			
			\begin{proof}
				Suppose $1^{m_1}2^{m_2}$ is in a generating set of $P$. Then $1^m$ must be in the center since $[1^m,2]=e$ and, as a consequence, $2^m \in \mathcal{C}_G(C)$ (we omit the index of the exponents when they are obvious). Moreover, since $1^m$ alone commutes with both $1$ and $2$ it must be in both $Z$ and $P$ (recall that $Z\leq \mathcal{C}_G(C)$). Thus also $2^m \in P$. It follows that it is possible to find generators of $P$ that are powers of generators of $H$.
			\end{proof}

			\begin{remark} \label{rem: P_at_n>2}
				If $n>2$ we might have $P> P_{\mathrm{straight}}$. Suppose $n=3$. There might exist powers $m_1<V_{13}$ such that $c_{1^m,2}=e$ but $c_{1^m,3}\neq e$ such that $1^m\notin P_\mathrm{straight}$. Still, nothing forbids that it might exist a power $m_2$ such that $1^m2^m \in P$, hence this element would lie outside $P_{\mathrm{straight}}$.
			\end{remark}

			\begin{proposition} \label{prop: Pcell_size_n2} (Presentation of Q) 
				
				The group $Q$ of a $2$-dimensional CS-crystal $G$ with cyclic commutator group $\mathbb{Z}_m$ is generated by the set $\{1^{n_1},2^{n_2}\}$ where the powers satisfy
				\begin{flalign} \label{an eq}
					S_{n_i}(a_i) =& \,0  \;(\mathrm{mod} \;m) \nonumber \\
					2\leq n_i& \leq m
				\end{flalign}
				for $i=1,2$, here $S_{n}(a) = \sum_{s=0}^{n-1} a^s$	and the $a_i$'s are the integers determining $G$ appearing in \cref{G_word_constr_mod}. 
			\end{proposition}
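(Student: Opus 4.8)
The plan is to compute the set $Z\cap T^{ab}_2$ directly from the canonical presentation \cref{Hin the theo_mod} and then recognise it as $Q$. For $n=2$ the set $R_p$ is empty, so, writing $c\equiv c_{12}$, the only relations besides $c^m=e$ are $c^{(1)}=c^{a_1}$ and $c^{(2)}=c^{a_2}$: conjugation by the generators $1$ and $2$ raises the cyclic generator $c$ of $C$ to the power of the units $a_1,a_2\in\mathbb{Z}_m^{\times}$ (and correspondingly $c^{(\bar 1)}=c^{a_1^{-1}}$, etc.). I would begin from the elementary ``straightening'' identities $\bar 1\,2\,1 = 2\,\bar c$ (using $c_{21}=\overline{c_{12}}$, which holds because $C$ is abelian) and $\bar 2\,1\,2 = 1\,c$, and iterate them: combined with $c^{(1)}=c^{a_1}$ and the recursion $S_{r+1}(a)=1+aS_r(a)$, a short induction gives
\[
1^{-r}\,2\,1^{r}=2\,c^{-S_r(a_1)},\qquad \bar 1\,2^{r}\,1 = 2^{r}\,c^{-S_r(a_2)},
\]
together with the analogues obtained by swapping $1\leftrightarrow 2$; negative powers $r$ are handled uniformly using $c^{(\bar 1)}=c^{a_1^{-1}}$.

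Substituting these, I would evaluate the commutators of a general transversal element $g=1^{r_1}2^{r_2}$ with the two generators; all stray commutators live in the abelian group $C$ and recombine into geometric partial sums, so the computation collapses to
\[
[\,g,1\,] = c^{-S_{r_2}(a_2)},\qquad [\,g,2\,] = c^{\,S_{r_1}(a_1)\,a_2^{\,r_2}} .
\]
Since $c$ has order $m$ and $a_2$ is a unit modulo $m$, the element $g$ commutes with both generators — hence with all of $G$ — iff $S_{r_1}(a_1)\equiv 0$ and $S_{r_2}(a_2)\equiv 0\pmod m$. This is exactly the description of $Z\cap T^{ab}_2$ I want.

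Next comes the number-theoretic step. For a fixed unit $a$, the set $\{r\in\mathbb{Z}:S_r(a)\equiv 0\pmod m\}$ is a subgroup of $\mathbb{Z}$, by the cocycle identity $S_{r+r'}(a)=S_r(a)+a^rS_{r'}(a)$ together with invertibility of $a$; hence it equals $n_i\mathbb{Z}$ for its least positive element $n_i$. To pin down $n_i$, note that $f\colon x\mapsto ax+1$ is an affine \emph{bijection} of $\mathbb{Z}_m$ and $S_{r+1}(a)=f(S_r(a))$ with $S_0(a)=0$; thus the $f$-orbit of $0$ is a pure cycle of some length $\ell$ with $S_r(a)\equiv 0\iff\ell\mid r$, so $n_i=\ell$. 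This forces $n_i\le m$ (the cycle lies in $\mathbb{Z}_m$) and $n_i\ge 2$ (since $S_1(a)=1\not\equiv 0$, as $m\ge 2$), i.e.\ $2\le n_i\le m$. Therefore $Z\cap T^{ab}_2=\{1^{r_1}2^{r_2}:n_1\mid r_1,\ n_2\mid r_2\}=\langle 1^{n_1},2^{n_2}\rangle$, which is a genuine abelian subgroup of $G$ because the central element $1^{n_1}$ commutes with $2^{n_2}$. Finally I would identify this with $Q$: on one hand $Q=P\cap Z\subseteq T^{ab}_2\cap Z$ since $P\subseteq T^{ab}_n$ by \cref{Pl_def}; on the other hand any $g\in Z\cap T^{ab}_2$ lies in $\mathcal{C}_G(C)$ (the centre is inside the centraliser) and — because for $n=2$ the sole nontrivial defining condition of the set $X$ in \cref{Pl_def} is $[1^{r_1},2]=e$, which we have — also lies in $X$, so $g\in X\cap\mathcal{C}_G(C)=P$ and $g\in Q$. (Equivalently, having shown $Z\cap T^{ab}_2$ is itself a group, one may simply invoke Corollary~\ref{cor: rel P Z}.) This yields $Q=\langle 1^{n_1},2^{n_2}\rangle$ with $2\le n_i\le m$ satisfying \cref{an eq}.

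I expect the commutator computation to be the main obstacle: one must keep careful track of the conjugations and signs so that the partial sum $S_r(a)$ — and, in $[\,g,2\,]$, the extra unit factor $a_2^{\,r_2}$ — appear with precisely the right exponents, and must handle negative powers $r_i$ uniformly (where inverse units $a_i^{-1}\bmod m$ enter). The remaining steps are routine, apart from the small but essential observation that $x\mapsto ax+1$ is a bijection of $\mathbb{Z}_m$, which is what delivers the bound $n_i\le m$.
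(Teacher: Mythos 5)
Your proposal is correct, and its computational core coincides with the paper's: both reduce everything to the identity $c_{i^{n}j}=c_{ij}^{S_{n}(a_i)}$ obtained from the relators $R_a$ (your straightening identities $1^{-r}2\,1^{r}=2\,c^{-S_r(a_1)}$, $\bar 1\,2^{r}1=2^{r}c^{-S_r(a_2)}$ and the commutators $[g,1]=c^{-S_{r_2}(a_2)}$, $[g,2]=c^{S_{r_1}(a_1)a_2^{r_2}}$ all check out), and then to the congruence $S_{n_i}(a_i)\equiv 0 \pmod m$. You deviate from the paper in two places, both to your advantage in self-containedness. First, the paper pins down the shape of $Q$ by citing Corollary~\ref{straight_ucell} ($Q\leq P=P_{\mathrm{straight}}$, so only pure powers $i^{n_i}$ need be tested), whereas you compute $Z\cap T^{ab}_2$ directly for a general transversal element $1^{r_1}2^{r_2}$, show it is $\langle 1^{n_1},2^{n_2}\rangle$, and then identify it with $Q$ through \cref{Pl_def} (or \cref{cor: rel P Z}); this re-proves a fragment of Corollary~\ref{straight_ucell} but makes the identification of $Q$ more explicit than the paper's rather terse "since $Q<P_{\mathrm{straight}}$". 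Second, for the bound $2\leq n_i\leq m$ the paper runs a pigeonhole argument on the sequence $\{S_r(a)\}_r$ and extracts $a^{C}S_N(a)\equiv 0$ before concluding periodicity, while you observe that $S_{r+1}=f(S_r)$ with $f(x)=ax+1$ an affine bijection of $\mathbb{Z}_m$, so the orbit of $0$ is a pure cycle of length $\ell\leq m$ with $\ell\geq 2$ because $S_1=1\neq 0$; this is a cleaner equivalent, and together with the subgroup property of $\{r: S_r(a)\equiv 0\}$ it also handles negative exponents uniformly, which the paper does not need to address. Two cosmetic remarks: $c_{21}=\overline{c_{12}}$ holds identically in any group ($c_{21}=\bar 2\bar 1 2 1=\overline{\bar 1\bar 2 1 2}$), not "because $C$ is abelian"; and your argument tacitly uses $m\geq 2$ (nontrivial $C$), exactly as the paper's statement does.
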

			\begin{proof}
				By Corollary~\ref{straight_ucell}, since $Q<P_{\mathrm{straight}}$ we must determine the powers $n_i$ such that $c_{i^{n_i}j}=e$, where $(i,j) = (1,2),(2,1)$. It is not difficult to see that 
				\begin{equation} \label{Sna}
					c_{i^{n}j} = \prod_{s=0}^{n-1}\!\vphantom{a}'\, c_{ij}^{(i^s)} = \prod_{s=0}^{'n-1} c_{ij}^{a_i^s} = c_{ij}^{\sum_{s=0}^{n_i-1} a_i^s} = c_{ij}^{S_{n}(a)} \quad(n\geq 1)
				\end{equation} where we used the set of relators $R_a$, defined in \cref{Hin the theo_mod}. The exponent of $c_{ij}$ must be equal to a multiple of $m$, due to the first relator in \cref{Hin the theo_mod}. Thus, the first condition in \cref{an eq} is obtained.
				
				To prove $2\leq n_i\leq m$ we observe that, by pigeonhole principle, for any value $a$ there exist $C\geq0$ and $N\geq1$ such that $C + N\leq m$ and $S_{C+N}(a) = S_{C}(a) \;(\mathrm{mod} \;m)$. The property above implies $a^C S_{N}(a) = 0 \;(\text{mod }m)$ thus it exists a positive integer $k$ such that ($S_{N}$ is a the geometric series)  \[a^C(a^N-1)/(a-1) = k m.\] Notice that $N\neq 1$ otherwise the latter equation would imply that $a$ is not coprime with $m$. The same properties imply ($C'>0$)
				\begin{flalign}
					S_{C+C'+N}(a) &= S_{C+N}(a) + a^N a^C S_{C'}(a) \nonumber\\
					&= S_{C+C'}(a) + (a^N-1) a^C S_{C'}(a)\;(\mathrm{mod} \;m).
				\end{flalign}
				The last term in the r.h.s is proportional to $m$, by the above equation, whence we get $S_{C+C'+N}(a) \allowbreak = S_{C+C'}(a)\;(\text{mod }m)$ for all positive $C'$.  In other words, the sequence $\{S_{r}(a)\}_r$ is periodic with period $N\leq m$. In particular, $S_N(a)=0 \;(\text{mod }m)$ with $2\leq N\leq m$.
				
			\end{proof}
			
			\begin{remark}
				Equation \cref{an eq} can be equivalently written as
				\begin{equation}
					\begin{cases}	n_i = m & \mathrm{for}\; a_i = 1 \\	a^{n_i} = 1 \; & \mathrm{for}\; a_i>1 
					\end{cases}
				\end{equation}
				
			\end{remark}	

			\begin{proposition} \label{2dimZtheo} (Presentation of $Z$) 
				The center $Z$ of a $2$-dimensional CS-crystal $G$ with cyclic commutator group $\mathbb{Z}_m$ is made of elements $z=1^{n_1}2^{n_2} c_{12}^{n_c}$ whose powers are in $\mathbb{Z}$ and satisfy
				
				\begin{equation} \label{findingZeq}
					\begin{cases}
						n_c (a_1-1) &= S_{n_2}(a_2) \\
						n_c (a_2-1) &= S_{-n_1}(a_1)
					\end{cases} \;(\mathrm{mod} \;m).
				\end{equation}
				here the $a_i$'s are the integers determining $G$ appearing in \cref{G_word_constr_mod}, $S_{n}(a) = 
				(a^{n}-1)/(a-1)$ ($n \in \mathbb{Z}$) and the inverse integers $a^{-1}$ denotes the minimal integers $p\in \mathbb{N}$ such that $a p = 1 \;(\mathrm{mod} \;m)$.
			\end{proposition}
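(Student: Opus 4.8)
The plan is to reduce centrality to commutation with the two generators and then normal-order everything. Write $c:=c_{12}$. By the unique coset decomposition \cref{f_decomp} (with $n=2$, Schreier transversal $T^{ab}_2=\{1^{n_1}2^{n_2}\}$, and $C=\langle c\rangle\cong\mathbb{Z}_m$ since the cyclic $C$ is generated by the embedded element $c_{12}$), every element of $G$ is uniquely $z=1^{n_1}2^{n_2}c^{n_c}$ with $n_1,n_2\in\mathbb{Z}$, $n_c\in\mathbb{Z}_m$. Such $z$ lies in $Z$ iff $z^{(1)}=z$ and $z^{(2)}=z$ (we write $x^{(y)}=\bar y x y$), because $G$ is generated by $1,2$. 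From the canonical presentation \cref{Hin the theo_mod} at $n=2$ one reads off $c^{(1)}=c^{a_1}$ and $c^{(2)}=c^{a_2}$, while from $c=\bar 1\bar 2 1 2$ one gets immediately $2^{(1)}=\bar 1 2 1=2\bar c$ and $1^{(2)}=\bar 2 1 2=1c$ (and $1^{(1)}=1$, $2^{(2)}=2$).

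Next I would establish the normal-ordering identities $(1c^t)^k=1^kc^{t\,S_k(a_1)}$, $(2c^t)^k=2^kc^{t\,S_k(a_2)}$ and $c^t 2^k=2^kc^{t\,a_2^{k}}$ for all $t\in\mathbb{Z}_m$ and $k\in\mathbb{Z}$, by induction on $k\ge 0$ using $c^{(1)}=c^{a_1}$, $c^{(2)}=c^{a_2}$ and the elementary recursion $S_k(a)=1+a\,S_{k-1}(a)$, then extending to $k<0$; this is where one checks that $S_k(a)=(a^k-1)/(a-1)$ makes sense in $\mathbb{Z}_m$ for every $k\in\mathbb{Z}$ (the numerator $a^k-1$ always lies in the ideal $(a-1)$, even when $a-1$ is a zero divisor), in particular $S_{-k}(a)=-a^{-k}S_k(a)$, and these manipulations are essentially those already used in \cref{prop: Pcell_size_n2}. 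Applying them, $z^{(1)}=1^{n_1}(2\bar c)^{n_2}c^{a_1 n_c}=1^{n_1}2^{n_2}c^{\,a_1 n_c-S_{n_2}(a_2)}$, so comparing $c$-exponents mod $m$ with $z$ gives precisely the first equation of \cref{findingZeq}, $n_c(a_1-1)=S_{n_2}(a_2)$. Likewise $z^{(2)}=(1c)^{n_1}2^{n_2}c^{a_2 n_c}=1^{n_1}c^{S_{n_1}(a_1)}2^{n_2}c^{a_2 n_c}=1^{n_1}2^{n_2}c^{\,a_2^{n_2}S_{n_1}(a_1)+a_2 n_c}$, so $z^{(2)}=z$ is equivalent to $a_2^{n_2}S_{n_1}(a_1)\equiv -n_c(a_2-1)\pmod m$ --- call this $(2')$.

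The remaining step is to reconcile $(2')$ with the stated second equation $n_c(a_2-1)=S_{-n_1}(a_1)$. Multiplying the first equation by $(a_2-1)$ gives $n_c(a_1-1)(a_2-1)=a_2^{n_2}-1$, and multiplying $(2')$ by $(a_1-1)$ and using $S_{n_1}(a_1)(a_1-1)=a_1^{n_1}-1$ gives $a_2^{n_2}(a_1^{n_1}-1)=-n_c(a_1-1)(a_2-1)$; combining the two displays forces $a_1^{n_1}a_2^{n_2}\equiv 1\pmod m$ (equivalently $1^{n_1}2^{n_2}\in\mathcal C_G(C)$, cf.\ \cref{prop: CHG abelian}). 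Substituting $a_1^{-n_1}=a_2^{n_2}$ back into $(2')$ and using $S_{-n_1}(a_1)=-a_1^{-n_1}S_{n_1}(a_1)$ turns $(2')$ into the claimed equation. For the converse one runs the same elimination: the stated system itself forces $a_1^{n_1}a_2^{n_2}\equiv 1$, hence its second equation is equivalent to $(2')$, so $z^{(1)}=z^{(2)}=z$ and $z\in Z$.

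The main obstacle is this last reconciliation: what falls out of the direct computation is $(2')$, not the congruence in the statement, and the point (easy to overlook) is that the side condition $a_1^{n_1}a_2^{n_2}\equiv 1$ --- which one might expect to impose separately so that $z$ commutes with $c$ --- is actually a \emph{consequence} of the two displayed congruences rather than an independent hypothesis. The only other delicate point is the bookkeeping of the $a_i$-powers accumulated when pushing $c^{\bullet}$ past the generators, done correctly for negative $n_1,n_2$ so that $S_n$ and the inverses $a_i^{-1}$ are read in $\mathbb{Z}_m$.
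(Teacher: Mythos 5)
Your proof is correct and follows essentially the same route as the paper: reduce centrality to commutation with the generators $1,2$, normal-order using $c^{(i)}=c^{a_i}$ and the $S_n(a)$ identities, and then use $a_1^{n_1}a_2^{n_2}\equiv 1\;(\mathrm{mod}\;m)$ together with $S_{-n_1}(a_1)=-a_1^{-n_1}S_{n_1}(a_1)$ to reach \cref{findingZeq}. The only (minor) difference is that you obtain the congruence $a_1^{n_1}a_2^{n_2}\equiv 1$ by eliminating between the two commutation conditions, whereas the paper imports it from $Z\leq \mathcal{C}_G(C)$ and \cref{prop: CHG abelian}; your variant is slightly more self-contained and makes the converse implication (that the stated system really characterizes $Z$) explicit.
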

			\begin{proof}
				It is enough to impose 
				\begin{equation} \label{czie}
					c_{zi}=e \quad (i=1,2).
				\end{equation}
				The identities $c_{1^{n_1}\,2}= c_{12}^{S_{n_1}(a_1)}$ and $c_{1\,2^{n_2}}= c_{12}^{S_{n_2}(a_2)}$, cf.  \cref{Sna}, may come at hand in the computation. Here $S_{r}(a)$ can be extended to $r < 0$ using the identity $c_{\,\bar ij}= \bar c_{ij}^{\,(\bar i)}$ to obtain $S_{r<0}(a) = -\sum_{s=1}^{|r|} a^{-s}$. Thus a definition that holds for any $r$ is
				\begin{equation}
					S_{r}(a)= (a^{r}-1)/(a-1)
				\end{equation}
				where the case $a=1$ is obtained from the limit.
				
				Thus, \cref{czie} can be written as $c_{12}^{p_i} = e$ where $p_i$ is some power such that $p_i = 0  \;(\mathrm{mod} \;m)$. Using modular arithmetic to simplify the expressions, we get 
				\begin{equation} \label{Zpres}
					\begin{cases}
						n_c (1-a_1) a_1^{n_1}a_2^{n_2} &= -S_{n_2}(a_2) \\
						n_c (1-a_2) &= S_{n_1}(a_1)a_2^{n_2} 
					\end{cases} \;(\mathrm{mod} \;m),
				\end{equation} 	
				Considering that $a_1^{n_1}a_2^{n_2}$ is congruent to $1$, by $Z\leq \mathcal{C}_G(C)$ and \cref{prop: CHG abelian}, the above set of equations is equivalent to 	\begin{equation}
					\begin{cases}
						n_c (a_1-1) &= S_{n_2}(a_2) \\
						n_c (a_2-1) &= -S_{n_1}(a_1)a_1^{-n_1}
					\end{cases} \;(\mathrm{mod} \;m).
				\end{equation}
				Using $-S_{n_1}(a_1)a_1^{-n_1} = S_{-n_1}(a_1)$ one gets \cref{findingZeq}.
			\end{proof}
			\begin{remark}
				Notice that \cref{an eq}, specifying the elements of $Q$, is obtained from \cref{findingZeq} with $n_c=0$.
			\end{remark}
			
			\begin{corollary} \label{coro: PgeqZ}
				Given a $2$-dimensional CS-crystal $G$ with cyclic commutator group $\mathbb{Z}_m$ if \newline $\{1,2\}\subset \mathcal{C}_G(C)$ (or equivalently $a_1=a_2 = 1$) then $P\geq {\widetilde Z} = \{1^m,2^m\}$.
			\end{corollary}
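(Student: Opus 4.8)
The plan is to reduce the inclusion $P\geq\widetilde Z$ to checking that the two generators $1^m$ and $2^m$ of $\widetilde Z$ lie in $P$, and then to invoke the fact — established in \cref{theo: lr period} — that $P$ is a subgroup of $G$.

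First I would record the equivalence $a_1=a_2=1\Leftrightarrow\{1,2\}\subset\mathcal{C}_G(C)$: the relator set $R_a$ of \cref{Hin the theo_mod} imposes $c_{12}^{(k)}=c_{12}^{a_k}$, so $a_k=1$ is precisely the statement that the generator $k$ commutes with $c_{12}$, hence with all of $C=\langle c_{12}\rangle$. Since $G=\langle 1,2\rangle$, in this situation $c_{12}$ is central in $G$, so $\mathcal{C}_G(C)=G$, and therefore, by \cref{theo: lr period}, $P=X$, where for $n=2$ the set $X$ is simply $\{\,1^{m_1}2^{m_2}:[1^{m_1},2]=e\,\}$ (the $k=1$ condition in the definition of $X$ being vacuous).

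Next I would check membership of the two generators. For $2^m$ one takes $(m_1,m_2)=(0,m)$; the defining condition $[1^0,2]=e$ is trivially satisfied, so $2^m\in X=P$. For $1^m$ one takes $(m_1,m_2)=(m,0)$ and must verify $[1^m,2]=e$. Here I would use the commutator identity $c_{1^m2}=c_{12}^{S_m(a_1)}$ obtained in the proof of \cref{prop: Pcell_size_n2} (see \cref{Sna}); with $a_1=1$ one has $S_m(1)=m$, so $c_{1^m2}=c_{12}^{m}=e$ by the first relator in \cref{Hin the theo_mod}. Thus $1^m$ commutes with $2$, i.e.\ $1^m\in X=P$.

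Finally, since $P$ is a group it contains $\langle 1^m,2^m\rangle$, which is exactly the complement $\widetilde Z$ of $Z_{CG}=C$ inside $Z$ singled out by \cref{2dimZtheo} specialized to $a_1=a_2=1$ (that proposition forces $Z=\langle 1^m,2^m,c_{12}\rangle$ with $c_{12}$ central). This yields $P\geq\widetilde Z$ as claimed. No step is a genuine obstacle; the only care needed is the bookkeeping of the commutator convention and confirming that $\langle 1^m,2^m\rangle$ is the $\widetilde Z$ intended in the statement, both of which are immediate from the cited results.
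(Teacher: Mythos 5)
Your argument is correct, and it fills in a proof the paper leaves implicit (the corollary is stated without its own proof, as an immediate consequence of the surrounding results). The shortest route the paper has in mind is slightly more indirect than yours: with $a_1=a_2=1$, \cref{prop: Pcell_size_n2} gives $Q=\langle 1^m,2^m\rangle$ at once, \cref{2dimZtheo} gives $Z=\langle 1^m,2^m,c_{12}\rangle$ so that $Q$ is a legitimate choice of $\widetilde Z$, and then $P\geq\widetilde Z$ is trivial because $Q=P\cap Z\leq P$ by definition. You instead verify membership of $1^m$ and $2^m$ directly in the set $X$ of \cref{theo: lr period} (using $\mathcal{C}_G(C)=G$ and $c_{1^m2}=c_{12}^{S_m(1)}=c_{12}^m=e$), which is equally valid and rests on the same ingredients; the only caveat, which you already flag, is that $\widetilde Z$ is not unique here (since $Z_{CG}=C$ is nontrivial), so the statement should be read as asserting the inclusion for the particular complement $\langle 1^m,2^m\rangle=Q$, exactly as you treat it.
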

			
			\begin{corollary} \label{coro: ZCG}
				Given a $2$-dimensional CS-crystal $G$ with cyclic commutator group $\mathbb{Z}_m$ if \newline $\{1,2\}\subset \mathcal{C}_G(C)$ (or equivalently $a_1=a_2 = 1$)  then $Z_{CG}=G$. Otherwise, $Z_{CG}$ is nontrivial if only if $\mathrm{l.c.d.}\{a_{1}-1,a_{2}-1,m\} \neq 1$. 
			\end{corollary}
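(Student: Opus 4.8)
The plan is to obtain an explicit description of $Z_{CG}=Z\cap C$ as a subgroup of $C\cong\mathbb{Z}_m$ and then reduce the two assertions to a short $\gcd$ computation. Since $G=\langle 1,2\rangle$, an element is central iff it commutes with both generators; and since $C$ is cyclic, any $c^{n_c}\in C$ (writing $c=c_{12}$ and $n_c\in\mathbb{Z}_m$) automatically commutes with all of $C$, so $c^{n_c}\in Z_{CG}$ iff $c^{n_c}$ commutes with $1$ and with $2$. The relators $R_a$ of \cref{Hin the theo_mod} give $c^{(i)}=c^{a_i}$ for $i=1,2$, so $\bar 1\,c^{n_c}\,1=(c^{(1)})^{n_c}=c^{a_1 n_c}$ and likewise for the generator $2$; modulo $c^m=e$ the centrality condition therefore becomes
\begin{equation*}
n_c(a_1-1)\equiv 0\pmod{m}\quad\text{and}\quad n_c(a_2-1)\equiv 0\pmod{m}.
\end{equation*}
(This is exactly the system of \cref{2dimZtheo} specialized to $n_1=n_2=0$, since then $S_{n_2}(a_2)=S_{-n_1}(a_1)=0$.) Thus $Z_{CG}$ is the set of powers $c^{n_c}$ solving this pair of congruences.

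Next I would identify that solution set as an explicit subgroup of the cyclic group $C\cong\mathbb{Z}_m$. Writing $d_i=\gcd(a_i-1,m)$ (with $d_i=m$ when $a_i\equiv 1\pmod{m}$), the congruence $n_c(a_i-1)\equiv 0\pmod{m}$ is equivalent to $(m/d_i)\mid n_c$, so its solutions form the unique subgroup of order $d_i$ inside $\mathbb{Z}_m$. In a cyclic group the intersection of the subgroups of orders $d_1$ and $d_2$ is the subgroup of order $\gcd(d_1,d_2)$, and $\gcd(d_1,d_2)=\gcd(a_1-1,a_2-1,m)$; hence $Z_{CG}$ is the cyclic subgroup of $C$ of order $d:=\gcd(a_1-1,a_2-1,m)$.

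Both claims now follow. First, the generator $i$ lies in $\mathcal{C}_G(C)$ precisely when it commutes with the generator $c$ of $C$, that is, when $c^{(i)}=c$, that is, when $a_i\equiv 1\pmod{m}$; hence $\{1,2\}\subset\mathcal{C}_G(C)$ is equivalent to $a_1=a_2=1$, in which case $d_1=d_2=m$, so $d=m$ and $Z_{CG}=C$. Second, in the complementary case $Z_{CG}$ is nontrivial exactly when its order $d$ exceeds $1$, that is, when $\gcd(a_1-1,a_2-1,m)\neq 1$. I do not expect a real obstacle: everything is routine modular arithmetic once \cref{2dimZtheo} and \cref{prop: CHG abelian} are in hand, and the only points needing a little care are the edge case $a_i\equiv 1\pmod{m}$ in the definition of $d_i$ and the standard structure of subgroups of cyclic groups invoked in the intersection step.
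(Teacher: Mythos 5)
Your proof is correct and takes essentially the same route as the paper: both reduce to the pair of congruences $n_c(a_1-1)\equiv 0$ and $n_c(a_2-1)\equiv 0 \pmod m$ (i.e.\ \cref{findingZeq} of \cref{2dimZtheo} specialized to $n_1=n_2=0$) and settle them by gcd arithmetic. Your version is in fact slightly sharper, since identifying $Z_{CG}$ as the cyclic subgroup of $C$ of order $\gcd\{a_1-1,a_2-1,m\}$ handles both directions at once and also proves the first claim ($a_1=a_2=1\Rightarrow Z_{CG}=C$, the intended reading of the misprint $Z_{CG}=G$), which the paper's proof leaves implicit.
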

			
			\begin{proof}
				We prove the last statement. If $Z_{CG}$ is nontrivial if and only if we can set $n_1=n_2 = 0$ in \cref{findingZeq} so that $0 \neq n_c \propto m/(a_1-1) \propto m/(a_2-1)$. 
				
				It is clear that if $\mathrm{l.c.d.}\{a_{1}-1,a_{2}-1,m\} =1$ then $n_c = 0 \;(\mathrm{mod} \;m)$ is the only integer solution. Conversely, if $\mathrm{l.c.d.}\{a_{1}-1,a_{2}-1,m\} = s \neq 1$, then we seek integer values of $k_x$ such that $n_c= k_1 l/r_1 = k_2 l/r_2$ where $l=m/s$ and $r_x= (a_{x}-1)/s\;(x=1,2)$ are integers. Clearly $k_x=r_x$ is a valid solution, as it implies $n_c = l <m$. 
			\end{proof}

		\end{appendix}
		
		\bibliography{Literature.bib}

	\end{document}